\definecolor{dullmagenta}{RGB}{102,0,102}
\def\col{dullmagenta} 
\renewcommand\paragraph{\@startsection{paragraph}{4}{\z@}%
                                      {\parskip}
                                      {-1em}%
                                      {\normalfont\normalsize\bfseries}}
\let\footnote=\endnote
\def\clift#1{#1^{\scriptscriptstyle\mathrm{C}}}
\newcommand{\lag}{\mathfrak{g}}
\def\lagk{\lag^k}
\def\lagdk{(\lag^*)^k}
\newcommand{\F}{\pmb{F}}
\newcommand{\R}{\mathbb{R}}
\newcommand{\Ro}{\mathcal{R}}
\newcommand{\Ac}{\mathfrak{A}}
\newcommand{\Bc}{B}
\def\fpd#1#2{{\displaystyle\frac{\partial #1}{\partial #2}}}
\newcommand{\C}{\mathbb{C}}
\newcommand{\abs}[1]{\left\lvert#1\right\rvert}
\newcommand{\st}{\;\ifnum\currentgrouptype=16 \middle\fi|\;}
\def\smallunderbrace#1{\mathop{\vtop{\m@th\ialign{##\crcr
   $\hfil\displaystyle{#1}\hfil$\crcr
   \noalign{\kern3\p@\nointerlineskip}%
   \tiny\upbracefill\crcr\noalign{\kern3\p@}}}}\limits}
\def\@endtheorem{\endtrivlist}
\def\th@plain{%
  \thm@notefont{}
  \itshape 
}
\def\th@definition{%
  \thm@notefont{}
  \normalfont 
}
\theoremstyle{plain}
\newtheorem{theorem}{Theorem}
\newtheorem{lemma}{Lemma}
\newtheorem{proposition}{Proposition}
\theoremstyle{definition}
\newtheorem{remark}{Remark}
\newtheorem{definition}{Definition}
\begin{document}

\date{}
\title{Cotangent bundle reduction and Routh reduction for polysymplectic manifolds}

\author{S.\ Capriotti\textsuperscript{a,b}, V. D\'iaz\textsuperscript{a},   E.\ Garc\'{\i}a-Tora\~{n}o Andr\'{e}s\textsuperscript{a,b}, T.\ Mestdag\textsuperscript{c,d}\\[2mm]
{\small \textsuperscript{a} Departamento de Matem\'atica,
Universidad Nacional del Sur (UNS),}  \\
{\small  Av.\ Alem 1253, 8000 Bah\'ia Blanca, Argentina}\\[1mm]
{\small \textsuperscript{b} Instituto de Matem\'atica de Bah\'ia Blanca UNS-CONICET, }\\
{\small  Universidad Nacional del Sur,}\\
{\small  Av.\ Alem 1253, 8000 Bah\'ia Blanca, Argentina}\\[1mm]
{\small \textsuperscript{c} Department of Mathematics,  University of Antwerp,}\\
{\small Middelheimlaan 1, 2020 Antwerpen, Belgium}\\[1mm]
 {\small \textsuperscript{d} Department of Mathematics: Analysis, Logic and Discrete Mathematics,  Ghent University,}\\
{\small Krijgslaan 281, 9000 Gent, Belgium}\\
}

\maketitle

   \begin{abstract}
We discuss Lagrangian and Hamiltonian field theories that are invariant under a symmetry group. We apply the polysymplectic reduction theorem for both types of field equations and we investigate aspects of the corresponding reconstruction process. We identify the polysymplectic structures that lie at the basis of cotangent bundle reduction and Routh reduction in this setting and we relate them by means of the Routhian function and its associated Legendre transformation.  Throughout the paper we provide   examples that illustrate various aspects of the results.

 \vspace{3mm}

 \textbf{Keywords:} Lagrangian field theory, Hamiltonian field theory, polysymplectic structure, symmetry, symplectic reduction, Routh reduction, momentum map.

 \vspace{3mm}

 \textbf{Mathematics Subject Classification:}
 37J05,  53D20, 53Z05,  70G45, 70S05, 70S10
 \end{abstract}

\section{Introduction}

Ever since the inception of the Einstein's relativistic field equations, mathematical models for Lagrangian and Hamiltonian field theories have been at the forefront of research in mathematical physics. In this paper $k$ stands for the number of parameters on which the fields depend (e.g. $k=4$ in the case of spacetime), and we will always assume that the fields take values in a configuration manifold $Q$. In a few words, the polysymplectic model we will use in this paper, first introduced  by G\"{u}nther \cite{Gunter},  makes use of a family of $k$ closed two-forms and it characterizes the  field theory in terms of $k$ vector fields on a suitable vector bundle over $Q$. Polysymplectic geometry has proven to be appropriately applicable to many different contexts (see e.g.\  the references in the recent monograph \cite{bookpoly} on this topic). Recent contributions, both on the side of the  mathematical fundamentals \cite{McClain1, Blacker1,Regularizedpolysymplectic} and on the side of the physics applications \cite{McClain2,Blacker2,Palatinisecondorder}, testify that polysymplectic geometry is today an active field of research.

The recent literature on the geometry of field theories has seen an ever increasing interest in symmetries and  in symmetry reduction of the field equations (see e.g.\ \cite{Castrillon0,Castrillon1,Castrillon2,GbRat,LagPoincare_JGP,Ellis2,Routhfields,blacker2020reduction} for a non-exhaustive list, in different geometric frameworks, and for both Lagrangian and Hamiltonian field theories). In this context, symmetry is understood as the invariance of the Lagrangian or Hamiltonian function under the appropriate lift of the action of a Lie group $G$ on the configuration manifold $Q$. Specifically for polysymplectic structures, we will build in this paper on earlier results of e.g.\  \cite{Polyreduction,MunSal,LTM_LP,Symmetriesk}.

Since field theories are a generalization of classical mechanical systems, our methodology is mostly inspired by the many contributions on this topic in geometric mechanics. Symmetry reduction of Hamiltonian systems can best be understood in terms of either Poisson reduction  or symplectic (and cotangent bundle) reduction (see e.g. \cite{stagesbook,MarsdenRatiuIntroduction}). In the last decades it has become clear that there exist two analoguous types of symmetry reduction for Lagrangian systems: Lagrange-Poincar\'e reduction \cite{LagRedbyStag} and Routh reduction \cite{RouthMarsden,Routh_CrampinTom,quasi}. The extension of the Lagrange-Poincar\'e reduction to polysymplectic Lagrangian field theories has been investigated in \cite{LTM_LP}. In this paper we will explore both cotangent bundle reduction and Routh reduction for polysymplectic manifolds. As far as we are aware, these two specific types of reduction have not been studied elsewhere. There exist, however, results on Routh reduction for field theories in other frameworks. For example,  the paper \cite{Routhfields} investigates aspects of  the field theory version of Routh reduction in a variational framework.  
 
We start Section~\ref{sec:Preliminaries} by recalling both the standard polysymplectic structure on the cotangent bundle of $k^1$-covelocities $(T^1_k)^*Q$ for a Hamiltonian field theory, and the Poincar\'e-Cartan polysymplectic structure that leads to the $k$-symplectic Lagrangian field equations, and we provide some physical and geometric examples in Section~\ref{subsec:someexamples}.  We then state, in Section~\ref{polyredsection}, for a general polysymplectic structure with momentum map $J$, a version of the so-called polysymplectic reduction theorem of \cite{Polyreduction}: we mention both the aspects concerning the reduced polysymplectic form (Theorem~\ref{thm:polyred}) and those concerning the reduced dynamics (Theorem~\ref{thm:polyred2}). We would like to point out that this type of reduction only applies under rather restrictive conditions, which are very specific to the field theory case and which do not show up in the case of a classical mechanical system. In classical mechanics, we know via Noether's theorem that there is (roughly speaking) a correspondence between the invariance of the system under symmetry groups and conservation laws, and that the symplectic reduction theorem heavily relies on this property. This correspondence is no longer valid in the case of a Lagrangian field theory: there the invariance under a symmetry group only results in the vanishing of a divergence, and in general no first integrals can be derived from this property. The  polysymplectic reduction theorem of \cite{Polyreduction} circumvents this problem: one needs to ask for the dynamics to be tangent to a level set of momentum, as in Theorem~\ref{thm:polyred2}. This is the reason why this property also appears in our Theorem~\ref{thm:red1}. We will see in the final section, however, that this does not prevent us from discussing some explicit solutions of interesting examples and applications.

In \cite{Polyreduction} there is no description of the reconstruction procedure after reduction. We discuss in Section~\ref{recsec} first the notion of a $k$-vector field and its integrability and we give a new perspective on how and when the integrability of an invariant $k$-vector field can be inherited by its reduced $k$-vector field (Proposition~\ref{pro:reconstruction}). Based on results of \cite{LTM_LP}, we state in Theorem~\ref{thm:reconstruction} how the integral sections of the original field theory can be reconstructed from those of the reduced field theory.

As stated before, one of the main goals of this paper is to extend Routh reduction to polysymplectic field theories. We will base our methodology on geometric constructions that are similar to those in \cite{quasi} for the mechanical case. That is: we will explain Routh reduction via two distinct steps. First, in Section~\ref{sec:cotangentreduction} we refine the polysymplectic reduction theorem to the specific case of the $k$-cotangent bundle $(T^1_k)^*Q$ and its canonical momentum map $J$. This results in Theorem~\ref{thm:identification}, which is our analogue of the cotangent bundle reduction theorem (see e.g.\ \cite{stagesbook}). This Theorem has an interest on its own because, together with Theorem~\ref{thm:polyred2}, we obtain in this way the complete picture of the polysymplectic reduction of a Hamiltonian system on the standard polysymplectic manifold $(T^1_k)^*Q$.

In the second step we introduce the momentum map $J_L$ of a Lagrangian field theory  and we use Lemma~\ref{lem:lema1} to identify the reduced manifold $J_L^{-1}(\mu)/G_\mu$ of the Lagrangian polysymplectic structure in Proposition~\ref{prop:identificationJL}. With the help of the so-called Routhian function, we show how its Legendre-type transformation relates the two reduced polysymplectic structures of $J_L^{-1}(\mu)/G_\mu$ and $J^{-1}(\mu)/G_\mu$. By pulling back the geometric structures from the Hamiltonian side, we can summarize in Theorem~\ref{thm:reducedform} the discussion with an explicit expression of the reduced polysymplectic structure on $J_L^{-1}(\mu)/G_\mu$ for the Lagrangian side. Finally, in Theorem~\ref{thm:red1} we relate the reduced Lagrangian field equations to the reduced polysymplectic form on $J_L^{-1}(\mu)/G_\mu$.  The picture is then completed with a version of the reconstruction problem in Theorem~\ref{thm:reconstruction2}. 

The paper ends with some examples and applications in Section~\ref{ex2sec}: we revisit the example of Navier's equations from Section~\ref{subsec:someexamples}, we discuss the case where the configuration manifold is a Lie group, and we illustrate the application to harmonic maps by means of a concrete example.

\section{Preliminaries}\label{sec:Preliminaries}

\subsection{Polysymplectic manifolds}

We will use Einstein's conventions for sums over $i=1,\dots,{\rm dim(Q)}$, but no sum signs for sums over $a=1,\dots,k$.

Let $N$ be a manifold of dimension $n(k+1)$. A \emph{$k$-polysymplectic structure} in $N$ is a closed non-degenerate $\R^k$-valued two-form $\Omega$, say
\begin{equation*}
\Omega= \sum_a \omega^a \otimes e_a, 
\end{equation*}
with $1\leq a\leq k$, where $\omega^a$ are two-forms on $N$ and $\{e_1,\dots,e_k\}$ is a basis of $\R^k$. Clearly, having a $k$-polysymplectic structure is equivalent to the existence of a family of $k$ closed two-forms $(\omega^1,\dots,\omega^k)$ such that $\ker\omega^1\cap\dots\cap\ker \omega^k=\{0\}$. The case $k=1$ corresponds to a symplectic manifold; however, for $k>1$, we do not have in general a family of $k$ symplectic forms, since each of the two-forms $\omega^1,\dots,\omega^k$ might be degenerate. Introduced by G\"{u}nther in~\cite{Gunter}, polysymplectic structures provide a simple formalism to study a broad class of field theories, specifically those for which the Lagrangian or Hamiltonian is explicitly independent of the space-time coordinates.

A first and important example of polysymplectic manifold is the cotangent bundle of $k^1$-covelocities
\[
(T^1_k)^*Q =T^*Q\oplus\stackrel{k}{\dots}\oplus T^*Q
\]
equipped with the family of two-forms $(\omega_Q^1,\dots,\omega_Q^k)$ defined by $\omega_Q^a=(\pi_Q^a)^*\omega_Q$, where $\omega_Q$ the canonical symplectic form on $T^*Q$ and $\pi_Q^a\colon (T^1_k)^*Q\to T^*Q$ is the projection defined by
\[
\pi_Q^a(\alpha_q^1,\dots, \alpha_q^k)= \alpha_q^a.
\]
Elements in $(T^1_k)^*Q$ will be denoted using bold face characters, such as $\bm{\alpha}_q=(\alpha_q^1,\dots, \alpha_q^k)$ and, when the basepoint $q$ is understood, we will drop it and write $\bm{\alpha}=(\alpha^1,\dots,\alpha^k)$. If $q^i$ are coordinates on an open set $U\subset Q$, with $i=1,\dots,\dim Q$, then there are induced coordinates $(q^i,p_i^a)$ on $(\pi_Q^a)^{-1}(U)$ with $p_i^a$ the components of $\alpha_q^a$ in the usual basis of $T_q^*Q$. Our convention is to choose $\omega_Q=-d\theta_Q$ with $\theta_Q= p_i dq^i$ the tautological one-form on $T^*Q$, and therefore in the coordinates $(q^i,p_i^a)$ we have
\begin{equation}\label{eq:omega-a}
\omega_Q^a= dq^i\wedge dp_i^a=-d\theta^a_Q,  
\end{equation}
where $\theta^a_Q=(\pi_Q^a)^*\theta_Q= p_i^a dq^i$.

The cotangent bundle of $k^1$-covelocities plays the role of the phase space for the Hamiltonian field theory in the $k$-symplectic framework. The dual of this space, which plays a similar role in the Lagrangian picture, is the  tangent bundle of $k^1$-covelocities. It is denoted 
\[
T^1_kQ=TQ\oplus\stackrel{k}{\dots} \oplus TQ,
\]
and can be identified with $J^1_0(\R^k,Q)$, the manifold of 1-jets of maps from $\R^k$ to $Q$ with source at $0\in\R^k$. An element in $T^1_kQ$ is denoted $\pmb{v}_q=(v_{1q},\dots,v_{kq})$ or, more often, simply $\pmb{v}=(v_{1},\dots,v_{k})$. Given coordinates $q^i$ on $Q$, the induced coordinates on $T^1_kQ$ are $(q^i,v^i_a)$, with $v^i_a$ the components of $v_{aq}$ in the natural basis of $T_qQ$. The projection of $T^1_kQ$ onto $Q$ will be denoted by $\tau_Q^k\colon T^1_kQ\to Q$. The natural pairing between elements of $(T^1_k)^*Q$ and $T^1_kQ$ is written as
\[
\langle\bm{\alpha}_q,\pmb{v}_q\rangle=\langle\alpha_q^1,v_{1q}\rangle +\dots+\langle \alpha_q^k,v_{kq}\rangle. 
\]

A section $\pmb{X}$ of $\tau_Q^k\colon T^1_kQ\to Q$ is called a $k$-vector field on $Q$. If $\tau_Q^{k,a}\colon T^1_kQ\to TQ$ denotes the projection on the $a$-th component, we will denote
\[
X_a= \tau_Q^{k,a}\circ \pmb{X}\colon Q\to TQ,
\]
and write
\[
\pmb{X}=(X_1,\dots,X_k). 
\]
Therefore, we might think of a $k$-vector field $\pmb{X}$ on $Q$ as a family of $k$ (standard) vector fields $X_a$ on $Q$.

We will use $t=(t^1,\dots,t^k)$ to denote points in $\R^k$. For any map $f\colon N\to M$ between manifolds, we write $T_nf$ or simply $Tf$  for the tangent map of $f$ at $n\in N$.

\begin{definition}\label{def:integrable} An \emph{integral section} of $\pmb{X}$ through $p\in Q$ is a map $\varphi\colon U\subset\R^k\to Q$, defined on some neighbourhood $U\ni 0$, such that
\[
\varphi(0)=p,\qquad T_t\varphi\circ \left(\left.\fpd{}{t^a}\right|_t\right)=X_a(\varphi(t)),\quad \text{for every $t\in U$}. 
\]
We say that $\pmb{X}$ is \emph{integrable} if there is an integrable section through every point of $Q$. 
\end{definition}

We will usually take $U=\R^k$ in the definition of integral section above. For any map 
\[
\phi=(\phi^1,\dots,\phi^{\dim(Q)})\colon U\subset \R^k\to Q, 
\]
the first prolongation is the map $\phi^{(1)}\colon U\to T^1_kQ$ given by
\[
\phi^{(1)}(t)=\left(\phi(t),T_t\phi\,(\partial/\partial t^1),\dots,T_t\phi\,(\partial/\partial t^k)\right).
\]
An integral section of $\pmb{X}$ is then a map which satisfies  $\varphi^{(1)}(t)=\pmb{X}\circ \varphi(t)$, or
\begin{equation}\label{eq:integralsection}
\fpd{\varphi^i}{t^a}=X^i_a\circ\varphi, 
\end{equation}
where $X_a= X_a^i\,\partial/\partial q^i$. From Definition~\ref{def:integrable}, one can show that in order for $\pmb{X}$ be integrable one needs to rectify the family of vector fields $X_1,\dots,X_k$. It is well-known that this is possible iff the condition $[X_a,X_b]=0$ holds for each $a,b$. This condition can be expressed as
\[
X_a^i\fpd{X_b^j}{q^i}- X_b^i\fpd{X_a^j}{q^i}=0,
\]
which is precisely the integrability condition of the PDE~\eqref{eq:integralsection} when one applies the chain rule. 

\begin{remark}Even though the terminology ``polysymplectic structure'' and ``$k$-symplectic structure'' is often used interchangeably in the literature, these two terms actually denote different geometric structures. The definition of a polysymplectic manifold we use here is due to G\"{u}nther \cite{Gunter} and it agrees, among others, with \cite{PolyPoisson} and \cite{Polyreduction}. A polysymplectic manifold consists of a manifold $M$ equipped with a family of $k$ closed two-forms $\omega^a$ satisfying a nondegeneracy condition. An important class of polysymplectic manifolds are the so-called \emph{standard polysymplectic structures}~\cite{Gunter}: a polysymplectic manifold is standard if it is  locally polysymplectomorphic to the cotangent bundle of $k^1$-velocities.  The term $k$-symplectic manifold was coined by Awane~\cite{Awane} and, at least originally, it refers to a polysymplectic manifold (in the sense above) of dimension $n(k+1)$ for some $n$ (here $k$ has the same meaning as above), which is further equipped with an integrable distribution $V$ of dimension $nk$ that satisfies $\omega^a(V,V)=0$. Clearly, a $k$-symplectic structure is a polysymplectic structure (actually, a standard one), but the converse is not true in general. We also recall that $k$-symplectic structures appeared independently in \cite{palmoscot} under the name of \emph{k-almost cotangent structures}.
 In the context of symmetry reduction~\cite{Polyreduction}, it is most convenient to work with polysymplectic manifolds, rather than $k$-symplectic,  because the reduction of a $k$-symplectic manifold need not be a $k$-symplectic manifold.

Unfortunately, there is even more possible confusion in terminology and we need a further disclaimer. The term ``polysymplectic structure'' is also used in a series of  papers, starting with e.g.\ \cite{Gia,Kana}. There it stands for a certain vector-valued form defined on an associated  bundle of a given fibre bundle. In this paper we do {\em not} make use of this different description of classical field theories.   \end{remark}

\subsection{Hamiltonian and Lagrangian \texorpdfstring{$k$}{k}-symplectic field theory}\label{sec:HLfieldtheory}

Given a polysymplectic manifold $(N,\omega^a)$, the vector bundle morphism $\flat_\omega\colon T^1_k N\to T^*N$ is defined as
\[
\flat_\omega(v_1,\dots,v_k)= \sum_a v_a\lrcorner \omega^a.
\]
It can be easily shown that $\flat_\omega$ is surjective: indeed, it is the dual -up to a sign- of the injective map over the identity $TN\to (T^1_k)^*N$ given by $v\mapsto (v\lrcorner\omega^1,\dots,v\lrcorner \omega^k)$.
 
Consider a Hamiltonian $H\colon (T_k^1)^*Q\to \R$. The $k$-symplectic Hamiltonian eqs. for a $k$-vector field $\pmb{X}$ on $(T_k^1)^*Q$ are
\begin{equation}\label{eq:k-HAM}
 \flat_{\omega_Q} (\pmb{X})=dH, \tag{k-HAM}
\end{equation}
i.e.\ $\sum_a X_a\lrcorner \omega_Q^a=dH$ ($\omega_Q^a$ have been defined in~\eqref{eq:omega-a}). Since $ \flat_{\omega_Q}$ is surjective, the equations~\eqref{eq:k-HAM} always admit a solution which, in general, will not be unique. The main interest of these equations is that the integrable sections of $\pmb{X}$ give solutions of the familiar Hamilton-De\,~Donder-Weyl equations in Hamiltonian field theory (cf. the book \cite{bookpoly} for a full discussion). One finds in this way the so-called admissible solutions~\cite{RomanRey}. 

The equations~\eqref{eq:k-HAM} are a particular case of a more general class of $k$-symplectic Hamiltonian equations: if $(N,\omega^a)$ is a polysymplectic manifold and $H\colon N\to\R$ is the Hamiltonian, the $k$-symplectic Hamiltonian eqs. for a $k$-vector field $\pmb{X}$ on $N$ are
\begin{equation}\label{eq:k-Sym}
\flat_{\omega} (\pmb{X})=dH. \tag{k-Sym}
\end{equation}
Of course, one recovers~\eqref{eq:k-HAM} in the case $N=(T_k^1)^*Q$, and we speak of $k$-symplectic Hamiltonian equations for both~\eqref{eq:k-HAM} and ~\eqref{eq:k-Sym}.

We now recall how the Euler-Lagrange equations in Lagrangian field theory can be obtained in the $k$-symplectic setting; a detailed discussion, including a comprehensive list of references, can be found in~\cite{bookpoly}. The Euler-Lagrange field equations for a Lagrangian $L\colon T^1_k Q\to\R$ of the form $L(q^i,v^i_a)$ are the following set of second-order PDEs:
\begin{equation}\label{eq:EL}
\sum_a \left.\fpd{}{t^a}\right|_t \left(\left.\fpd{L}{v^i_a}\right|_{\varphi(t)}\right)=\left.\fpd{L}{q^i}\right|_{\varphi(t)},\quad \left.v^i_a(\varphi(t))\right|_{\varphi(t)} =\left.\fpd{\varphi^i}{t^a}\right|_t, \tag{EL}
\end{equation}
where $\varphi=(\varphi^1,\dots,\varphi^k)\colon \R^k\to T^1_kQ$. We are interested in  describing solutions of~\eqref{eq:EL} for a regular Lagrangian using a $k$-symplectic approach. This can be done in a way that closely resembles the case of a regular Lagrangian in classical mechanics, where the Legendre transformation is used to construct a symplectic form and an energy function that govern the dynamics.

Given a Lagrangian $L\colon T^1_k Q\to\R$ , the Legendre transformation is the map 
\[
\pmb{F} L\colon T^1_kQ\to (T^1_k)^*Q 
\]
with components $(FL)^a\colon T^1_kQ\to T^*Q $ given by
\begin{equation}\label{eq:Legendre}
\langle(FL)^a(\pmb{v}),w_q\rangle = \left.\frac{d}{ds}\right|_{s=0}L(v_{1q},\dots,v_{aq}+sw_q,\dots,v_{kq}),\qquad w_q\in TQ.
\end{equation}
In coordinates, it reads
\[
\pmb{F} L(q^i,v^i_a)=\left(q^i,\fpd{L}{v^i_a}\right). 
\]
The energy function associated to the Lagrangian $L$, $E_L\colon T^1_kQ\to\R$, is defined as
\begin{equation*}
E_L(\pmb{v})=\langle \F L (\pmb{v}),\pmb{v}\rangle -L(\pmb{v}),
\end{equation*}
or in coordinates:
\[
E_L(q^i,v^i_a)=\sum_{a} v^i_a\fpd{L}{v^i_a}-L.
\]

\begin{definition} A Lagrangian $L\colon T^1_k Q\to\R$ is \emph{regular} if $\pmb{F} L$ is a local diffeomorphism. If, additionally, $\pmb{F} L$ is a global diffeomorphism then $L$ is \emph{hyperregular}.
\end{definition}

Locally, the Lagrangian $L$ is regular if the matrix
\begin{equation*}
\left(\fpd{^2 L}{ v^i_a\partial v^j_b}\right)\qquad 1\leq i,j\leq n,\;\;  1\leq a,b\leq k,
\end{equation*}
has maximal rank everywhere on $T^1_kQ$ (this maximal rank equals $kn$). When the Lagrangian is regular, the family of 2-forms 
\[
\omega^a_{Q,L}=(\F L)^*\omega^a_Q= dq^i\wedge d\left(\fpd{L}{v^i_a}\right),\qquad a=1,\dots,k, 
\]
are a polysymplectic structure on $T^1_kQ$. In this case, we will consider the following set of equations for a $k$-vector field $\pmb{\Gamma}$ on $T_k^1Q$:
\begin{equation}\label{eq:k-EL}
\sum_a \Gamma_a\lrcorner \omega_{Q,L}^a=dE_L, \tag{k-EL}
\end{equation}
which will be referred to as \emph{(k-symplectic) Euler-Lagrange equations} (or k-EL for short). They are of course a particular case of~\eqref{eq:k-Sym}. It can be shown that, defining the Hamiltonian as $H=E_L\circ (\F L)^{-1}$, the Legendre transformation gives a bijection between the solutions of \eqref{eq:k-HAM} and \eqref{eq:k-EL} and their integral sections; for a proof, see \S 6.4 in~\cite{bookpoly}.  

The importance of~\eqref{eq:k-EL} lies in the following result:

\begin{theorem}\label{thm:regularsolutions} Let $L$ be a regular Lagrangian and $\pmb{\Gamma}$ be an integrable solution of \eqref{eq:k-EL}. Then the integral sections $\varphi\colon Q\to T^1_kQ$ of $\pmb{\Gamma}$ are prolongations of maps $\phi\colon \R^k\to Q$ and are solutions of \eqref{eq:EL}.   
\end{theorem}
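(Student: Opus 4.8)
The plan is to reduce the statement to a purely local computation in the natural coordinates $(q^i,v^i_a)$ on $T^1_kQ$ and in the induced coordinates $(q^i,p^a_i)$ on $(T^1_k)^*Q$. First I would observe that by the bijection between solutions of \eqref{eq:k-HAM} and \eqref{eq:k-EL} (and of their integral sections) induced by the Legendre transformation $\F L$, it suffices to understand the structure of an integrable solution $\pmb{\Gamma}$ of \eqref{eq:k-EL} directly; alternatively, one can argue entirely on the Lagrangian side, which is the route I would take here since the hypothesis is phrased in terms of \eqref{eq:k-EL}. Write $\pmb{\Gamma}=(\Gamma_1,\dots,\Gamma_k)$ with $\Gamma_a = (\Gamma_a)^i\,\partial/\partial q^i + (\Gamma_a)^i_b\,\partial/\partial v^i_b$. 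Using $\omega^a_{Q,L}=dq^i\wedge d(\partial L/\partial v^i_a)$ and expanding $dE_L = d\big(\sum_b v^j_b\,\partial L/\partial v^j_b - L\big)$, I would contract and collect the coefficients of $dq^i$ and of $dv^i_a$ in the equation $\sum_a \Gamma_a\lrcorner\omega^a_{Q,L}=dE_L$. The regularity of $L$ (maximal rank of the Hessian $\partial^2 L/\partial v^i_a\partial v^j_b$) is what lets me extract, from the $dv$-part of this identity, the \emph{second-order condition} $(\Gamma_a)^i = v^i_a$ — that is, $\pmb{\Gamma}$ is a \textsc{sode}-type $k$-vector field; and from the $dq$-part, after substituting $(\Gamma_a)^i=v^i_a$, the identity $\sum_a \big( (\Gamma_a)^j_b\,\partial^2 L/\partial v^i_a\partial v^j_b + v^j_a\,\partial^2 L/\partial v^i_a\partial q^j \big) = \partial L/\partial q^i$, i.e. $\sum_a \partial/\partial v^i_a$-type total derivatives reproduce the Euler--Lagrange expression.

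Next, I would bring in integrability. By Definition~\ref{def:integrable}, an integral section $\varphi\colon \R^k\to T^1_kQ$ of $\pmb{\Gamma}$ satisfies $\varphi^{(1)}$-type relations only after we know $\pmb{\Gamma}$ is second-order; concretely, writing $\varphi=(\varphi^i,\psi^i_a)$ in coordinates, the defining equations $T_t\varphi\circ(\partial/\partial t^a) = \Gamma_a(\varphi(t))$ split into $\partial\varphi^i/\partial t^a = (\Gamma_a)^i\circ\varphi = \psi^i_a$ (using the second-order condition) and $\partial\psi^i_b/\partial t^a = (\Gamma_a)^i_b\circ\varphi$. The first batch says precisely that $\psi^i_a = \partial\varphi^i/\partial t^a$, i.e. $\varphi = \phi^{(1)}$ is the first prolongation of the base map $\phi=(\varphi^1,\dots,\varphi^{\dim Q})\colon\R^k\to Q$, as in the displayed formula for $\phi^{(1)}$. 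Then, feeding $\psi^i_a=\partial\phi^i/\partial t^a$ and $(\Gamma_a)^i_b\circ\varphi = \partial^2\phi^i/\partial t^a\partial t^b$ into the $dq$-part identity derived above and applying the chain rule gives exactly
\[
\sum_a \left.\fpd{}{t^a}\right|_t\!\left(\left.\fpd{L}{v^i_a}\right|_{\phi^{(1)}(t)}\right)=\left.\fpd{L}{q^i}\right|_{\phi^{(1)}(t)},
\]
which, together with $v^i_a(\varphi(t))=\partial\phi^i/\partial t^a$, is the system \eqref{eq:EL}. Note that integrability of $\pmb{\Gamma}$ is used twice: once to guarantee the \emph{existence} of integral sections through every point (so the statement is non-vacuous), and once, implicitly, in that the compatibility $[\Gamma_a,\Gamma_b]=0$ is what makes the mixed partials $\partial^2\phi^i/\partial t^a\partial t^b$ symmetric and the PDE system \eqref{eq:EL} genuinely solvable; strictly speaking, for the logical implication "$\varphi$ integral section of $\pmb{\Gamma}$ $\Rightarrow$ $\varphi=\phi^{(1)}$ solves \eqref{eq:EL}" only the second-order nature of $\pmb{\Gamma}$ and the $dq$-identity are needed.

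I expect the main obstacle to be the careful bookkeeping in the coefficient-matching step: because each $\omega^a_{Q,L}$ is in general degenerate, one cannot simply "invert" a single two-form, and one must use the \emph{joint} nondegeneracy — equivalently, the full-rank Hessian — to conclude $(\Gamma_a)^i=v^i_a$ from the coupled system of $dv^i_a$-coefficients. Concretely, the $dv$-part yields $\sum_a \big((\Gamma_a)^j - v^j_a\big)\,\partial^2 L/\partial v^i_a\partial v^j_b = 0$ for all $i,b$; viewing $(\Gamma_a)^j - v^j_a$ as the unknowns indexed by $(a,j)$ and the Hessian as the $(kn)\times(kn)$ coefficient matrix indexed by $(a,j)$ and $(b,i)$, maximal rank forces the vanishing. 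Once this is in hand, the rest is the chain-rule manipulation already sketched, which is routine. I would also remark that this is the $k$-symplectic analogue of the classical mechanics fact that integrable solutions of the Lagrangian symplectic equation are \textsc{sode}s whose integral curves solve Euler--Lagrange, and one could alternatively cite the discussion in~\cite{bookpoly} for the computational core; but giving the short coordinate proof keeps the exposition self-contained.
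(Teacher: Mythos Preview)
Your argument is correct. However, note that the paper does not actually supply a proof of this theorem: it is stated as a known result in the preliminaries, with the surrounding discussion (and the reference to \S 6.4 of~\cite{bookpoly} just before the theorem) serving as justification. The paragraph following the statement merely unpacks what the theorem says and recalls the local form~\eqref{eq:SOPDE} of a SOPDE, together with the remark that solutions of~\eqref{eq:k-EL} for a regular Lagrangian are always of SOPDE type, citing~\cite{MunSal}.

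Your coordinate proof is precisely the standard one from the literature, and in particular the key step you single out --- extracting the second-order condition $(\Gamma_a)^i=v^i_a$ from the $dv^j_b$-coefficients by using the full-rank Hessian to invert the linear system $\sum_a\big((\Gamma_a)^j-v^j_a\big)\,\partial^2 L/\partial v^i_b\partial v^j_a=0$ --- is exactly the content of the claim ``solutions of~\eqref{eq:k-EL} for regular Lagrangians are always of this type'' that the paper makes immediately after the theorem. So your proposal fills in what the paper leaves to the references, and does so along the same lines one finds there.
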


An integrable solution $\pmb{\Gamma}$ of \eqref{eq:k-EL} will be called an integrable Lagrangian SOPDE. In other words, Theorem~\ref{thm:regularsolutions} states that, if an integrable Lagrangian SOPDE for a given regular Lagrangian $L$ is known, we can find solutions of the Euler-Lagrange equations of $L$ by finding integral sections $\varphi=\phi^{(1)}\colon Q\to T^1_kQ$ of $\Gamma$. By definition, the projection $\phi=\tau_Q^k\circ\phi^{(1)}\colon \R^k\to Q$ satisfies the Euler-Lagrange equations
\begin{equation}\label{eq:ELphi}
\sum_a \left.\fpd{}{t^a}\right|_t \left(\left.\fpd{L}{v^i_a}\right|_{\phi^{(1)}(t)}\right)=\left.\fpd{L}{q^i}\right|_{\phi^{(1)}(t)}.
\end{equation}
We will also say that a map $\phi\colon \R^k\to Q$ is a solution of~\eqref{eq:EL} if it satisfies the equations~\eqref{eq:ELphi} above. The situation is similar to the Hamiltonian case: here one finds the admissible solutions of the Euler-Lagrange equations, namely those solutions that can be retrieved as an integral section of a some integral solution $\pmb{\Gamma}$, see~\cite{RomanRey}. 

The term SOPDE stands for ``second order partial-differential equation'', which means that $\pmb{\Gamma}=(\Gamma_1,\dots, \Gamma_k)$ is of the form
\begin{equation}\label{eq:SOPDE}
\Gamma_a=v^i_a \fpd{}{q^i}+ \sum_b (\Gamma_a)^i_b \fpd{}{v^i_b},
\end{equation}
for some functions on $(\Gamma_a)^i_b$ on $T^1_kQ$. This is equivalent with the fact that all the integral curves of $\Gamma$ are prolongations of maps $\phi\colon \R^k\to Q$ (see e.g.~\cite{MunSal} for a proof). The solutions of~\eqref{eq:k-EL} for regular Lagrangians are always of this type.  By definition, an integral section $\varphi=(\varphi^i,\varphi^i_a)\colon \R^k\to T^1_kQ$ of the SOPDE~\eqref{eq:SOPDE} satisfies 
\[
\fpd{\varphi^i}{t^a}= \varphi^i_a,\qquad \fpd{\varphi^i_a}{t^b}= (\Gamma_a)^i_b\circ \varphi,
\]
which may as well be written as a system of second order partial differential equations:
\[
\fpd{^2\varphi^i}{t^a\partial t^b}(t)= (\Gamma_a)^i_b\left(\varphi^i(t),\fpd{\varphi^i(t)}{t^c}\right).
\]
In particular, the equality of the mixed partial derivatives gives the integrability condition $(\Gamma_a)^i_b=(\Gamma_b)^i_a$, valid for any integrable SOPDE $\pmb{\Gamma}$. 

Later, when we study the reduced Lagrangian field theories we will need extensions of some of the notions above. In particular the reduced space will be a pullback bundle $\Pi^* T^1_kQ$ for some bundle $\Pi\colon P\to Q$. 
In this case:
\begin{enumerate}[label={(\roman*})]
\item An element of $\Pi^* T^1_kQ$ will be denoted by $\pmb{v}=(p,v_{1q},\dots,v_{kq})$, with $\Pi(p)=q$.
\item A Lagrangian on the reduced space is a function $L\colon \Pi^* T^1_kQ\to \R$.
\item The Legendre transformation is a map $\pmb{F} L\colon \Pi^* T^1_kQ\to \Pi^*(T^1_k)^*Q$. If $q^i$ are coordinates on $Q$ and $(q^i,y^\alpha)$ are bundle coordinates on $P$, the Legendre transformation is simply  
\[
\pmb{F} L(q^i,y^\alpha,v^i_a)=\left(q^i,y^\alpha,\fpd{L}{v^i_a}\right). 
\]
\item The energy $E_L\colon \Pi^* T^1_kQ\to\R$ is the function
\[
E_L(q^i,y^\alpha, v^i_a)=\sum_{a} v^i_a\fpd{L}{v^i_a}-L. 
\]
\end{enumerate}
The intrinsic definition of $\pmb{F} L$ and $E_L$ is obtained from the natural pairing between $\Pi^*(T^1_k)^*Q$ and $\Pi^* T^1_kQ$ as in the standard case.

\subsection{Some examples}\label{subsec:someexamples}

First-order field theories whose Hamiltonian or  Lagrangian does not depend explicitly on the space of parameters can be studied within the polysymplectic formalism. We collect here a few examples that we will use to illustrate some of our results in Section~\ref{ex2sec}.

\paragraph{Laplace equation.} Consider the Lagrangian $L\colon T^1_k\R \to\R$ given by
\[
L(q,v_1,\dots,v_k)=\frac{1}{2}\left((v_1)^2+\dots+(v_k^2)\right). 
\]
An integral section $\phi^{(1)}\colon \R^k\to T^1_k\R$ of a $k$-vector field for this Lagrangian satisfies the Laplace equation
\begin{equation} \label{Lapeq}
\fpd{^2  \phi}{(t^1)^2}+\dots+ \fpd{^2 \phi}{(t^k)^2}=0.
\end{equation}

\paragraph{Navier's equations.} The Euler-Lagrange equations for the Lagrangian $L\colon T^1_2\R^2\to \R$
\begin{equation}\label{eq:NavierLagrangian}
L(q^i,v^i_a)= \left(\frac{\lambda}{2}+\nu \right)\left[(v^1_1)^2+(v^2_2)^2\right]+\frac{\nu}{2}\left[(v^1_2)^2+(v^2_1)^2\right]+(\lambda+\nu)v^1_1v^2_2
\end{equation}
lead to the following system  of PDEs:
\begin{align}\label{eq:Navier}
\begin{split}
(\lambda+2\nu)\partial_{11}\phi^1+(\lambda+\nu)\partial_{12}\phi^2+\nu\partial_{22}\phi^1&=0,\\
\nu\partial_{11}\phi^2+(\lambda+\nu)\partial_{12}\phi^1+(\lambda+2\nu)\partial_{22}\phi^2&=0.
\end{split}
\end{align}
These are the so-called Navier's equations in two dimensions, which are the equations of linear isotropic elasticity~\cite{OlverBook}. Here $\lambda$ and $\nu$ are constants, known as \emph{Lam\'e moduli} (the parameter $\nu$ is usually denoted $\mu$, but we will use the symbol $\mu$, later, for the momentum). The $4\times 4$ matrix
\[
\begin{pmatrix}
\fpd{^2 L}{ v^i_1\partial v^j_1} & \fpd{^2 L}{ v^i_2\partial v^j_1} \\[1.5em]
\fpd{^2 L}{ v^i_1\partial v^j_2}  & \fpd{^2 L}{ v^i_2\partial v^j_2} 
\end{pmatrix} =
\begin{pmatrix}
\lambda + 2\nu & 0 & 0 & \lambda + \nu \\
0 & \nu & 0 & 0 \\
0 & 0& \nu & 0 \\
\lambda + \nu & 0 & 0 & \lambda + 2\nu \\
\end{pmatrix} 
\]
has determinant $\nu^3(2\lambda+3\nu)$. We will assume from now that both $\nu\neq0$ and $2\lambda+3\nu\neq 0$, so that the Lagrangian is regular. The symmetries of this Lagrangian (in the more general $3$-dimensional case) have been studied in detail in~\cite{OlverElasticityII}. We will come back to this example in Section~\ref{exNE}.

\paragraph{Harmonic maps from $\R^n$.} Consider a harmonic map $\varphi\colon (M,h_{ab})\to (Q,g_{ij})$ between two Riemannian manifolds (see \cite{JOSTRiemannianBook} for a concrete exposition). If we set $(M,h_{ab})=(\R^k,\delta_{ab})$ with its Euclidean metric, then it is well-known that $\varphi$ appears as a solution of the following Lagrangian $L\colon T^1_kQ\to \R$:
\[
L(q^i,v^i_a)=\frac{1}{2} \sum_{a,b=1}^k g_{ij}(q^i) v^i_a v^i_b.
\]
The Lagrangian $L$ can also be written as the sum of the $k$ Lagrangians, obtained from the metric Lagrangian on $TQ$ pulled back to each of the components of $T_1^kQ$, i.e.
\[
L(q,\pmb{v}_q)=\frac{1}{2}\mathcal{G}(\pmb{v}_{1q},\pmb{v}_{1q})+\dots+ \frac{1}{2}\mathcal{G}(\pmb{v}_{kq},\pmb{v}_{kq}),
\]
with $\mathcal{G}=(g_{ij})$ the metric on $Q$. In the terminology of~\cite{bookpoly}, this is a particular case of a \emph{$k$-symplectic quadratic system}. We will discuss a concrete case of the example  in Section~\ref{exHM}.

%

\paragraph{Einstein-Palatini Gravity.} Let $Q=\Sigma\times_M \mathcal{C}(LM)$, where  we take $M=\R^4$ as a model for space-time, $\Sigma\to M$ is the bundle of Lorentzian metrics on $M$ and $\mathcal{C}(LM)\to M$ the bundle of linear connections on $M$. The Lagrangian $L\colon T^1_kQ\to\R$ is
\[
L(g_{ab},\Gamma^d_{ef},\partial_cg_{ab},\partial_c\Gamma^d_{ef})= \sqrt{\abs{{\rm det}\, g_{ab}}}\cdot R,
\]
with $R=g^{ab} R_{ab}$ the scalar of curvature ($R_{ab}$ denotes the Ricci curvature). Roughly speaking, besides the metric tensor $g_{ab}$, the connection coefficients $\Gamma^c_{ab}$ are also part of the unknowns in this model. For this reason the above Lagrangian, being affine in the derivatives of $\Gamma^c_{ab}$, is not regular. This example is studied in detail in the framework of polysymplectic geometry in~\cite{Palatinisecondorder}.

\paragraph{Complex Scalar field.} The Lagrangian $L\colon T^1_2\R^2\to \R$ of a complex scalar field $\phi\in \C\simeq \R^2$ with a quartic interaction can be written, in terms of its real and imaginary parts $\phi_1$ and $\phi_2$ as follows:
\begin{equation}\label{eq:complexsf}
L(\phi_i,v^i_a)=\frac{1}{2}\left(\sum_{a,b=1}^4 \eta_{ab}v^1_av^1_b-m^2\phi_1^2\right)+\frac{1}{2}\left(\sum_{a,b=1}^4 \eta_{ab}v^2_av^2_b-m^2\phi_2^2\right)-\frac{1}{4}g(\phi_1^2+\phi_2^2)^2.
\end{equation}
Here $g$ is a coupling constant and $\eta={\rm diag}(-1,+1,+1,+1)$ is the Minkowski metric. Complex scalar fields are often studied from the point of view of quantization. See e.g.~\cite{LecturesQFT} for more details.

\section{Polysymplectic reduction and reconstruction}\label{sec:polyred}


The aim of this section is to recall the polysymplectic reduction theorem in~\cite{Polyreduction} and to discuss, when possible, the reconstruction of solutions.


\subsection{Actions and connections}\label{subsec:connections}

If $G$ is a Lie group, we will denote by $\lag$ its Lie algebra and by ${\rm Ad}\colon G\times \lag\to\lag$ the adjoint action. The coadjoint action will be denoted ${\rm Coad}\colon G\times \lag\to\lag$, and is defined as usual by:
\[
{\rm Coad}_g(\mu) = {\rm Ad}_{g^{-1}}^*\mu. 
\]
We consider the spaces $\lagk=\lag\times \stackrel{k}{\dots} \times \lag$ and  $\lagdk = \lag^*\times \stackrel{k}{\dots} \times \lag^*$. They are equipped with the $k$-Adjoint and $k$-Coadjoint actions, defined as: 
\begin{align*}
 {\rm Ad}^k\colon G\times  \lagk &\to\lagk,\\
 (g,\xi_1,\dots,\xi_k)&\mapsto ({\rm Ad}_g \xi_1,\dots,{\rm Ad}_g \xi_k),
 \end{align*}
and
\begin{align*}
 {\rm Coad}^k\colon G\times  \lagdk &\to\lagdk,\\
 (g,\mu_1,\dots,\mu_k)&\mapsto ({\rm Ad}_{g^{-1}}^*\mu_1,\dots,{\rm Ad}_{g^{-1}}^*\mu_k),
 \end{align*}
respectively. 

In this paper we will only consider free and proper actions of a Lie group $G$ on a manifold $Q$. This guarantees that $\pi \colon Q\to Q/G$ is a principal bundle. If $\Phi\colon G\times Q\to Q$ is an action, we will denote by $\xi_Q$ the infinitesimal generator of the action corresponding to $\xi\in\lag$. For concreteness, we will work with left actions, and given $q\in Q$ we will write $\Phi_g(q)=g\cdot q$ when there is no risk of confussion. 

Given an action $\Phi$ on $Q$, there are induced $G$-actions $\Phi^{TQ}$ on $TQ$ and $\Phi^{T^1_kQ}$ on  $T^1_kQ$. They are defined as follows:
\begin{align*}
\Phi^{TQ}_g(v_q)&=T_q\Phi_g(v_q), \\
\Phi^{T^1_kQ}_g(v_{1q},\dots,v_{kq})&=(T_q\Phi_g(v_{1q}),\dots,T_q\Phi_g(v_{kq})).
\end{align*}
We will write the actions above as $g\cdot v_q$ and $g\cdot \pmb{v}_q$ respectively. There are also induced actions on $T^*Q$ and $(T^1_k)^*Q$ defined similarly as follows:
\begin{align*}
\Phi^{T^*Q}_g(\alpha_q)&=T^*_{g\cdot q}\Phi_{g^{-1}}(\alpha_q), \\
\Phi^{(T^1_k)^*Q}_g(\alpha^1_{q},\dots,\alpha^k_{q})&=(T^*_{g\cdot q}\Phi_{g^{-1}}(\alpha^1_q),\dots,T^*_{g\cdot q}\Phi_{g^{-1}}(\alpha^k_q)).
\end{align*}
We will also write $g\cdot \alpha_q$ and $g\cdot \bm{\alpha}_q$ for the actions above. Note, in particular, that the pairings satisfy $\langle \alpha_q,v_q\rangle=\langle g\cdot \alpha_q, g\cdot v_q\rangle$ and $\langle \pmb{\alpha}_q,\pmb{v}_q\rangle=\langle g\cdot \pmb{\alpha}_q, g\cdot \pmb{v}_q\rangle$.

Recall that a principal connection on the principal bundle $Q\to Q/G$ can be given in terms of a connection 1-form $\Ac\colon TQ\to \lag$ which satisfies: 
\begin{enumerate}[label={(\roman*})]
 \item $\Ac(\xi_Q)=\xi$, for each $\xi\in\lag$,
 \item $\Ac(g\cdot v)={\rm Ad}_{g} (\Ac(v))$, for each $g\in G$ and $v\in TQ$.
\end{enumerate}
Alternatively, a principal connection defines an (invariant) horizontal subbundle $\mathcal{H}\subset TQ$:
\[
\mathcal{H}=\ker(\Ac). 
\]
We will use the following definition for the curvature $K$ of $\Ac$: for two vector fields $X,Y$ on $Q$
\[
K(X,Y)=-{\rm Ver}\big([{\rm Hor}(X),{\rm Hor}(Y)]\big),
\]
where ${\rm Hor}(\cdot)$ and ${\rm Ver}(\cdot)$ denote the horizontal and vertical parts w.r.t. $\Ac$. By definition, the curvature of two vector fields is again a vector field. 

Given a principal connection $\Ac$, we will sometimes denote by $\pmb{\Ac}\colon T^1_kQ\to \lagk$ the map: 
\[
\pmb{\Ac}(v_1,\dots,v_k)=\left(\Ac(v_1),\dots,\Ac(v_k)\right).
\]
The map $\pmb{\Ac}$ is, in the terminology of \cite{LTM_LP}, a \emph{simple principal $k$-connection}.

\subsection{Polysymplectic reduction} \label{polyredsection}

Assume that $(N,\omega^a)$ is a polysymplectic manifold, and that $\Phi_g\colon N\to N$ is a polysymplectic action of a Lie group $G$ with momentum map 
\[
J=(J^1,\dots,J^k)\colon N\to \lag^*\times \stackrel{k}{\dots} \times \lag^*=\lagdk, 
\]
where $\lag$ is the Lie algebra of $G$. This means that $\Phi_g^*\omega^a=\omega^a$ (for each $a=1,\dots,k$) and that ${\xi_N}\lrcorner  \omega^a=d J^a_\xi$, with $J^a_\xi\colon N\to\R$ defined as $J^a_\xi(x)=\langle J^a(x),\xi\rangle$ for each $x\in N$. We also require equivariance of the momentum map w.r.t.\ the $k$-Coadjoint action (see Section~\ref{subsec:connections}), namely
\[
J\circ\Phi_g={\rm Coad}^k_g\circ J. 
\]
Let us denote by $G_\mu\subset G$ the isotropy group of $\mu=(\mu_1,\dots,\mu_k)$ under the ${\rm Coad}^k$ action and by $\lag_\mu$ its Lie algebra. One checks easily that the following holds:
\[
 G_\mu= G_{\mu_1}\cap\dots\cap G_{\mu_k},\qquad \lag_\mu=\lag_{\mu_1}\cap\dots\cap \lag_{\mu_k},
\] 
where $G_{\mu_a}$ is the isotropy group of $\mu_a$ under the usual coadjoint action ${\rm Coad}$ of $G$, and $\lag_{\mu_a}$ is the Lie algebra of $G_{\mu_a}$. In particular, $G_\mu$ is a subgroup of $G_{\mu_a}$ for each $a$.
 
For future reference, we need to specify two technical conditions for the momentum map:
\begin{enumerate}[label={(\roman*})]
 \item[(C1)] $\ker (T_xJ^a)=T_x(J^{-1}(\mu))+\ker\omega^a\mid_x+T_x(G_{\mu_a}\cdot x)$,\; for all $x\in J^{-1}(\mu)$, and for each $a$.
 \item[(C2)] $T_x(G_{\mu}\cdot x)=\cap_a \big[T_x(G_{\mu_a}\cdot x)+\ker\omega^a\mid_x\big]\cap T_x(J^{-1}(\mu))$,\; for all $x\in J^{-1}(\mu)$.
\end{enumerate}
\vspace{-.7pc}The notation $G_{\mu_a}\cdot x=\{g\cdot x\st g\in G_{\mu_a}\}$ stands for the orbit of $x$ under the $G_{\mu_a}$ action.
 
The polysymplectic reduction theorem (Theorem 3.17 in~\cite{Polyreduction}) is as follows:
\begin{theorem}\label{thm:polyred} Under the same conditions (C1) and (C2) above, let $\mu=(\mu_1,\dots,\mu_k)$ be a regular value of $J$ and assume that $G_\mu$ acts freely and properly on $J^{-1}(\mu)$. Then the reduced space $ J^{-1}(\mu)/G_\mu$ admits a unique polysymplectic structure $(\omega^1_\mu,\dots,\omega^k_\mu)$ satisfying $\pi_\mu^*\omega^a_\mu=i_\mu^*\omega^a$, where $\pi_\mu\colon J^{-1}(\mu)\to J^{-1}(\mu)/G_\mu$ is the canonical projection and $i_\mu\colon J^{-1}(\mu)\to N$ is the canonical inclusion.
\end{theorem}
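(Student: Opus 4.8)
The plan is to mimic the classical Marsden--Weinstein argument, but carried out form-by-form for each $\omega^a$ simultaneously, using conditions (C1)--(C2) to control the kernels. First I would establish the algebraic fact that $J^{-1}(\mu)$ is a submanifold (from regularity of $\mu$) on which $G_\mu$ acts freely and properly by hypothesis, so the quotient $J^{-1}(\mu)/G_\mu$ is a smooth manifold and $\pi_\mu$ a surjective submersion. Since $\pi_\mu$ is a surjective submersion, a two-form $\omega^a_\mu$ on the quotient satisfying $\pi_\mu^*\omega^a_\mu = i_\mu^*\omega^a$ is \emph{unique} if it exists; so the whole content is existence plus the verification that the resulting family is a polysymplectic structure (closedness and joint nondegeneracy).

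For existence of each $\omega^a_\mu$, the standard criterion is that $i_\mu^*\omega^a$ be \emph{basic} with respect to $\pi_\mu$, i.e.\ (a) $G_\mu$-invariant, and (b) killed by contraction with vectors tangent to the $G_\mu$-orbits. Invariance (a) follows from $\Phi_g^*\omega^a=\omega^a$ together with the fact that $G_\mu$ preserves $J^{-1}(\mu)$ (by ${\rm Coad}^k$-equivariance of $J$), so $i_\mu^*\omega^a$ is $G_\mu$-invariant. For (b), take $x\in J^{-1}(\mu)$, $\xi\in\lag_\mu$, and $w\in T_x(J^{-1}(\mu))$; then
\[
(i_\mu^*\omega^a)(\xi_N, w) = \omega^a(\xi_N, w) = -\, dJ^a_\xi(w) = -\, w(J^a_\xi),
\]
and since $J^a_\xi$ is constant equal to $\langle\mu_a,\xi\rangle$ on $J^{-1}(\mu)$ while $w$ is tangent to that level set, this vanishes. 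Hence $i_\mu^*\omega^a$ descends to a well-defined two-form $\omega^a_\mu$ on the quotient. Closedness is automatic: $d\omega^a_\mu$ pulls back to $d(i_\mu^*\omega^a)=i_\mu^*d\omega^a=0$, and $\pi_\mu^*$ is injective on forms.

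The part that genuinely uses the field-theoretic hypotheses (C1)--(C2) is joint nondegeneracy of $(\omega^1_\mu,\dots,\omega^k_\mu)$, and I expect this to be the main obstacle. Concretely, suppose $[v]\in T_{[x]}(J^{-1}(\mu)/G_\mu)$, represented by $v\in T_x(J^{-1}(\mu))$, satisfies $v\lrcorner\,\omega^a_\mu = 0$ for all $a$; I must show $[v]=0$, i.e.\ $v\in T_x(G_\mu\cdot x)$. Unwinding through $\pi_\mu$, the condition $v\lrcorner\,\omega^a_\mu=0$ means $\omega^a(v,w)=0$ for all $w\in T_x(J^{-1}(\mu))$, i.e.\ $v$ lies in the $\omega^a$-orthogonal complement of $T_x(J^{-1}(\mu))$ inside $T_xN$. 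The classical identity $(T_x J^{-1}(\mu))^{\perp_{\omega^a}} = \ker(T_xJ^a) + \ker\omega^a|_x$... here one has to be careful because $\omega^a$ may be degenerate; this is exactly where condition (C1) enters, rewriting $\ker(T_xJ^a)$ as $T_x(J^{-1}(\mu)) + \ker\omega^a|_x + T_x(G_{\mu_a}\cdot x)$. Feeding this in, $v\lrcorner\,\omega^a_\mu=0$ for all $a$ forces $v\in \bigcap_a\big[T_x(G_{\mu_a}\cdot x)+\ker\omega^a|_x\big]$, and since $v$ was already tangent to $J^{-1}(\mu)$, condition (C2) identifies this intersection with $T_x(G_\mu\cdot x)$ exactly. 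Thus $[v]=0$, giving joint nondegeneracy and completing the proof that $(\omega^1_\mu,\dots,\omega^k_\mu)$ is a polysymplectic structure. I would present the kernel computation as the technical heart, flagging that (C1)--(C2) are precisely engineered so that the several (possibly degenerate) two-forms behave, jointly, like the single symplectic form does in the Marsden--Weinstein case.
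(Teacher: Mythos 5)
Your first two steps are fine and are the standard ones: uniqueness from surjectivity of the submersion $\pi_\mu$, descent of $i_\mu^*\omega^a$ to the quotient because it is $G_\mu$-invariant and annihilates the $\pi_\mu$-vertical vectors $\xi_N$, $\xi\in\lag_\mu$ (your sign in $\xi_N\lrcorner\omega^a=dJ^a_\xi$ is off, but immaterial since the contraction vanishes anyway), and closedness by injectivity of $\pi_\mu^*$. Note that the paper itself offers no proof of this statement (it quotes Theorem 3.17 of the polysymplectic reduction paper), so your attempt can only be measured against the standard argument.

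The genuine gap is in the nondegeneracy step, which you yourself single out as the technical heart. The ``classical identity'' you invoke, $(T_xJ^{-1}(\mu))^{\perp_{\omega^a}}=\ker(T_xJ^a)+\ker\omega^a\mid_x$, is not the classical identity and is false in general: since $\ker\omega^a\mid_x\subseteq\ker(T_xJ^a)$ its right-hand side is just $\ker(T_xJ^a)$, and already for $k=1$ with $\omega$ symplectic it would assert $T_x(G\cdot x)=\ker(T_xJ)$. Worse, substituting (C1) into it only yields $v\in T_x(J^{-1}(\mu))+\ker\omega^a\mid_x+T_x(G_{\mu_a}\cdot x)$, which is vacuous because $v$ already lies in the first summand; so your ``feeding this in \dots forces $v\in\bigcap_a[T_x(G_{\mu_a}\cdot x)+\ker\omega^a\mid_x]$'' does not follow from anything you wrote. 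The workable route uses (C1) in the opposite direction: given $v\in T_x(J^{-1}(\mu))$ with $\omega^a(v,T_x(J^{-1}(\mu)))=0$, decompose an arbitrary $u\in\ker(T_xJ^a)$ by (C1); the $\ker\omega^a$ piece contributes nothing, and for the orbit piece $\eta_N(x)$, $\eta\in\lag_{\mu_a}$, one has $\omega^a(v,\eta_N(x))=-\langle T_xJ^a(v),\eta\rangle=0$ because $v\in\ker T_xJ\subseteq\ker T_xJ^a$. Hence $v\in(\ker T_xJ^a)^{\perp_{\omega^a}}$, and since $\ker T_xJ^a=(T_x(G\cdot x))^{\perp_{\omega^a}}$, the double-orthogonal formula for a (possibly degenerate) two-form gives $v\in T_x(G\cdot x)+\ker\omega^a\mid_x$. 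This is still not what (C2) accepts: writing $v=\zeta_N(x)+n_a$ with $n_a\in\ker\omega^a\mid_x$, one must use $T_xJ^a(v)=T_xJ^a(n_a)=0$ together with infinitesimal ${\rm Coad}$-equivariance of $J^a$, which identifies $T_xJ^a(\zeta_N(x))$ with $\pm\,{\rm ad}^*_\zeta\mu_a$, to conclude $\zeta\in\lag_{\mu_a}$ and hence $v\in T_x(G_{\mu_a}\cdot x)+\ker\omega^a\mid_x$ for every $a$; only then does (C2) deliver $v\in T_x(G_\mu\cdot x)$, i.e.\ $[v]=0$. This equivariance step --- the polysymplectic analogue of ``$\xi_N(x)$ tangent to the level set implies $\xi\in\lag_\mu$'', which is precisely what shrinks the full orbit $T_x(G\cdot x)$ to the isotropy orbits $T_x(G_{\mu_a}\cdot x)$ appearing in (C2) --- is entirely absent from your proposal, so as written the nondegeneracy claim is not established.
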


There is also a direct dynamical consequence of Theorem~\ref{thm:polyred} if we are given an invariant Hamiltonian and consider invariant solutions. 

\begin{definition}\label{def:invariantvf} A $k$-vector field is \emph{$G$-invariant} if satisfies:
\[
\Phi_g^{T^1_kP}\circ \pmb{X}=\pmb{X}\circ\Phi_g. 
\]
\end{definition}
In other words, each of the components $X_a$ are $G$-invariant vector fields in the usual sense, namely $T\Phi_g\circ X_a=X_a\circ\Phi_g$. When the group is clear from the context, we will simply say ``invariant'' $k$-vector field. The following is proved in~\cite{Polyreduction} (note that the result is stated here in a slightly different form):

\begin{theorem}\label{thm:polyred2} Under the same conditions of Theorem~\ref{thm:polyred}, let $H\colon N\to\R$ be a $G$-invariant Hamiltonian and denote by $H_\mu$ its reduction to $N_\mu\equiv J^{-1}(\mu)/G_\mu$. Let $\pmb{X}=(X_1,\dots,X_k)$ be a solution of~\eqref{eq:k-Sym} with Hamiltonian $H$ and assume that:
\begin{enumerate}[label={(\roman*})]
 \item $\pmb{X}$ is $G_\mu$-invariant.
 \item The restriction $X_a\mid_{J^{-1}(\mu)}$ is tangent to $J^{-1}(\mu)$.
\end{enumerate}
Then the projection $\overline{\pmb{X}}_{\mu}$ of $\pmb{X}\mid_{J^{-1}(\mu)}$ on $N_\mu$ is a solution of~\eqref{eq:k-Sym} with Hamiltonian $H_\mu$. 
\end{theorem}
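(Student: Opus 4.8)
The plan is to deduce this statement from Theorem~\ref{thm:polyred} by transporting the defining equation \eqref{eq:k-Sym} through the restriction-projection process $J^{-1}(\mu)\xleftarrow{i_\mu} N$, $J^{-1}(\mu)\xrightarrow{\pi_\mu} N_\mu$. The starting observation is that, under hypotheses (i) and (ii), the $k$-vector field $\pmb{X}$ admits a well-defined projection $\overline{\pmb{X}}_\mu=(\overline{X}_1,\dots,\overline{X}_k)$ on $N_\mu$: condition (ii) ensures each $X_a$ restricts to a vector field $X_a|_{J^{-1}(\mu)}$ on $J^{-1}(\mu)$, condition (i) ensures this restriction is $G_\mu$-invariant, and since $G_\mu$ acts freely and properly the quotient vector field $\overline{X}_a$ exists and satisfies $T\pi_\mu\circ X_a|_{J^{-1}(\mu)}=\overline{X}_a\circ\pi_\mu$. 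So the only real content is to verify that $\overline{\pmb{X}}_\mu$ solves $\sum_a \overline{X}_a\lrcorner\,\omega^a_\mu = dH_\mu$.

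First I would pull this target equation back along $\pi_\mu$ and use the relation $\pi_\mu^*\omega^a_\mu = i_\mu^*\omega^a$ from Theorem~\ref{thm:polyred} together with $\pi_\mu^*(dH_\mu)=d(\pi_\mu^*H_\mu)=d(i_\mu^*H)$, the last equality because $H$ is $G$-invariant so $H_\mu\circ\pi_\mu = H\circ i_\mu$. Thus it suffices to show
\[
\sum_a \big(X_a|_{J^{-1}(\mu)}\big)\lrcorner\, (i_\mu^*\omega^a) \;=\; i_\mu^*(dH) \qquad\text{on } J^{-1}(\mu).
\]
Here I would use the standard pull-back identity for contraction: since $X_a|_{J^{-1}(\mu)}$ is $i_\mu$-related to $X_a$ (that is exactly condition (ii)), one has $\big(X_a|_{J^{-1}(\mu)}\big)\lrcorner\,(i_\mu^*\omega^a) = i_\mu^*\big(X_a\lrcorner\,\omega^a\big)$. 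Summing over $a$ and invoking the hypothesis that $\pmb{X}$ solves \eqref{eq:k-Sym}, the left-hand side becomes $i_\mu^*\!\left(\sum_a X_a\lrcorner\,\omega^a\right)=i_\mu^*(dH)=d(i_\mu^*H)$, which is exactly the right-hand side. This establishes $\pi_\mu^*\!\left(\sum_a\overline{X}_a\lrcorner\,\omega^a_\mu\right)=\pi_\mu^*(dH_\mu)$.

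The final step is to remove the $\pi_\mu^*$: since $\pi_\mu$ is a surjective submersion, $\pi_\mu^*$ is injective on forms, so the two one-forms $\sum_a\overline{X}_a\lrcorner\,\omega^a_\mu$ and $dH_\mu$ on $N_\mu$ must already coincide. One small point of care is that $\overline{X}_a\lrcorner\,\omega^a_\mu$ genuinely pulls back to $\big(X_a|_{J^{-1}(\mu)}\big)\lrcorner\,(i_\mu^*\omega^a)$ — this again follows from $X_a|_{J^{-1}(\mu)}$ being $\pi_\mu$-related to $\overline{X}_a$, i.e.\ the same pull-back-of-contraction identity applied to $\pi_\mu$ instead of $i_\mu$. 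I expect the only genuinely delicate bookkeeping to be keeping the two relatedness facts straight (one for $i_\mu$, coming from hypothesis (ii); one for $\pi_\mu$, coming from hypothesis (i) plus freeness), and confirming that $\omega^a_\mu$ and $H_\mu$ are the correct reduced objects as supplied by Theorem~\ref{thm:polyred} and by $G$-invariance of $H$; there is no analytic obstacle once these are in place. For completeness one may also remark that if $\pmb{X}$ happens to be of SOPDE type or integrable, those properties are not needed here — the statement is purely about solutions of \eqref{eq:k-Sym}, and integrability is addressed separately in Section~\ref{recsec}.
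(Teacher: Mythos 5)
Your argument is correct. Note that the paper does not actually prove this statement: it is quoted from Theorem~4.4 of the cited reference on polysymplectic reduction, with Remark~\ref{remark1} observing that $G_\mu$-invariance suffices in place of $G$-invariance. Your self-contained proof is the standard dynamic-reduction argument and matches what that reference does: projectability of $\pmb{X}|_{J^{-1}(\mu)}$ from (i)+(ii), the two relatedness identities $\bigl(X_a|_{J^{-1}(\mu)}\bigr)\lrcorner\, i_\mu^*\omega^a = i_\mu^*(X_a\lrcorner\,\omega^a)$ and $\pi_\mu^*\bigl(\overline{X}_a\lrcorner\,\omega^a_\mu\bigr)=\bigl(X_a|_{J^{-1}(\mu)}\bigr)\lrcorner\,\pi_\mu^*\omega^a_\mu$, the defining relations $\pi_\mu^*\omega^a_\mu=i_\mu^*\omega^a$ and $H_\mu\circ\pi_\mu=H\circ i_\mu$, and injectivity of $\pi_\mu^*$ on forms since $\pi_\mu$ is a surjective submersion; as a bonus, your proof makes explicit why only $G_\mu$-invariance of $\pmb{X}$ is needed, which is exactly the content of Remark~\ref{remark1}.
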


We will write 
\[
\pmb{X}_{\mu}\equiv \pmb{X}\mid_{J^{-1}(\mu)}. 
\]
A $k$-vector field $\pmb{X}$ such that $X_a\mid_{J^{-1}(\mu)}$ is tangent to $J^{-1}(\mu)$ (as in condition $(ii)$ of Theorem~\ref{thm:polyred2}) is said to be tangent to $J^{-1}(\mu)$. This means that $\pmb{X}_{\mu}$ can be considered as a $k$-vector field on $J^{-1}(\mu)$. For such a $k$-vector field, one identifies integral sections of $\pmb{X}$ through points in $J^{-1}(\mu)$ with integral sections of $\pmb{X}\mid_{J^{-1}(\mu)}$. 

\begin{remark}\label{remark1}  Theorem 4.4 in \cite{Polyreduction} (which corresponds to Theorem~\ref{thm:polyred2} above) requires the stronger condition that $\pmb{X}$ is $G$-invariant (rather than only $G_\mu$-invariant). But, reading through the proof of Theorem 4.4, it is clear that $G_\mu$-invariance suffices to reduce the $k$-vector field $\pmb{X}$. 
\end{remark}

We draw the attention of the reader to the fact that, in the present context, Noether's Theorem reads \cite{Symmetriesk}
\begin{equation}\label{eq:noetherfields}
\sum_a X_a (J^a_\xi)=0 
\end{equation}
for each $\xi\in\lag$ (strictly speaking, the version of Noether's Theorem discussed in \cite{Symmetriesk} applies to the case of the cotangent bundle of $k^1$-covelocities, but the result can be easily extended to a general polysymplectic manifold). This is the divergence property that we referred to in the introduction. Condition $(ii)$ in Theorem~\ref{thm:polyred2} is much more restrictive than~\eqref{eq:noetherfields}: it requires, in particular, that each of the terms $X_a(J^b_\xi)$ vanish \emph{separately}. This means, in particular, that only very specific solutions $\pmb{X}$ can be reduced to solutions of the reduced problems. However, it has been show in~\cite{Polyreduction} that a large class of examples (constructed from an invariant metric on $\lag$) contain solutions which fit within this category.

\subsection{Reconstruction} \label{recsec}

We will now discuss a method to lift an integral section of the reduced $k$-vector field $\overline{\pmb{X}}_{\mu}$ to an integral section of the original vector field $\pmb{X}$, when possible. 

To simplify the notation, we will describe the situation as follows. We assume that a manifold $P$ has an action $\Phi_g\colon P\to P$ of a Lie group $G$, and we write $\pi_P\colon P\to P/G$ for the corresponding principal bundle. At the end of this section we will take $P=J^{-1}(\mu)$ and $G=G_\mu$ to relate the results to the polysymplectic reduction case. Since the basic setting here is a principal bundle $P\to P/G$, which is precisely that of the Lagrange-Poincaré reduction, many of our results can be found independently in~\cite{LTM_LP}.

We consider an invariant $k$-vector field $\pmb{X}$ on $P$. It  defines, uniquely, a reduced $k$-vector field $\overline{\pmb{X}}$ on $P/G$ by the following relation:
\[
T\pi_P\circ X_a=\overline{X}_a\circ\pi_P.
\]

We first address the relation between the integrability of the (invariant) $k$-vector field $\pmb{X}$ and of the corresponding reduced vector field $\overline{\pmb{X}}$. It is clear that the integrability of $\pmb{X}$ implies that of $\overline{\pmb{X}}$: indeed, it suffices to observe that, in view of the integrability of $\pmb{X}$, we have
\[
[\overline{X}_a,\overline{X}_b]=[T\pi_P(X_a),T\pi_P(X_b)]=T\pi_P\left([X_a,X_b]\right)=0.
\]

We now seek to find conditions for the converse. For simplicity and concreteness in the proofs, one may assume that a principal connection $\Ac$ on the bundle $\pi_P\colon P\to P/G$ has been chosen. While this is not necessary yet (see~\cite{LTM_LP}), we will anyhow have to pick a principal connection on $P\to P/G$ later for the effective reconstruction of solutions. Each of the vector fields $X_a$ decomposes then as 
\[
X_a={\rm Hor}(X_a)+{\rm Ver}(X_a) = \overline{X}_a^h+{\rm Ver}(X_a),
\]
where $(\cdot)^h\colon \pi_P^*T(P/G)\to TP$ is the horizontal lift (we will omit the point where the lift occurs when is clear from the context). Let us assume that $\overline{\pmb{X}}$ is integrable. The bracket $[X_a,X_b]$ is decomposed as follows:
\begin{align}\label{eq:bracket}
[X_a,X_b]&={\rm Hor}([X_a,X_b])+ {\rm Ver}([X_a,X_b])= [\overline{X}_a,\overline{X}_b]^h+{\rm Ver}([X_a,X_b])\nonumber \\
&={\rm Ver}([X_a,X_b]).
\end{align}
We observe that $[X_a,X_b]$ is vertical (this condition does not depend on the chosen connection). In other words, we have the following:
\begin{lemma}\label{lem:integrability}
Assume that $\overline{\pmb{X}}$ is integrable. Then $\pmb{X}$ is integrable if and only if the vertical part of the Lie brackets $[X_a,X_b]$ vanishes. 
\end{lemma}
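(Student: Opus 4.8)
The statement to prove is Lemma~\ref{lem:integrability}: assuming $\overline{\pmb{X}}$ is integrable, $\pmb{X}$ is integrable iff $\operatorname{Ver}([X_a,X_b])=0$ for all $a,b$. Recall from the discussion just before Definition~\ref{def:integrable} that a $k$-vector field is integrable precisely when the family of vector fields $X_1,\dots,X_k$ can be simultaneously rectified, and that this holds iff $[X_a,X_b]=0$ for all $a,b$. So the statement to establish is simply the equivalence
\[
[X_a,X_b]=0 \text{ for all } a,b \quad\Longleftrightarrow\quad \operatorname{Ver}([X_a,X_b])=0 \text{ for all } a,b,
\]
\emph{under the standing hypothesis} that $\overline{\pmb{X}}$ is integrable (equivalently $[\overline{X}_a,\overline{X}_b]=0$ for all $a,b$).

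\textbf{Key steps.}
The forward implication is trivial: if $[X_a,X_b]=0$ then in particular its vertical part vanishes, and no hypothesis on $\overline{\pmb{X}}$ is needed. For the converse, the essential point is already recorded in the computation~\eqref{eq:bracket} in the excerpt: decompose $[X_a,X_b]=\operatorname{Hor}([X_a,X_b])+\operatorname{Ver}([X_a,X_b])$ with respect to the chosen principal connection $\Ac$. Since $\pi_P$-related vector fields have $\pi_P$-related brackets, $T\pi_P\big([X_a,X_b]\big)=[\overline{X}_a,\overline{X}_b]\circ\pi_P$; using the integrability of $\overline{\pmb{X}}$, this vanishes, so $[X_a,X_b]$ is $\pi_P$-vertical, i.e.\ $\operatorname{Hor}([X_a,X_b])=0$ and $[X_a,X_b]=\operatorname{Ver}([X_a,X_b])$. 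Therefore the hypothesis $\operatorname{Ver}([X_a,X_b])=0$ forces $[X_a,X_b]=0$ for all $a,b$, which by the rectification criterion is exactly integrability of $\pmb{X}$. One should note, as the excerpt already remarks, that the conclusion ``$[X_a,X_b]$ is vertical'' is connection-independent (verticality means $T\pi_P$-image zero), even though the horizontal/vertical splitting used to phrase the argument depends on $\Ac$; hence the statement of the lemma is intrinsic.

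\textbf{Main obstacle.}
There is essentially no obstacle here: the lemma is a direct corollary of~\eqref{eq:bracket} together with the standard rectification criterion for a commuting family of vector fields. The only point requiring a word of care is the justification that $[X_a,X_b]$ is $\pi_P$-vertical — this rests on the naturality of the Lie bracket under $\pi_P$-relatedness and on $[\overline{X}_a,\overline{X}_b]=0$, both of which are immediate. If anything, the ``hardest'' part is purely expository: making clear that we are invoking the integrability-equals-vanishing-brackets fact for $\pmb{X}$ (stated in the excerpt) and that the role of the connection is merely to organize the argument, not to enter the conclusion.
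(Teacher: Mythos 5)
Your proposal is correct and follows essentially the same route as the paper: the paper's argument is precisely the computation in~\eqref{eq:bracket}, namely that integrability of $\overline{\pmb{X}}$ forces ${\rm Hor}([X_a,X_b])=[\overline{X}_a,\overline{X}_b]^h=0$, so that $[X_a,X_b]$ reduces to its vertical part, combined with the criterion that integrability of $\pmb{X}$ means $[X_a,X_b]=0$ for all $a,b$. Your added remarks (the trivial forward direction via $\pi_P$-relatedness and the connection-independence of verticality) are consistent with, and implicit in, the paper's treatment.
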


Given an integral section $\overline{\phi}\colon\R^k\to P/G$ of $\overline{\pmb{X}}$ at $[p]=\pi_P(p)$, we can construct the pull-back bundle $\overline{\phi}^*P$: 

\begin{equation*}
\begin{tikzcd}
\R^k\times P\arrow[ddr,"\tilde p_1"',bend right=30]\arrow[drr,"\tilde p_2",bend left=30]&&[3em] \\
&\overline{\phi}^*P\arrow[r,"p_2"]\arrow[d,"p_1"']\arrow[ul]& P\arrow[d,"\pi_P"] \\[2em]
&\R^k \arrow[r, "\overline{\phi}"' ]& P/G
\end{tikzcd}\hspace{2em}  
\begin{tikzcd}
&(t,p)\arrow[r,"p_2"]\arrow[d,"p_1"']&[2em] p\arrow[d,"\pi_P"] \\[2em]
&t \arrow[r, "\overline{\phi}"' ]& {[p]}
\end{tikzcd}
\end{equation*}
Recall that $\overline{\phi}^*P\subset \R^k\times P$ is the submanifold
\[
\overline{\phi}^*P=\{(t,p)\in \R^k\times P\st \overline{\phi}(t)=\pi_P(p)\}. 
\]
It is a $G$-bundle over $\R^k$ with action $g\cdot (t,p)=(t,g\cdot p)$. A tangent vector in $T(\overline{\phi}^*P)$ at $(t,p)$ is  a pair $(v_t,v_p)$ with $T\overline{\phi}(v_t)=T\pi_P(v_p)$. The $k$-vector field $\pmb{X}$ defines a distribution $\mathcal{D}_{\pmb{X}}\subset TP$ of dimension $k$:
\[
\mathcal{D}_{\pmb{X}}=\langle X_1,\dots,X_k\rangle. 
\]
We consider the distribution $\mathcal{H}(\pmb{X},\overline{\phi})$ in $\overline{\phi}^*P$ obtained as the inverse image via $p_2$ of $\mathcal{D}_{\pmb{X}}$:
\[
\mathcal{H}(\pmb{X},\overline{\phi})=(Tp_2)^{-1}(\mathcal{D}_{\pmb{X}}),
\]
i.e. at a point $(t,p)\in \overline{\phi}^*P$ we have
\[
\left.\mathcal{H}(\pmb{X},\overline{\phi})\right|_{(t,p)}=(T_{(t,p)}p_2)^{-1}\big(\left.\mathcal{D}_{\pmb{X}}\right|_{p}\big).
\]
Let $Z_1=(V_1,X_1)$ be a vector field on $\overline{\phi}^*P\subset \R^k\times P$ such that $Tp_2(Z_1)=X_1$. Then $T\overline{\phi}(V_1)=T\pi_P(X_1)=\overline{X}_1$, and therefore $V_1=\partial/\partial t^1$. Similarly one finds $Z_2,\dots,Z_k$. Thus:
\begin{equation*}
\mathcal{H}(\pmb{X},\overline{\phi}) =\langle Z_1,\dots,Z_k\rangle
=\left\langle \fpd{}{t^1}+X_1,\dots,\fpd{}{t^k}+X_k\right\rangle.
\end{equation*}
Note that each of the vector fields 
\[
\left.Z_a\right|_{(t,p)}=\left.\fpd{}{t^a}\right|_t+\left.X_a\right|_p
\]
at a point $(t,p)\in \overline{\phi}^*P$ is interpreted as a vector field on $\overline{\phi}^*P$ as follows: $Z_a$ defines a vector field on $\R^k\times P$ tangent to $\overline{\phi}^*P$, and therefore its restriction to $\overline{\phi}^*P$ defines a vector field on $\overline{\phi}^*P$. We will not make a notational distinction between $Z_a\in \mathfrak{X}(\R^k\times P)$ and its restriction $Z_a\in \mathfrak{X}(\overline{\phi}^*P)$.

\begin{lemma} The distribution $\mathcal{H}(\pmb{X},\overline{\phi})$ defines a principal connection on $p_1\colon \overline{\phi}^*P\to\R^k$.
\end{lemma}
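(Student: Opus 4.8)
The plan is to verify the two defining properties of a principal connection directly from the explicit description $\mathcal{H}(\pmb{X},\overline{\phi})=\langle Z_1,\dots,Z_k\rangle$ with $Z_a|_{(t,p)}=\partial/\partial t^a|_t+X_a|_p$. Recall that $\overline{\phi}^*P\to\R^k$ is a principal $G$-bundle with action $g\cdot(t,p)=(t,g\cdot p)$, whose infinitesimal generators are $\xi_{\overline{\phi}^*P}|_{(t,p)}=(0,\xi_P|_p)$ for $\xi\in\lag$. So I must check that (a) $\mathcal{H}(\pmb{X},\overline{\phi})$ is a genuine complement to the vertical bundle, i.e.\ $T_{(t,p)}(\overline{\phi}^*P)=\mathcal{H}(\pmb{X},\overline{\phi})|_{(t,p)}\oplus \operatorname{Ver}_{(t,p)}$ fiberwise, and (b) $\mathcal{H}(\pmb{X},\overline{\phi})$ is $G$-invariant: $T\Phi_g^{\overline{\phi}^*P}(\mathcal{H}|_{(t,p)})=\mathcal{H}|_{(t,g\cdot p)}$.

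For the complement property, first note $\dim(\overline{\phi}^*P)=k+\dim G$ and the vertical distribution has rank $\dim G$, so it suffices to show $\mathcal{H}(\pmb{X},\overline{\phi})$ has rank $k$ and meets the vertical bundle trivially. Rank $k$: the vectors $Z_a$ project under $Tp_1$ to $\partial/\partial t^a$, which are linearly independent, so the $Z_a$ are independent and $\mathcal{H}$ has rank $k$ (this also uses that $\pmb{X}$ is a genuine $k$-vector field, though independence of the $X_a$ is not needed — the $t^a$-components already do the job). Transversality to the vertical: a vertical vector has the form $(0,\xi_P|_p)$, i.e.\ its $Tp_1$-image is zero; but any nonzero element of $\mathcal{H}|_{(t,p)}$, being a combination $\sum_a c^a Z_a$, has $Tp_1$-image $\sum_a c^a\,\partial/\partial t^a\neq 0$. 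Hence $\mathcal{H}\cap\operatorname{Ver}=\{0\}$ and the direct-sum decomposition follows by dimension count.

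For $G$-invariance, I would compute $T\Phi_g^{\overline{\phi}^*P}(Z_a|_{(t,p)})$. Since the action on $\overline{\phi}^*P$ is $(t,p)\mapsto(t,g\cdot p)$, its tangent map sends $(v_t,v_p)\mapsto(v_t,T\Phi_g^{P}(v_p))$. Thus $T\Phi_g^{\overline{\phi}^*P}(\partial/\partial t^a|_t+X_a|_p)=\partial/\partial t^a|_t+T\Phi_g^{P}(X_a|_p)=\partial/\partial t^a|_t+X_a|_{g\cdot p}$, where the last equality is exactly the $G$-invariance of the $k$-vector field $\pmb{X}$ (Definition~\ref{def:invariantvf}, i.e.\ $T\Phi_g\circ X_a=X_a\circ\Phi_g$). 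This is precisely $Z_a|_{(t,g\cdot p)}$, so $T\Phi_g^{\overline{\phi}^*P}$ maps the spanning frame of $\mathcal{H}|_{(t,p)}$ to the spanning frame of $\mathcal{H}|_{(t,g\cdot p)}$, giving invariance of the distribution. Together with (a) this shows $\mathcal{H}(\pmb{X},\overline{\phi})$ is the horizontal distribution of a principal connection on $p_1\colon\overline{\phi}^*P\to\R^k$.

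The only delicate point — and the one I would state carefully rather than wave at — is that $\mathcal{H}(\pmb{X},\overline{\phi})=(Tp_2)^{-1}(\mathcal{D}_{\pmb X})$ really is a smooth rank-$k$ distribution on all of $\overline{\phi}^*P$ and is globally framed by the $Z_a$; this rests on the fact (already established in the text preceding the lemma) that a vector field $Z_a$ on $\R^k\times P$ with $Tp_2(Z_a)=X_a$ is forced to have $\R^k$-component $\partial/\partial t^a$ because $T\overline{\phi}(V_a)=T\pi_P(X_a)=\overline{X}_a=T\overline{\phi}(\partial/\partial t^a)$ and $\overline{\phi}$ is an immersion into $P/G$ (an integral section, hence $T\overline{\phi}$ is injective). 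Everything else is a routine dimension count and an application of the invariance hypothesis on $\pmb{X}$; no integrability of $\pmb{X}$ or of $\overline{\pmb{X}}$ is needed for this lemma.
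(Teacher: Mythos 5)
Your proof is correct and follows essentially the same route as the paper, whose proof simply asserts that $\mathcal{H}(\pmb{X},\overline{\phi})$ is a rank-$k$ distribution complementary to the vertical bundle of $p_1$ and that it is invariant because each $X_a$ is; you merely fill in the dimension count, the transversality argument via $Tp_1$, and the explicit equivariance computation. One caveat on your closing remark: an integral section need not be an immersion (injectivity of $T\overline{\phi}$ is equivalent to pointwise linear independence of the $\overline{X}_a$ along $\overline{\phi}$, which can fail), so that parenthetical justification for identifying $(Tp_2)^{-1}(\mathcal{D}_{\pmb{X}})$ with $\langle Z_1,\dots,Z_k\rangle$ is not valid in general; however, this identification is also taken for granted in the paper's text preceding the lemma, and the proof of the lemma itself, which works with the span of the $Z_a$, does not depend on it.
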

\begin{proof} Clearly, $\mathcal{H}(\pmb{X},\overline{\phi})$ defines a distribution of dimension $k$ in $\overline{\phi}^*P$ which is complementary to the vertical distribution of the bundle $p_1$. It is moreover an invariant distribution since each of the $X_a$ is.
\end{proof}

The horizontal and vertical parts of a vector field $Y\in\mathfrak{X}(\overline{\phi}^*P)$ w.r.t.\ the previous connection will be denoted by
\[
Y= {\rm Ver}_{\mathcal{H}(\pmb{X},\overline{\phi})}(Y)+{\rm Hor}_{\mathcal{H}(\pmb{X},\overline{\phi})}(Y)
\]
to distinguish them from those associated to $\Ac$. Take $0\in\mathfrak{X}(\R^k)$ and $X\in\mathfrak{X}(P)$ vector fields with $0+X\in T(\overline{\phi}^*P)$, then $T\pi_P(X)=0$ or, in other words, $X={\rm Ver}(X)$. Since vertical vector fields (w.r.t.\ $p_1$) on $T(\overline{\phi}^*P)$ are precisely of the form $0+X$,  a vertical vector in $\overline{\phi}^*P$ can be though of as a vertical vector on $P$. We then have:

\begin{proposition}\label{pro:reconstruction} Let $\pmb{X}$ be an invariant $k$-vector field on $P$. Then $\pmb{X}$ is integrable if, and only if the following two conditions are satisfied: 
\begin{enumerate}[label={(\roman*})]
 \item The reduced vector field $\overline{\pmb{X}}$ is integrable.
 \item For each integral section $\overline{\phi}\colon\R^k\to P/G$ of $\overline{\pmb{X}}$ the connection $\mathcal{H}(\pmb{X},\overline{\phi})$ is flat.
\end{enumerate}
\end{proposition}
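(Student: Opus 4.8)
The plan is to establish the equivalence by connecting the integrability of $\pmb{X}$ with the flatness of the connections $\mathcal{H}(\pmb{X},\overline{\phi})$, using the observation in~\eqref{eq:bracket} that $[X_a,X_b]$ is always vertical once $\overline{\pmb X}$ is integrable. For the forward implication, suppose $\pmb X$ is integrable. Then by the discussion preceding Lemma~\ref{lem:integrability}, the reduced $k$-vector field $\overline{\pmb X}$ is integrable, giving (i). For (ii), fix an integral section $\overline\phi$ of $\overline{\pmb X}$ and consider the vector fields $Z_a=\partial/\partial t^a+X_a$ spanning $\mathcal{H}(\pmb{X},\overline\phi)$ on $\overline\phi^*P$. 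Since $\partial/\partial t^a$ and $X_b$ live on different factors of $\R^k\times P$, their bracket vanishes, so $[Z_a,Z_b]=[X_a,X_b]$ (restricted to $\overline\phi^*P$). By hypothesis $[X_a,X_b]=0$, hence $[Z_a,Z_b]=0$; the horizontal distribution is spanned by commuting vector fields, so $\mathcal{H}(\pmb X,\overline\phi)$ is flat. (Equivalently, the curvature of $\mathcal{H}(\pmb X,\overline\phi)$ is $-\mathrm{Ver}_{\mathcal H(\pmb X,\overline\phi)}([Z_a,Z_b])$ in the sense of the curvature formula recalled in Section~\ref{subsec:connections}, and this vanishes.)

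For the converse, assume (i) and (ii). By Lemma~\ref{lem:integrability} it suffices to show that the vertical part $\mathrm{Ver}([X_a,X_b])$ vanishes for all $a,b$. Fix a point $p\in P$, let $[p]=\pi_P(p)$, and pick an integral section $\overline\phi\colon\R^k\to P/G$ of $\overline{\pmb X}$ through $[p]$, which exists by (i). Then $(0,p)\in\overline\phi^*P$, and the point $(0,p)$ together with the vector fields $Z_a$ gives a local frame of $\mathcal{H}(\pmb X,\overline\phi)$ near $(0,p)$. Compute $[Z_a,Z_b]=[X_a,X_b]$ as above; this is a vertical vector field on $\overline\phi^*P$ (it projects to $[\overline X_a,\overline X_b]=0$ under $Tp_1$ because $T\pi_P([X_a,X_b])=[\overline X_a,\overline X_b]=0$), and by the remark just before the statement of the proposition a vertical vector on $\overline\phi^*P$ is the same thing as a vertical vector on $P$. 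Now flatness of $\mathcal{H}(\pmb X,\overline\phi)$ says exactly that $[Z_a,Z_b]$ has no vertical component w.r.t.\ that connection, i.e.\ $[Z_a,Z_b]=0$; unwinding the identification, $[X_a,X_b]$ vanishes at $p$. Since $p$ was arbitrary, $[X_a,X_b]\equiv 0$, and so $\pmb X$ is integrable by Lemma~\ref{lem:integrability}.

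The main subtlety — and the step that needs to be written carefully — is the bookkeeping around the two connections on $\overline\phi^*P$: the vertical distribution of $p_1\colon\overline\phi^*P\to\R^k$ is intrinsic, but "horizontal part" is taken w.r.t.\ $\mathcal{H}(\pmb X,\overline\phi)$, and one must check that $[Z_a,Z_b]$ being vertical (a consequence of $\overline{\pmb X}$ integrable) together with $\mathcal{H}(\pmb X,\overline\phi)$ flat forces $[Z_a,Z_b]=0$ outright, rather than merely being horizontal. This is immediate since a vector cannot be both vertical and horizontal unless it is zero, but it is worth spelling out that flatness of $\mathcal{H}(\pmb X,\overline\phi)$ is equivalent to $[Z_a,Z_b]$ being $\mathcal{H}$-horizontal for the spanning fields $Z_a$, and then intersecting with verticality. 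I would also note explicitly, as done in~\eqref{eq:bracket}, that the verticality of $[X_a,X_b]$ does not depend on the auxiliary choice of $\Ac$, so the argument is genuinely about the canonically defined objects; the connection $\Ac$ only serves to make the horizontal/vertical split concrete in the proof.
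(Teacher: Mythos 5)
Your argument is correct and follows essentially the same route as the paper's proof: you compute $[Z_a,Z_b]=0+[X_a,X_b]$, identify the curvature of $\mathcal{H}(\pmb{X},\overline{\phi})$ with $-{\rm Ver}([X_a,X_b])$, and combine this with Lemma~\ref{lem:integrability} and~\eqref{eq:bracket}. Your version merely spells out the two implications separately (in particular the pointwise use of an integral section through each $[p]$ in the converse), which the paper compresses into one curvature computation.
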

\begin{proof} If $Z_a=\partial/\partial t^a+ X_a$ and $Z_b=\partial/\partial t^b+ X_b$ are two horizontal vector fields on $\overline{\phi}^*P$,
their bracket reads
\[
[Z_a,Z_b] =0+ [X_a,X_b]
\]
and the curvature is
\[
K_{\pmb{X},\overline{\phi}}(Z_a,Z_b)=- {\rm Ver}_{\mathcal{H}(\pmb{X},\overline{\phi})}(0+ [X_a,X_b])=-{\rm Ver}([X_a,X_b]).
\]
Note that in the preceding expression we have abused slightly the notation to identify the vertical subbundles of $T(\overline{\phi}^*P)$ and $TP$. Thus, if the curvature vanishes for each pair $Z_a,Z_b$, one obtains precisely the condition for the integrability of  $\pmb{X}$ under the assumption that $\overline{\pmb{X}}$ is integrable, see~\eqref{eq:bracket}. We conclude that if $\pmb{X}$ is integrable then both (i) and (ii) hold, and conversely.
\end{proof}

The same result can be found in \S 3.2 of~\cite{LTM_LP}, where the connection on $\overline{\phi}^*P$ is defined in an alternative way using the connection associated to $\pmb{X}$. Note that one can also obtain this connection as the restriction of   
the distribution $\mathcal{H}$ defined by
\begin{equation*}
\mathcal{H}(\pmb{X})=\left\langle \fpd{}{t^1}+ X_1^i\fpd{}{q^i},\dots, \fpd{}{t^k}+ X_k^i\fpd{}{q^i}\right\rangle\subset T(\R^k\times Q)  
\end{equation*}
to the submanifold $\overline{\phi}^*P\subset \R^k\times P$. 

The proof of Propositon~\ref{pro:reconstruction} shows the following: if $[p]\in P/G$ is a point and we take an integral section  $\overline{\phi}$ of $\overline{\pmb{X}}$ through $[p]$, then given any point $p\in P$ with $\pi_P(p)=[p]$ there exists an integral section $\phi$ of $\pmb{X}$ through $p\in P$. It also clear that $\pi_P\circ\phi=\overline{\phi}$ since
\[
T(\pi_P\circ\phi)(\partial/ \partial t^a)=T\pi\circ X_a=\overline{X}_a=T\overline{\phi} (\partial/ \partial t^a),
\]
and both $\pi_P\circ\phi$ and $\overline{\phi}$ agree on $0\in\R^k$ (with value $[p]$). The proof, however,  does not give a constructive procedure to find such $\phi$.  This procedure, namely the effective reconstruction of such a $\phi$ starting from a reduced integral section $\overline{\phi}$,  has been described already in~\cite{LTM_LP} in the polysymplectic formalism, and we will only recollect here the main results without proofs. 

We will denote by $\overline{\pmb{X}}^h$ the horizontal lift of $\pmb{X}$ w.r.t.\ $\pmb{\Ac}$, which is by definition the $k$-vector field on $P$ with components
\[
\overline{\pmb{X}}^h=(\overline{X}_1^h,\dots,\overline{X}_k^h), 
\]
where the notation $(\cdot)^h$ for the vector fields in the brackets is the usual horizontal lift w.r.t. $\Ac$. By construction, $\overline{\pmb{X}}^h$ projects onto $\overline{\pmb{X}}$ and we can write just like in~\eqref{eq:bracket} 
\[
[\overline{X}_a^h,\overline{X}_b^h]=[\overline{X}_a,\overline{X}_b]^h-K(X_a,X_b),
\]
where we have used that ${\rm hor}(X_a)=\overline{X}_a^h$. Clearly, $\overline{\pmb{X}}^h$ is integrable iff both of the following conditions are satisfied: (i) $\overline{\pmb{X}}$ is integrable, and (ii) the curvature $K(X_a,X_b)$ vanishes for each $a,b=1,\dots k$. 

\begin{definition} An integral section 
\[
\overline{\phi}_h\colon \R^k\to P 
\]
of $\overline{\pmb{X}}^h$ is a \emph{horizontal lift of $\overline{\phi}$} if $\pi_P\circ \overline{\phi}_h=\overline{\phi}$.

\end{definition}
When $\overline{\pmb{X}}^h$ is integrable such an horizontal lift always exists for the same reasons that $\phi$ does (see the argument after the proof of Propositon~\ref{pro:reconstruction}). To determine the desired integral section $\phi\colon \R^k\to P$ of $\pmb{X}$ one looks for the map $g\colon \R^k\to G$ such that  
\begin{equation}\label{eq:phase}
\phi(t)=g(t)\cdot \overline{\phi}_h(t). 
\end{equation}
If one uses~\eqref{eq:phase} in the defining relation for the integral section $\phi$, i.e. 
\[
\phi^{(1)}=\pmb{X}\circ\phi, 
\]
one arrives at the following \emph{reconstruction equation}~\cite{LTM_LP}:
\begin{equation}\label{eq:reconstruction}
g^{-1}\cdot g^{(1)}=\pmb{\Ac}(\pmb{X}\circ \overline{\phi}_h).
\end{equation}
The previous equation compares directly to the well-known case of mechanics. Indeed, when $k=1$, the expression~\eqref{eq:reconstruction} reduces to 
\begin{equation*}
g^{-1}\cdot \dot g=\Ac(X_H\circ \overline{\gamma}_h), 
\end{equation*}
which is the reconstruction equation in the case of reduction in Hamiltonian mechanics (see for example~\cite{phases} for details in the context of symplectic reduction). Here $X_H$ the Hamiltonian vector field and $\overline{\gamma}_h$ is the horizontal lift of the reduced solution $\overline{\gamma}$.   

When Proposition~\ref{pro:reconstruction} is applied to the case $P=J^{-1}(\mu)$ of interest, we have the following:

\begin{theorem}\label{thm:reconstruction} Under the same conditions as Theorem~\ref{thm:polyred2}, let $\overline{\phi}_\mu\colon \R^k\to N_\mu$ be an integral section of $\overline{\pmb{X}}_{\mu}$ such that the connection $\mathcal{H}(\pmb{X}_\mu,\overline{\phi}_\mu)$ is flat. Then there exists an integral section $\phi_\mu\colon \R^k\to J^{-1}(\mu)$ of $\overline{\pmb{X}}_{\mu}$ with $\pi_\mu\circ\phi_\mu=\overline{\phi}_\mu$.

Moreover, if for a given principal connection $\Ac$ the horizontal lift $\overline{\pmb{X}}_{\mu}^h$ is integrable, then such an integral section $\phi_\mu$ can be computed as 
\[
\phi_{\mu}(t)=g(t)\cdot(\overline{\phi}_\mu)_h(t), 
\]
where $g(t)$ satisfies the reconstruction equation~\eqref{eq:reconstruction}.
\end{theorem}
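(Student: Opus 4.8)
The plan is to derive Theorem~\ref{thm:reconstruction} as a direct application of the general machinery of Section~\ref{recsec} to the particular principal bundle $\pi_\mu\colon J^{-1}(\mu)\to N_\mu$, exactly as anticipated in the remark that ``at the end of this section we will take $P=J^{-1}(\mu)$ and $G=G_\mu$''. First I would observe that, under the hypotheses of Theorem~\ref{thm:polyred2}, the $k$-vector field $\pmb{X}_\mu=\pmb{X}\mid_{J^{-1}(\mu)}$ is a well-defined $G_\mu$-invariant $k$-vector field on $J^{-1}(\mu)$: invariance is condition $(i)$ of Theorem~\ref{thm:polyred2}, and $\pmb{X}_\mu$ is genuinely a $k$-vector field on $J^{-1}(\mu)$ precisely because of condition $(ii)$ (tangency to the level set), as already noted in the text following Theorem~\ref{thm:polyred2}. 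Its reduced $k$-vector field on $N_\mu=J^{-1}(\mu)/G_\mu$ is, by construction, $\overline{\pmb{X}}_\mu$.

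Next I would simply invoke Proposition~\ref{pro:reconstruction} with $P=J^{-1}(\mu)$, $G=G_\mu$, $\pi_P=\pi_\mu$. The hypothesis of the theorem is that $\overline{\phi}_\mu$ is an integral section of $\overline{\pmb{X}}_\mu$ for which the connection $\mathcal{H}(\pmb{X}_\mu,\overline{\phi}_\mu)$ on $p_1\colon \overline{\phi}_\mu^*J^{-1}(\mu)\to\R^k$ is flat. Flatness of that connection is exactly the statement that the curvature $K_{\pmb{X}_\mu,\overline{\phi}_\mu}$ vanishes, equivalently $\mathrm{Ver}([X_{\mu,a},X_{\mu,b}])=0$; by the computation in the proof of Proposition~\ref{pro:reconstruction} (see~\eqref{eq:bracket}), together with integrability of $\overline{\pmb{X}}_\mu$ — which itself follows from integrability of $\pmb{X}$, or can be taken as part of the standing hypothesis since Theorem~\ref{thm:polyred2} supplies a reduced solution — this gives integrability of $\pmb{X}_\mu$. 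Then the argument recorded just after the proof of Proposition~\ref{pro:reconstruction} yields, for any $p\in J^{-1}(\mu)$ with $\pi_\mu(p)=\overline{\phi}_\mu(0)$, an integral section $\phi_\mu\colon\R^k\to J^{-1}(\mu)$ of $\pmb{X}_\mu$ through $p$ with $\pi_\mu\circ\phi_\mu=\overline{\phi}_\mu$, which is the first assertion.

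For the second assertion I would specialize the reconstruction-equation discussion, equations~\eqref{eq:phase}--\eqref{eq:reconstruction}, to the present bundle. Given a principal connection $\Ac$ on $\pi_\mu\colon J^{-1}(\mu)\to N_\mu$ for which the horizontal lift $\overline{\pmb{X}}_\mu^h$ is integrable, a horizontal lift $(\overline{\phi}_\mu)_h\colon\R^k\to J^{-1}(\mu)$ of $\overline{\phi}_\mu$ exists by the same existence argument. Writing the sought integral section as $\phi_\mu(t)=g(t)\cdot(\overline{\phi}_\mu)_h(t)$ with $g\colon\R^k\to G_\mu$ and substituting into $\phi_\mu^{(1)}=\pmb{X}_\mu\circ\phi_\mu$ produces, verbatim as in the derivation of~\eqref{eq:reconstruction}, the reconstruction equation $g^{-1}\cdot g^{(1)}=\pmb{\Ac}(\pmb{X}_\mu\circ(\overline{\phi}_\mu)_h)$; solvability of this equation for $g$ is exactly what makes the formula $\phi_\mu(t)=g(t)\cdot(\overline{\phi}_\mu)_h(t)$ valid. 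The only point requiring care — and the main (mild) obstacle — is bookkeeping: one must check that the objects $\pmb{X}_\mu$, $\overline{\phi}_\mu^*J^{-1}(\mu)$, $\mathcal{H}(\pmb{X}_\mu,\overline{\phi}_\mu)$, and $\overline{\pmb{X}}_\mu^h$ obtained by literally substituting $P=J^{-1}(\mu)$, $G=G_\mu$ into the constructions of Section~\ref{recsec} coincide with the ones named in the statement, and that the tangency condition $(ii)$ of Theorem~\ref{thm:polyred2} is genuinely what is needed for $\pmb{X}_\mu$ to be a $k$-vector field \emph{on} $J^{-1}(\mu)$ rather than merely along it; once this identification is made, nothing beyond Proposition~\ref{pro:reconstruction} and the reconstruction equation~\eqref{eq:reconstruction} is required.
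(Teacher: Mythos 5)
Your proposal is correct and follows essentially the same route as the paper, which offers no separate proof beyond stating that the theorem is Proposition~\ref{pro:reconstruction} applied to $P=J^{-1}(\mu)$, $G=G_\mu$ (with tangency and $G_\mu$-invariance from Theorem~\ref{thm:polyred2} making $\pmb{X}_\mu$ a well-defined invariant $k$-vector field on the level set), combined with the horizontal-lift and reconstruction-equation~\eqref{eq:reconstruction} discussion taken from~\cite{LTM_LP}. Your bookkeeping remarks match the paper's intent exactly, so nothing further is needed.
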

We point out that in Theorem~\ref{thm:reconstruction} the connection $\mathcal{H}(\pmb{X}_\mu,\overline{\phi}_\mu)$ is a $G_\mu$ principal connection on the bundle $\overline{\phi}_\mu^*(J^{-1}(\mu))$. 

\begin{remark} We will not obtain the general coordinate expressions for the reconstruction equations. In the Lagrangian case (Section \ref{sec:Routh}) such expressions can be adapted from those in~\cite{LTM_LP}.
\end{remark}

\section{The reduction of the cotangent bundle of \texorpdfstring{$k^1$}{k1}-covelocities}\label{sec:cotangentreduction}

We will now discuss in detail how the polysymplectic reduction theorem applies to the particular case of the cotangent bundle of $k^1$-covelocities $(T^1_k)^*Q$. The field equations in the $k$-symplectic Hamiltonian field theory look for integral sections of some Hamiltonian $k$-vector field defined on $(T^1_k)^*Q$, so this reduction is important in its own. Besides, we will build on these results to discuss Routh reduction later in Section~\ref{sec:Routh}.   

Our starting point is a free and proper action $\Phi_g\colon Q\to Q$ of $G$ on $Q$ and a Hamiltonian $H\colon (T^1_k)^*Q\to\R$ which is invariant under the canonical prolongation of the action to $(T^1_k)^*Q$. It is well-known~\cite{Gunter} that in this case an equivariant momentum map is given by the following family of maps $J^a\colon (T_k^1)^*Q\to \lag^*$:
\begin{equation}\label{eq:liftedmmap}
\langle J^a(\bm{\alpha}_q),\xi\rangle=\langle\alpha^a_q, \xi_{Q}(q)\rangle ,\quad \bm{\alpha}_q=(\alpha_q^1,\dots, \alpha_q^k),\;\text{for all } \xi\in\lag.
\end{equation}
It is shown in~\cite{Polyreduction} that this example is amenable to polysymplectic reduction, i.e. that both (C1) and (C2) in Theorem~\ref{thm:polyred} hold under the assumption of a free action (actually under the weaker condition of an infinitesimally free action). 

We need some notations before moving on. Given a principal connection $\Ac$ on $Q\to Q/G$ and a value $\nu\in\lag^*$, it is possible to define 1-form $\Ac_\nu$ on $Q$ as follows:
\[
\Ac_\nu(v_q)=\langle\nu,\Ac(v_q)\rangle.  
\]
The 2-form $d\Ac_\nu$ reduces to a 2-form on $Q/G_\nu$, with $G_\nu$ the isotropy group of $\nu\in\lag$ under the coadjoint action. This follows easily from the $G_\nu$-equivariance of $\Ac_\nu(v_q)$ (cf. \cite{RouthMarsden} for further details). We will denote this reduced form on $Q/G_\nu$ by $B_\nu$; in reduction terminology, this reduced form is often called the ``magnetic term''.

For a given regular value $\mu=(\mu_1,\dots,\mu_k)\in\lagdk$, we consider the family of 2-forms on $Q$ given by $d\Ac_{\mu_a}$. Since $G_\mu\subset G_{\mu_a}$, each of them drops to a 2-form on $Q/G_\mu$ that we will denote $\Bc_{\mu_a}$. 

\begin{theorem}\label{thm:identification} Any choice of a principal connection $\Ac$ on $Q\to Q/G$ gives a  polysymplectomorphism 
\begin{equation}\label{eq:identification}
\left((T^1_k)^*Q\right)_\mu \simeq Q/G_\mu \times_{Q/G} \big(\smallunderbrace{T^*(Q/G)\oplus\;\dots\oplus T^*(Q/G)\;}_{k\; {\rm copies}}\big),
\end{equation}
where the space on the right-hand side (RHS) is endowed with the polysyplectic structure
\begin{equation*}
\omega_\mu^a= ({\rm pr}^a_2)^*\omega_{Q/G}-({\rm pr}_1)^*\Bc_{\mu_a}. 
\end{equation*}
\end{theorem}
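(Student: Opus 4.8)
The plan is to reduce Theorem~\ref{thm:identification} to the classical cotangent bundle reduction of the \emph{individual} factors of $(T^1_k)^*Q$, glued together along the common base $Q$, and then to check that the fibred product carries exactly the claimed polysymplectic structure by pulling back along $\pi_\mu$. Concretely, I would first describe the level set $J^{-1}(\mu)$. From \eqref{eq:liftedmmap}, $\bm{\alpha}_q=(\alpha_q^1,\dots,\alpha_q^k)\in J^{-1}(\mu)$ if and only if $\langle\alpha_q^a,\xi_Q(q)\rangle=\langle\mu_a,\xi\rangle$ for all $\xi\in\lag$, i.e.\ each $\alpha_q^a$ restricts to $\mu_a$ on the vertical subspace $T_q(G\cdot q)$. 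Using the connection $\Ac$, the shift $\alpha_q^a\mapsto\alpha_q^a-(\Ac_{\mu_a})_q$ sends $\alpha_q^a$ to a covector that annihilates the vertical subspace, hence (via the horizontal lift $\mathcal H=\ker\Ac$) can be identified with a covector on $Q/G$ pulled back to $Q$. This produces, componentwise and $G_\mu$-equivariantly, a diffeomorphism $J^{-1}(\mu)/G_\mu\cong Q/G_\mu\times_{Q/G}\big(T^*(Q/G)\oplus\dots\oplus T^*(Q/G)\big)$; the $Q/G_\mu$ factor records the base point before quotienting by the full $G$, exactly as in the mechanical cotangent bundle reduction theorem of \cite{stagesbook,RouthMarsden}.

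\textbf{The forms.} The second, and more delicate, step is to compute $i_\mu^*\omega_Q^a$ and transport it through the above identification. I would work with the tautological one-forms: on $J^{-1}(\mu)$ one has $i_\mu^*\theta_Q^a$, and the shift by $\Ac_{\mu_a}$ decomposes $i_\mu^*\theta_Q^a$ into (the pullback of) the tautological one-form $\theta_{Q/G}$ on the $a$-th $T^*(Q/G)$ factor plus the connection-dependent term $(\mathrm{pr}_1)^*\Ac_{\mu_a}$ (pulled back suitably to $Q$). Taking $-d$ and using $\omega_Q^a=-d\theta_Q^a$ together with $\omega_{Q/G}=-d\theta_{Q/G}$, the exact term $-d\big((\mathrm{pr}_1)^*\Ac_{\mu_a}\big)$ becomes $-(\mathrm{pr}_1)^*d\Ac_{\mu_a}$, which descends to $-(\mathrm{pr}_1)^*\Bc_{\mu_a}$ on the quotient by $G_\mu$ (this descent is legitimate precisely because $d\Ac_{\mu_a}$ drops to $Q/G_\mu$, as recalled just before the statement). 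Thus $i_\mu^*\omega_Q^a=\pi_\mu^*\big((\mathrm{pr}_2^a)^*\omega_{Q/G}-(\mathrm{pr}_1)^*\Bc_{\mu_a}\big)$, and by the uniqueness clause of Theorem~\ref{thm:polyred} this identifies $\omega_\mu^a$ with the form on the right-hand side. One must also verify nondegeneracy of the family $(\omega_\mu^1,\dots,\omega_\mu^k)$, but this is automatic: the $\omega_{Q/G}$ summands already have trivial common kernel on the cotangent fibres, and a vector in the common kernel must then be vertical for $\mathrm{pr}_1$, which the fibred-product structure excludes.

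\textbf{Main obstacle.} The routine part is the sign bookkeeping in $\omega_Q^a=-d\theta_Q^a$ versus the paper's conventions. The genuinely delicate point is the bundle-theoretic identification of $J^{-1}(\mu)/G_\mu$ with the fibred product in \eqref{eq:identification}: one must check that the connection-dependent shift is $G_\mu$-equivariant (so that it passes to the quotient) while respecting that we quotient the base by the full $G$ but the total space only by $G_\mu$. This is exactly the subtlety that makes the classical cotangent bundle reduction theorem nontrivial — the appearance of the "extra" factor $Q/G_\mu\times_{Q/G}(\cdots)$ rather than simply $T^*(Q/G)\oplus\cdots$ — and here it must be done $k$ times with compatible choices, all controlled by the single connection $\Ac$. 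I would also take care to verify that conditions (C1) and (C2) hold for the canonical momentum map \eqref{eq:liftedmmap} so that Theorem~\ref{thm:polyred} applies; as noted in the excerpt, this is established in \cite{Polyreduction} under the (infinitesimal) freeness assumption, so it can be invoked rather than reproved.
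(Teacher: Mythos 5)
Your proposal is correct and follows essentially the same route as the paper's proof: the connection-dependent momentum shift $\bm{\alpha}_q\mapsto\big(\alpha^a_q-(\Ac_{\mu_a})_q\big)_a$ onto the annihilator of the vertical bundle, the canonical identification with $\pi^*\left((T^1_k)^*(Q/G)\right)$ together with its $G_\mu$-equivariant descent to the fibred product, and the computation $i_\mu^*\omega^a_Q=\pi_\mu^*\tau_\mu^*\big(({\rm pr}^a_2)^*\omega_{Q/G}-({\rm pr}_1)^*\Bc_{\mu_a}\big)$ carried out at the level of the tautological one-forms, concluded by the uniqueness clause of Theorem~\ref{thm:polyred} and the citation of \cite{Polyreduction} for (C1)--(C2). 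The only cosmetic differences are that the paper routes the shift explicitly through the zero level set $J^{-1}(\pmb{0})$, and that your separate nondegeneracy check is redundant, since uniqueness already identifies your forms with the reduced polysymplectic structure of Theorem~\ref{thm:polyred}.
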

The Whitney sum in~\eqref{eq:identification} is over $Q/G$. The notation ${\rm pr}^a_2$ stands for the projection of the $a$-th component of the RHS of~\eqref{eq:identification} onto $T^*(Q/G)$, and similarly ${\rm pr}_1$ is the projection onto $Q/G_\mu$. If we write $p_\mu\colon Q/G_\mu\to Q/G$ for the projection $[q]_\mu\mapsto [q]$, the space on the RHS of~\eqref{eq:identification} is a pullback bundle:
\begin{equation*}
\begin{tikzcd}
p_\mu^*\left((T^1_k)^*(Q/G)\right)\arrow[r,dashed]\arrow[d,dashed]& (T^1_k)^*(Q/G)\arrow[d] \\Q/G_\mu\arrow[r, "p_\mu"' ]& Q/G
\end{tikzcd} 
\end{equation*}

\begin{proof} We extend and adapt the proof of the cotangent bundle reduction theorem (see~\cite{stagesbook} and references therein) to the polysymplectic setting.

\textsc{Step 1 (The level set of zero):} Asume for a moment that $\mu=\pmb{0}\equiv (0,\dots,0)$. Then  from the definition of the momentum map~\eqref{eq:liftedmmap}, we have an identification
\[
(J^a)^{-1}(0)=T^*Q\oplus\dots\oplus \underbrace{(V\pi)^\circ}_{\text{$a$-th comp.}} \oplus\dots\oplus T^*Q \subset (T^1_k)^*Q 
\]
where $V\pi\subset TQ$ is the vertical subbundle w.r.t.\ the projection $\pi\colon Q\to Q/G$, and $(V\pi)^\circ\subset T^*Q$ denotes its annihilator. Thus:
\[
J^{-1}(\pmb{0})=\cap_a \left[(J^a)^{-1}(0)\right] = (V\pi)^\circ\oplus\dots\oplus (V\pi)^\circ\subset (T^1_k)^*Q.
\]
There is a canonical identification between $(V\pi)^\circ$ and $T^*(Q/G)\times_{Q/G}Q$
\begin{align*}
(V\pi)^\circ &\to Q\times_{Q/G}T^*(Q/G), \\
\alpha_q &\mapsto (q,\tilde\alpha_{[q]}),  
\end{align*}
where $\tilde\alpha_{[q]}$ is uniquely defined by the relation $\langle\tilde\alpha_{[q]},\tilde v_{[q]}\rangle = \langle\alpha_q,v_q\rangle$ for any $v_q$ with $T\pi(v_q)=\tilde v_{[q]}$ (this is well-defined). Therefore, we also have an identification 
\[
\mathcal{T}\colon J^{-1}(\pmb{0}) \to Q\times_{Q/G}\left(T^*(Q/G)\oplus\dots\oplus T^*(Q/G)\right) =\pi^*\left((T^1_k)^*(Q/G)\right).
\]
Moreover, under the previous identification, $G$ acts on $\pi^*\left((T^1_k)^*(Q/G)\right)$ as:
\[
g\cdot (q,\alpha^1_{[q]},\dots,\alpha^k_{[q]})= (g\cdot q,\alpha^1_{[q]},\dots,\alpha^k_{[q]}).
\]

\textsc{Step 2 (The ``momentum shift''):} Now let $\mu$ be arbitrary. For each $a=1,\dots,k$, we consider the 1-form  $\Ac_{\mu_a}$ on $Q$ . Define the map (which is the generalization of the ``momentum shift'' to the polysymplectic setting):
\begin{align*}
\mathcal{S}\colon (T^1_k)^*Q&\to (T^1_k)^*Q, \\
(\alpha^1_q,\dots,\alpha^k_q)&\mapsto (\alpha^1_q-(\Ac_{\mu_1})_q,\dots,\alpha^k_q-(\Ac_{\mu_k})_q).
\end{align*}
Note that $\mathcal{S}$ depends on both $\mu$ and the chosen principal connection $\Ac$. The map $\mathcal{S}$ is such that $\mathcal{S}^*\theta^a_Q=\theta^a_Q-(\pi^a_Q)^*\Ac_{\mu_a}$. To see this, note that if we write $(v_1,\dots,v_k)\in T\big((T^1_k)^*Q\big)$ we have
\begin{align*}
\mathcal{S}^*(\theta^a_Q)_{(\alpha^1_q,\dots,\alpha^n_q)}(v_1,\dots,v_k)&=(\theta_Q)_{\alpha^a_q-\Ac_{\mu_a}}(T\pi^a_Q(v_a))\\
&=\big( (\theta^a_Q)_{(\alpha^1_q,\dots,\alpha^n_q)}-(\pi^a_Q)^*\Ac_{\mu_a}\big)(v_1,\dots,v_k), 
\end{align*}
where we have used the definition of $\theta^a_Q$. In particular, taking the exterior derivative and using $\omega^a_Q=-d\theta^a_Q$ we find
\begin{equation}\label{eq:shift}
\mathcal{S}^* \omega^a_Q= \omega^a_Q+(\pi^a_Q)^*d\Ac_{\mu_a}.
\end{equation}

We consider now the restriction of $\mathcal{S}$ to $J^{-1}(\mu)$, which we denote by the same symbol. It maps, diffeomorphically, $J^{-1}(\mu)$ onto $J^{-1}(\pmb{0})$ (``shifts the momentum''):
\[
\mathcal{S}\colon  J^{-1}(\mu)\to J^{-1}(\pmb{0}).
\]
Indeed, for each $a$ we have:
\[
(J^a\circ \mathcal{S})(\alpha^1_q,\dots,\alpha^n_q)=J^a(\alpha^1_q-\Ac_{\mu_1},\dots,\alpha^k_q-\Ac_{\mu_k})=\mu_a-\mu_a=0.
\]
Composing with $\mathcal{T}$ we get a diffeomorphism
\begin{align*}
\tau=\mathcal{T}\circ\mathcal{S}\colon J^{-1}(\mu)&\to \pi^*\left((T^1_k)^*(Q/G)\right)
\end{align*}
which is $G_\mu$-equivariant (this follows from the fact that each $\Ac_{\mu_a}$ is $G_\mu$-equivariant), and hence reduces to a diffeomorphism:
\[
\tau_\mu\colon J^{-1}(\mu)/G_\mu\to  p_\mu^*\left((T^1_k)^*(Q/G)\right).
\]
\textsc{Step 3 (The symplectic form):} Since $\tau_\mu$ is a diffeomorphism, it only remains to check that it relates both polysymplectic structures. If we let $\tilde{\omega}^a_\mu$ be the reduced polysymplectic structure on $J^{-1}(\mu)/G_\mu$ given by Theorem~\ref{thm:polyred}, then we must check that 
\[
\tilde{\omega}^a_\mu=\tau_\mu^*\omega^a_\mu=\tau_\mu^* \left[({\rm pr}^a_2)^*\omega_{Q/G}-({\rm pr}_1)^*\Bc_{\mu_a}\right].
\]
The situation is summarized in the following commutative diagram:
\begin{equation*}
\begin{tikzcd}
(T^1_k)^*Q\arrow[rr,"\mathcal{S}"] &  & (T^1_k)^*Q &\\
J^{-1}(\mu)\arrow[rr,"\tau"]\arrow[u,"i_\mu"]\arrow[dd,"\pi_\mu"']&& \pi^*\left((T^1_k)^*(Q/G)\right)\arrow[dd,"\pi^0_\mu"]\arrow[u,"j_0=(\mathcal{T}^{-1}\circ i_0)"'] \arrow[dr] &\\
&& & T^*(Q/G)\\
J^{-1}(\mu)/G_\mu \arrow[rr,"\tau_\mu"']&& p_\mu^*\left((T^1_k)^*(Q/G)\right) \arrow[ur,"{\rm pr}^a_2"']\arrow[r,"{\rm pr}_1"'] & Q/G_\mu
\end{tikzcd}
\end{equation*}
We will denote by $i_0\colon J^{-1}(0)\to (T^1_k)^*Q$ the inclusion at $\mu=\pmb{0}$ and write
\[
j_0=(\mathcal{T}^{-1}\circ i_0)\colon  \pi^*\left((T^1_k)^*(Q/G)\right) \to (T^1_k)^*Q.
\]
Also, we will write $\pi^0_\mu\colon \pi^*\left((T^1_k)^*(Q/G)\right)\to p_\mu^*\left((T^1_k)^*(Q/G)\right)$ for the quotient projection
\[
\pi^0_\mu([q],\bm{\alpha}_{[q]})= ([q]_\mu,\bm{\alpha}_{[q]}).
\]

Since $\tilde{\omega}^a_\mu$ is uniquely determined from the relation
$\pi_\mu^*\tilde{\omega^a_\mu}=i_\mu^*\omega_Q^a$, we only need to show that
\[
\pi_\mu^* \tau_\mu^* \big(({\rm pr}^a_2)^*\omega_{Q/G}-({\rm pr}_1)^*\Bc_{\mu_a}\big)=i_\mu^*\omega_Q^a,
\]
or
\begin{equation}\label{eq:equalityforms}
\tau^*(\pi^0_{\mu})^* \big(({\rm pr}^a_2)^*\omega_{Q/G}-({\rm pr}_1)^*\Bc_{\mu_a}\big)=i_\mu^*\omega_Q^a.
\end{equation}
From the definitions of the maps involved, it is not hard to check that the following relations hold:
\[
(\pi^0_{\mu})^* ({\rm pr}^a_2)^*\omega_{Q/G}=j_0^*\omega^a_Q,\qquad  (\pi^0_{\mu})^* ({\rm pr}_1)^*\Bc_{\mu_a} =j_0^*(\pi^a_Q)^*d\Ac_{\mu_a}.
\]
Therefore, the equality~\eqref{eq:equalityforms} will hold provided 
\[
\tau^*j_0^*(\omega^a_Q-(\pi^a_Q)^*d\Ac_{\mu_a})= i_\mu^*\omega_Q^a.
\]
But this follows if we rewrite~\eqref{eq:shift} in the form  
\[
\omega^a_Q=\mathcal{S}^* \omega^a_Q-(\pi^a_Q)^*d\Ac_{\mu_a}=\mathcal{S}^*(\omega^a_Q-(\pi^a_Q)^*d\Ac_{\mu_a}), 
\]
and now:
\[
i_\mu^*\omega_Q^a=i_\mu^*\mathcal{S}^*(\omega^a_Q-(\pi^a_Q)^*d\Ac_{\mu_a})=\tau^*j_0^*(\omega^a_Q-(\pi^a_Q)^*d\Ac_{\mu_a})=\tau^*(\pi^0_{\mu})^* \big(({\rm pr}^a_2)^*\omega_{Q/G}-({\rm pr}_1)^*\Bc_{\mu_a}\big).
\]
\end{proof}

\section{Routh reduction in the polysymplectic formalism}\label{sec:Routh}

In this section we obtain a polysymplectic version of Routh reduction. To do that, we will follow~\cite{quasi} and adapt the techniques to the polysymplectic setting. In a nutshell, the methodology consists on applying the polysymplectic reduction Theorem~\ref{thm:polyred} to the given regular Lagrangian system on $T^1_kQ$.

\subsection{The reduction of the tangent bundle of \texorpdfstring{$k^1$}{k1}-velocities}

Building on the results of the previous section, we now discuss the reduction of a Lagrangian field theory defined on $T^1_kQ$. The assumption is that we have a Lagrangian $L\colon T^1_kQ\to \R$ which is hyperregular and $G$-regular (to be defined later, see Definition~\ref{def:Gregularity}), and a free and proper action $\Phi_g$ on $Q$ such that $L$ is invariant w.r.t. its canonical prolongation to 
$T^1_kQ$. The requirement of hyperregularity can be replaced for that of regularity with small changes. We will use the following result:

\begin{lemma}\label{lem:lema1} Let $(N,\omega^a)$ and $(N',\omega'^a)$ be polysymplectic manifolds and $f\colon N\to N'$ a polysymplectomorphism. Assume that $G$ acts canonically on both $N$ and $N'$ with momentum maps $J\colon N\to \lagdk$ and $J'\colon N'\to \lagdk$, and that $f$ is equivariant w.r.t. the $G$-actions and satisfies $f^*J'=J$. 

\begin{equation*}
\begin{tikzcd}
(N,\omega^a)\arrow[rr,"f"]\arrow[dr,"J"']&& (N',\omega'^a)\arrow[dl,"J'"] \\
& \lagdk  & 
\end{tikzcd}
\end{equation*}

\noindent Then, if $f_\mu\colon N_\mu\to N'_\mu$ denotes the map between the reduced polysymplectic spaces induced by $f$, $f_\mu$ is a polysymplectomorphism, i.e.
\[
f_\mu^*\omega'^a_\mu=\omega^a_\mu. 
\]
\end{lemma}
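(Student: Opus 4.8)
The plan is to show that the polysymplectomorphism $f$ descends to the reduced spaces and that the descended map $f_\mu$ intertwines the reduced polysymplectic forms, using the characterizing property of the reduced forms from Theorem~\ref{thm:polyred}. First I would check that $f$ restricts to a diffeomorphism $f|\colon J^{-1}(\mu)\to (J')^{-1}(\mu)$: this is immediate from $f^*J'=J$, since $J'(f(x))=J(x)=\mu$ for every $x\in J^{-1}(\mu)$, and the inverse $f^{-1}$ satisfies the analogous relation because $f$ is a diffeomorphism with $(f^{-1})^*J=(f^{-1})^*f^*J'=J'$. Next, $G_\mu$-equivariance of $f$ (which follows from its $G$-equivariance, as $G_\mu$ is a subgroup) ensures that this restriction passes to the quotients by $G_\mu$, producing the well-defined map $f_\mu\colon N_\mu\to N'_\mu$ fitting into the commutative square
\begin{equation*}
\begin{tikzcd}
J^{-1}(\mu)\arrow[r,"f|"]\arrow[d,"\pi_\mu"'] & (J')^{-1}(\mu)\arrow[d,"\pi'_\mu"] \\
N_\mu\arrow[r,"f_\mu"'] & N'_\mu
\end{tikzcd}
\end{equation*}
together with $f\circ i_\mu = i'_\mu\circ f|$ at the level of the inclusions into $N$ and $N'$.

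The core computation is then a diagram chase with pullbacks. By Theorem~\ref{thm:polyred}, $\omega^a_\mu$ is the unique form on $N_\mu$ with $\pi_\mu^*\omega^a_\mu = i_\mu^*\omega^a$, and similarly $(\pi'_\mu)^*\omega'^a_\mu = (i'_\mu)^*\omega'^a$. I would compute
\[
\pi_\mu^* f_\mu^* \omega'^a_\mu = (f|)^* (\pi'_\mu)^* \omega'^a_\mu = (f|)^* (i'_\mu)^* \omega'^a = i_\mu^* f^* \omega'^a = i_\mu^* \omega^a,
\]
where the first equality uses commutativity of the square, the third uses $f\circ i_\mu = i'_\mu \circ f|$, and the last uses that $f$ is a polysymplectomorphism, $f^*\omega'^a=\omega^a$. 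Comparing with $\pi_\mu^*\omega^a_\mu = i_\mu^*\omega^a$ and invoking the uniqueness clause of Theorem~\ref{thm:polyred} — equivalently, the surjectivity of $\pi_\mu^*$ on forms, since $\pi_\mu$ is a surjective submersion — we conclude $f_\mu^*\omega'^a_\mu = \omega^a_\mu$ for each $a$.

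The only genuinely delicate point is making sure that Theorem~\ref{thm:polyred} actually applies on the $N'$ side, i.e.\ that conditions (C1) and (C2), regularity of $\mu$ as a value of $J'$, and freeness and properness of the $G_\mu$-action on $(J')^{-1}(\mu)$ all transfer from $N$ to $N'$ across $f$. I expect this transfer to be the main (though routine) obstacle: since $f$ is an equivariant polysymplectomorphism with $f^*J'=J$, it carries $\ker T_xJ^a$ to $\ker T_{f(x)}(J')^a$, $\ker\omega^a|_x$ to $\ker\omega'^a|_{f(x)}$, and orbits $G_{\mu_a}\cdot x$ to orbits $G_{\mu_a}\cdot f(x)$, so (C1)–(C2) and regularity of $\mu$ are preserved verbatim, and freeness and properness of the $G_\mu$-action likewise carry over. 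Hence both reduced spaces exist as polysymplectic manifolds and the argument above goes through; I would state this transfer explicitly but not belabor the verification.
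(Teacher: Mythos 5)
Your proof is correct and follows essentially the same route as the paper's: restrict $f$ to a $G_\mu$-equivariant diffeomorphism of the level sets, descend to $f_\mu$, and compare pullbacks via the characterizing relation $\pi_\mu^*\omega^a_\mu=i_\mu^*\omega^a$, concluding because pullback along the surjective submersion $\pi_\mu$ is injective on forms (you wrote ``surjectivity'' of $\pi_\mu^*$, but injectivity---equivalently the uniqueness clause of Theorem~\ref{thm:polyred}---is what the argument uses, and your computation is otherwise exactly the paper's). Your closing paragraph about transferring (C1)--(C2), regularity and freeness/properness to $N'$ is harmless but not required, since the lemma already presupposes that both reduced polysymplectic spaces exist.
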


\begin{proof} The proof follows from the characterization of the reduced polysymplectic forms and is similar to that of Theorem~5 in~\cite{quasi} in the symplectic case.

First we observe that, since $f^*J'=J$, the map $f$ restricts to a diffeomorphism $f_r\colon J^{-1}(\mu)\to J'^{-1}(\mu)$ which is $G_\mu$-equivariant, and thus $f_\mu\colon N_\mu\to N'_\mu$ is a diffeomorphism:  
\begin{equation*}
\begin{tikzcd}[column sep=huge,row sep= large]
(N,\omega^a)\arrow[r,"f"]& (N',\omega'^a) \\
J^{-1}(\mu)\arrow[d,"\pi_\mu"']\arrow[u,"i_\mu"]\arrow[r,"f_r"]& J'^{-1}(\mu)\arrow[d,"\pi'_\mu"]\arrow[u,"i'_\mu   "'] \\
(N_\mu,\omega^a_\mu) \arrow[r,"f_\mu"] & (N'_\mu,\omega'^a_\mu)
\end{tikzcd} 
\end{equation*}
We need to show that $f_\mu^*\omega'^a_\mu=\omega^a_\mu$. Since $\pi_\mu$ and $\pi'_\mu$ are submersions, it suffices to check that:
\[
f_r^*(\pi'_\mu)^* \omega'^a_\mu=(\pi_\mu)^*\omega^a_\mu.
\]
But this is easily shown using the characterization of the reduced polysymplectic forms:
\begin{align*}
f_r^*(\pi'_\mu)^* \omega'^a_\mu&=f_r^*(i'_\mu)^* \omega'^a=(i_\mu)^*f^*\omega'^a\\
&=(i_\mu)^*\omega^a=(\pi_\mu)^*\omega^a_\mu.
\end{align*}
\end{proof}

Recall that a regular Lagrangian $L$ defines the diffeomorphism $\pmb{F} L\colon T^1_kQ\to (T^1_k)^*Q$, see~\eqref{eq:Legendre}. We first want to observe that, since $L$ is invariant, the Legendre transformation is equivariant:
\[
\F L(g\cdot \pmb{v}_q)=g\cdot \F L(\pmb{v}_q). 
\]
The proof, componentwise, is similar to the mechanical case which can be found for example in~\cite{Foundations}, Corollary 4.2.14. With this in mind, it is clear that the map
\[
J_L=J\circ \F L\colon T^1_kQ\to\lagdk 
\]
is an equivariant momentum map for the polysymplectic action of $G$ on $T^1_k Q$. Its components are the maps 
\[
 J_L^a=J^a\circ \F L\colon  T^1_kQ\to \lag^*,
\]
i.e.
\begin{equation*}
\langle J_L^a(\pmb{v}_q),\xi\rangle=\langle(F L)^a(\pmb{v}_q), \xi_{Q}(q)\rangle ,\quad \pmb{v}_q=(v_{1q},\dots,v_{kq}),\;\text{for all } \xi\in\lag.
\end{equation*}

We will need the following map: 
\begin{align*}
\mathfrak{J}_L\colon T^1_kQ\times \lagk&\to \lagdk,\\  
(\pmb{v}_q,\bm{\xi})&\mapsto \mathfrak{J}_L(\pmb{v}_q,\bm{\xi})=J_L(\pmb{v}_q+\bm{\xi}_Q(q)),
\end{align*}
where $\bm{\xi}=(\xi_1,\dots,\xi_k)\in\lagk$ and $\bm{\xi}_Q(q)\in T^1_kQ$ is given by
\[
\bm{\xi}_Q(q)=\left((\xi_1)_Q(q),\dots,(\xi_k)_Q(q)\right). 
\]

\begin{definition}\label{def:Gregularity} Let $L\colon T^1_kQ\to \R$ be a  Lagrangian. We say that $L$ is $G$-regular if, for each $\pmb{v}_q\in T^1_kQ$, the map $\mathfrak{J}_L(\pmb{v}_q,\cdot)\colon \lagk\to \lagdk$ is a diffeomorphism.
\end{definition}

This definition is the natural extension of the notion of a $G$-regular Lagrangian to the polysymplectic setting~\cite{Routhstages}. There are other equivalent definitions of $G$-regularity~\cite{quasi}, and one might as well work with the extension of those to the $k$-symplectic framework, but we find Definition~\ref{def:Gregularity} to be most practical. The importance of $G$-regularity comes from the following fact:
\begin{proposition}\label{prop:identificationJL} If $L$ is $G$-regular, then there is a diffeomorphism  
\[
J_L^{-1}(\mu)/G_\mu \simeq Q/G_\mu \times_{Q/G} \big(\smallunderbrace{T(Q/G)\oplus\;\dots\oplus T(Q/G)\;}_{k\; {\rm copies}}\big)
\]
over the identity on $Q/G_\mu$.
\end{proposition}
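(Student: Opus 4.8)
The strategy is to imitate the cotangent-bundle reduction argument of Theorem~\ref{thm:identification}, but now on the Lagrangian side, using the Legendre transformation and the $G$-regularity hypothesis to produce the diffeomorphism. First I would observe that since $L$ is hyperregular, $\pmb{F}L\colon T^1_kQ\to (T^1_k)^*Q$ is a $G$-equivariant diffeomorphism with $\pmb{F}L^*J=J_L$, so it restricts to a $G_\mu$-equivariant diffeomorphism $J_L^{-1}(\mu)\to J^{-1}(\mu)$ and therefore descends to a diffeomorphism $J_L^{-1}(\mu)/G_\mu \simeq J^{-1}(\mu)/G_\mu$ over the identity on $Q/G_\mu$. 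Combining this with Theorem~\ref{thm:identification}, we already get a diffeomorphism
\[
J_L^{-1}(\mu)/G_\mu \simeq Q/G_\mu \times_{Q/G} \big(T^*(Q/G)\oplus\dots\oplus T^*(Q/G)\big).
\]
So the remaining task is to replace the $k$ copies of $T^*(Q/G)$ by $k$ copies of $T(Q/G)$, and this is exactly where $G$-regularity enters: it should provide a fibre-wise diffeomorphism between $J_L^{-1}(\mu)$ (seen as sitting inside $T^1_kQ$) and $\pi^*(T^1_kQ/G)$-type data that, after quotienting, matches $Q/G_\mu \times_{Q/G}\big(T(Q/G)\oplus\dots\oplus T(Q/G)\big)$.

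The key step is to describe $J_L^{-1}(\mu)$ explicitly. Fix a principal connection $\Ac$ on $Q\to Q/G$. Any $\pmb{v}_q\in T^1_kQ$ decomposes as $\pmb{v}_q = \pmb{w}_q + \bm{\xi}_Q(q)$, where $\pmb{w}_q$ is the horizontal part (so that $\pmb{\Ac}(\pmb{w}_q)=\pmb{0}$) and $\bm{\xi}=\pmb{\Ac}(\pmb{v}_q)\in\lagk$ is the vertical data. The horizontal part $\pmb{w}_q$ is equivalent, via $T\pi$, to an element of $k$ copies of $T(Q/G)$ over $[q]$; this is the content one wants to keep. For the vertical data, the $G$-regularity hypothesis says precisely that the map $\bm{\xi}\mapsto \mathfrak{J}_L(\pmb{w}_q,\bm{\xi}) = J_L(\pmb{w}_q+\bm{\xi}_Q(q))$ is a diffeomorphism $\lagk\to\lagdk$ for each fixed $\pmb{w}_q$ (note $G$-regularity is stated with $\pmb{v}_q$, but since $\pmb{v}_q = \pmb{w}_q + \pmb{\Ac}(\pmb{v}_q)_Q(q)$ one may equally fix the horizontal representative). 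Hence, given $\mu$, there is a unique $\bm{\xi} = \bm{\xi}(\pmb{w}_q,\mu)$ with $J_L(\pmb{w}_q+\bm{\xi}_Q(q))=\mu$. This sets up a bijection
\[
\big\{\pmb{w}_q \text{ horizontal}\big\} \;\longleftrightarrow\; J_L^{-1}(\mu), \qquad \pmb{w}_q \longmapsto \pmb{w}_q + \bm{\xi}(\pmb{w}_q,\mu)_Q(q),
\]
and the left-hand side is canonically $Q\times_{Q/G}\big(T(Q/G)\oplus\dots\oplus T(Q/G)\big)$. I would check this map is smooth with smooth inverse (implicit function theorem applied to the diffeomorphism in Definition~\ref{def:Gregularity}, varying $\pmb{w}_q$ smoothly), and that it is $G_\mu$-equivariant: here one uses equivariance of $J_L$ under $\mathrm{Coad}^k$, equivariance of the connection-induced splitting, and the fact that for $g\in G_\mu$ one has $J_L(g\cdot(\pmb{w}_q+\bm{\xi}_Q(q))) = \mathrm{Coad}^k_g(\mu)=\mu$, so uniqueness forces $\bm{\xi}(g\cdot\pmb{w}_q,\mu) = \mathrm{Ad}^k_g\,\bm{\xi}(\pmb{w}_q,\mu)$.

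Passing to the $G_\mu$-quotient, the horizontal bundle $Q\times_{Q/G}\big(T(Q/G)\oplus\dots\oplus T(Q/G)\big)$ descends to $Q/G_\mu \times_{Q/G}\big(T(Q/G)\oplus\dots\oplus T(Q/G)\big)$ — because $G_\mu$ acts only on the $Q$-factor and $Q/G_\mu$ is the quotient of $Q$ — and the induced map on quotients is the claimed diffeomorphism over the identity on $Q/G_\mu$. The main obstacle I anticipate is the smoothness and equivariance bookkeeping for the implicitly-defined section $\bm{\xi}(\pmb{w}_q,\mu)$: one must be careful that the "diffeomorphism for each $\pmb{v}_q$" in Definition~\ref{def:Gregularity} can be upgraded to a statement that $(\pmb{w}_q,\bm{\eta})\mapsto \mathfrak{J}_L(\pmb{w}_q,\bm{\eta})$ is a fibre-preserving diffeomorphism (or at least submersion onto which the constant section $\mu$ pulls back transversally), so that $J_L^{-1}(\mu)$ is genuinely a smooth submanifold identified with the horizontal bundle; the local uniqueness from $G$-regularity then globalizes precisely because $\mu$ is a regular value and the fibres of $\mathfrak{J}_L(\pmb{w}_q,\cdot)$ are single points. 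Alternatively, and perhaps more cleanly, one can avoid the connection entirely by transporting the pullback-bundle description from Theorem~\ref{thm:identification} through $\pmb{F}L^{-1}$ and then checking, using $G$-regularity, that the resulting subset of $T^1_kQ$ is a graph over $Q/G_\mu \times_{Q/G}\big(T(Q/G)\oplus\dots\oplus T(Q/G)\big)$; I would present whichever of the two is shorter, but the connection-based argument has the advantage of making the later identification of the reduced Routhian structure (Theorem~\ref{thm:reducedform}) more transparent.
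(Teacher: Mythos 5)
Your construction is correct and is essentially the paper's own argument: the paper defines $\tilde\tau\colon J_L^{-1}(\mu)\to\pi^*\left(T^1_k(Q/G)\right)$, $\pmb{v}_q\mapsto(q,T\pi(v_{1q}),\dots,T\pi(v_{kq}))$, builds its inverse exactly as you do by using $G$-regularity to solve uniquely for the $\lagk$-shift $\bm{\xi}$ with $J_L(\pmb{v}_q+\bm{\xi}_Q(q))=\mu$, and then quotients by the $G_\mu$-action, which only moves the $Q$-factor. The only cosmetic difference is that the paper needs no principal connection: instead of fixing a horizontal representative it takes an arbitrary lift of $(T\pi(v_{1q}),\dots,T\pi(v_{kq}))$ and invokes $G$-regularity once more to check that the inverse is independent of this choice.
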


\begin{proof} We first show that there is a diffeomorphism $\tilde\tau$ between $J_L^{-1}(\mu)$ and $\pi^*\left(T^1_k(Q/G)\right)$. We observe that $J_L^{-1}(\mu)=\cap_a \left[(J^a_L)^{-1}(\mu)\right]$. We define $\tilde\tau$ as follows:
\begin{align*}
\tilde\tau \colon \cap_a \left[(J^a_L)^{-1}(\mu)\right]  &\to  \pi^*\left(T^1_k(Q/G)\right),\\
(v_{1q}\dots,v_{kq})&\mapsto (q,T\pi(v_{1q}),\dots, T\pi(v_{kq})).
\end{align*}
We will now find explicitly the inverse $\tilde\tau$. An element in $\pi^*\left(T^1_k(Q/G)\right)$ is of the form
\[
(q,v_{1[q]},\dots,v_{k[q]})=\left(q,T\pi(v_{1q}),\dots,T\pi(v_{kq})\right)\in \pi^*\left(T^1_k(Q/G)\right) 
\]
for some $v_{1q},\dots,v_{kq}\in TQ$. Then we define the inverse of $\tilde\tau$ as:
\begin{align*}
(\tilde\tau)^{-1} \colon \pi^*\left((T^1_k)(Q/G)\right) &\to  \cap_a \left[(J^a_L)^{-1}(\mu)\right],\\
\left(q,T\pi(v_{1q}),\dots,T\pi(v_{kq})\right)&\mapsto \left(v_{1q}+(\xi_1)_Q(q),\dots,v_{kq}+(\xi_k)_Q(q)\right)=\pmb{v}_q+\bm{\xi}_Q(q),
\end{align*}
where $\xi_a\in\lag$ can be chosen so that $J_L(\pmb{v}_q+\bm{\xi}_Q(q))=\mu$ because of the assumed  $G$-regularity of $L$. We note that, if we had made a different choice for the vectors $v_{1q},\dots,v_{kq}$, say $w_{1q},\dots,w_{kq}$, then we would have $\pmb{v}_q=\pmb{w}_q+\bm{\eta}_Q(q)$ for some $\bm{\eta}\in\lagk$ and the result would be the same due to $G$-regularity. This means that $(\tilde\tau)^{-1}$ is well-defined.

The $G_\mu$-action on space $J_L^{-1}(\mu)$ is pushed forward to the space $\pi^*\left(T^1_k(Q/G)\right)$, where it reads:
\[
g\cdot(q,v_{1[q]},\dots,v_{k[q]})= (g\cdot q,v_{1[q]},\dots,v_{k[q]}).
\]
Therefore, the map $\tilde\tau$ drops to a diffeomorphism  
\[
\tilde\tau_\mu\colon J_L^{-1}(\mu)/G_\mu\to \pi^*\left(T^1_k(Q/G)\right)/G_\mu= p_\mu^*\left(T^1_k(Q/G)\right),
\]
and this gives the desired identification. 
\end{proof}

The conclusion is that we have the following diagram:
\begin{equation}\label{dia:red}
\begin{tikzcd}
\pi^*\left(T^1_k(Q/G)\right)\arrow[d,"\tilde\pi^0_\mu"']&J_L^{-1}(\mu)\arrow[l,"\tilde\tau"']\arrow[r,"\F L"]\arrow[d,"\tilde\pi_\mu"']& J^{-1}(\mu)\arrow[d,"\pi_\mu"]\arrow[r,"\tau"] & \pi^*\left((T^1_k)^*(Q/G)\right)\arrow[d,"\pi^0_\mu"]\\
p_\mu^*\left(T^1_k(Q/G)\right)&J_L^{-1}(\mu)/G_\mu \arrow[l,"\tilde\tau_\mu"]\arrow[r,"(\F L)_\mu"']& J^{-1}(\mu)/G_\mu \arrow[r,"\tau_\mu"'] & p_\mu^*\left((T^1_k)^*(Q/G)\right)
\end{tikzcd}
\end{equation}
where $\tilde\pi^0_\mu$ and $\tilde\pi_\mu$ are quotient maps. The map $(\F L)_\mu$, obtained from the equivariant diffeomorphism $\F L$, is then a polysymplectomorphism (Lemma~\ref{lem:lema1}). We point out that, as mentioned earlier, we also denote by $\F L$ the restriction of $\F L$ to $J^{-1}(\mu)$. 

\subsection{The Routhian}

Using the momentum shift $\mathcal{S}$ we write the first row of Diagram~\eqref{dia:red} as:
\[
\begin{tikzcd}
\pi^*\left(T^1_k(Q/G)\right)&J_L^{-1}(\mu)\arrow[l,"\tilde\tau"']\arrow[r,"\F L"]& J^{-1}(\mu)\arrow[r,"\mathcal{S}"] & J^{-1}(\pmb{0})\arrow[r,"\mathcal{T}"] & \pi^*\left((T^1_k)^*(Q/G)\right).
\end{tikzcd}
\]
The composition $\mathcal{S}\circ \F L\colon J_L^{-1}(\mu)\to J^{-1}(0)$ can be described as follows:
\begin{align*}
\langle\mathcal{S}\circ (F L)^a(\pmb{v}),w_q\rangle &= \left.\frac{d}{ds}\right|_{s=0}L(v_{1q},\dots,v_{aq}+sw_q,\dots,v_{kq})-\langle\Ac_{\mu_a},w_q \rangle\\
&=\left.\frac{d}{ds}\right|_{s=0}(L-(\pi_Q^a)^*\Ac_{\mu_a})(v_{1q},\dots,v_{aq}+sw_q,\dots,v_{kq}).
\end{align*}
This means that if we define the function $R\colon T^1_kQ\to\R$
\begin{align*}
R(\pmb{v})&=L(\pmb{v})-(\pi_Q^1)^*\Ac_{\mu_1}(\pmb{v})-\dots- (\pi_Q^k)^*\Ac_{\mu_k}(\pmb{v})\nonumber \\
&=L(\pmb{v})-\Ac_{\mu_1}(v_{1q})-\dots- \Ac_{\mu_k}(v_{kq}),  
\end{align*}
then its restriction to $J_L^{-1}(\mu)$ satisfies $\F R=\mathcal{S}\circ \F L$. We will now denote by
\[
\Ro\colon \pi^*\left(T^1_k(Q/G)\right)\to\R
\]
the function induced by $R$ on $\pi^*\left(T^1_k(Q/G)\right)$, and by
\[
\Ro_\mu\colon p_\mu^*\left(T^1_k(Q/G)\right)\to\R
\]
its reduction (the function induced by $\Ro$ on the quotient by $G_\mu$).

\begin{definition}
The function $\Ro_\mu$ (but also of $R$ and $\Ro$) is called the \emph{Routhian}.
\end{definition}
We recall that the meaning of the fiber derivatives $\F\Ro$ and $\F\Ro_\mu$ has been described at the end of Section~\ref{sec:HLfieldtheory}. 
\begin{lemma} The following holds:
\begin{enumerate}[label={(\roman*})]
\item $\Ro$ is $G_\mu$-invariant.
\item $\F\Ro=\tau\circ \F L\circ(\tilde\tau)^{-1}$.
\item $\F\Ro_\mu=\tau_\mu \circ (\F L)_\mu\circ (\tilde\tau_\mu)^{-1}$. 
\end{enumerate}
\end{lemma}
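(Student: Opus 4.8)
The plan is to prove each of the three items essentially by unwinding the definitions, relying on the commutative diagram~\eqref{dia:red} and on the already-established fact that $\mathcal{S}\circ\F L=\F R$ when restricted to $J_L^{-1}(\mu)$.

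\textbf{Item (i): $\Ro$ is $G_\mu$-invariant.} First I would recall that $L$ is $G$-invariant (hence in particular $G_\mu$-invariant) and that each $\Ac_{\mu_a}$ is $G_\mu$-equivariant, in the sense that $\Phi_g^*\Ac_{\mu_a}=\Ac_{\mu_a}$ for $g\in G_\mu$; this follows from the connection property $\Ac(g\cdot v)=\mathrm{Ad}_g(\Ac(v))$ together with $\mathrm{Coad}_g\mu_a=\mu_a$ for $g\in G_{\mu_a}\supset G_\mu$, exactly as in~\cite{RouthMarsden}. Since $R(\pmb{v})=L(\pmb{v})-\sum_a\Ac_{\mu_a}(v_{aq})$ and the action on $T^1_kQ$ is the diagonal tangent-lifted action, each summand is $G_\mu$-invariant, so $R$ is $G_\mu$-invariant. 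Because $\Ro$ is by definition the function on $\pi^*(T^1_k(Q/G))$ induced by $R$ under the $G_\mu$-equivariant identification $\tilde\tau$, and because the $G_\mu$-action pushed forward to $\pi^*(T^1_k(Q/G))$ only moves the $Q$-component, the invariance of $R$ transports to $G_\mu$-invariance of $\Ro$. This is what makes the further reduction to $\Ro_\mu$ on $p_\mu^*(T^1_k(Q/G))$ meaningful.

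\textbf{Item (ii): $\F\Ro=\tau\circ\F L\circ(\tilde\tau)^{-1}$.} The key point is the earlier computation $\F R=\mathcal{S}\circ\F L$ on $J_L^{-1}(\mu)$, and that $\tau=\mathcal{T}\circ\mathcal{S}$. I would check that the fiber derivative $\F\Ro$ (defined, as in the end of Section~\ref{sec:HLfieldtheory}, via the pairing between $\pi^*(T^1_k)^*(Q/G)$ and $\pi^*T^1_k(Q/G)$) corresponds under $\tilde\tau$ to $\F R$ restricted to $J_L^{-1}(\mu)$: indeed $\Ro=R\circ(\tilde\tau)^{-1}$ and, since $(\tilde\tau)^{-1}$ adds a vertical piece $\bm\xi_Q(q)$ and $R$ differs from $L$ by terms that are \emph{affine} in the $v_{aq}$, a short computation identifying derivatives along $w_q$ on $Q$ with derivatives along $T\pi(w_q)$ on $Q/G$ shows $\F\Ro=\mathcal{T}\circ\F R\circ(\tilde\tau)^{-1}$ — here $\mathcal{T}$ appears because $\F R$ lands in $J^{-1}(\pmb 0)=(V\pi)^\circ\oplus\cdots$, which $\mathcal{T}$ identifies with $\pi^*(T^1_k)^*(Q/G)$. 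Substituting $\F R=\mathcal{S}\circ\F L$ gives $\F\Ro=\mathcal{T}\circ\mathcal{S}\circ\F L\circ(\tilde\tau)^{-1}=\tau\circ\F L\circ(\tilde\tau)^{-1}$, as claimed.

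\textbf{Item (iii): $\F\Ro_\mu=\tau_\mu\circ(\F L)_\mu\circ(\tilde\tau_\mu)^{-1}$.} This follows from (ii) by passing to the $G_\mu$-quotients. All four maps $\tau$, $\F L$, $(\tilde\tau)^{-1}$ and the fiber derivative are $G_\mu$-equivariant (the Legendre transform because $L$ is invariant; $\tau$ because each $\Ac_{\mu_a}$ is $G_\mu$-equivariant; the fiber derivative because $\Ro$ is $G_\mu$-invariant by (i)), hence each descends to the quotient, and the descended versions are precisely $\tau_\mu$, $(\F L)_\mu$, $(\tilde\tau_\mu)^{-1}$ and $\F\Ro_\mu$ by their definitions in Diagram~\eqref{dia:red} and the discussion of fiber derivatives on pullback bundles. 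Since the quotient projections are surjective submersions, the identity~(ii) composed appropriately with these projections forces the identity~(iii).

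I expect the only genuinely delicate step to be the bookkeeping in (ii): one must be careful that the fiber derivative of $\Ro$, taken intrinsically on the pullback bundle $\pi^*T^1_k(Q/G)$ over $Q/G_\mu$, really matches $\F R$ on $J_L^{-1}(\mu)$ under $\tilde\tau$, and in particular that the affine correction terms $-\Ac_{\mu_a}(v_{aq})$ contribute exactly the shift $\mathcal{S}$ and nothing else to the fiber derivative — this is where the affineness of $R-L$ in the velocities is essential, and where an index/pairing computation is unavoidable. Everything else is formal diagram chasing.
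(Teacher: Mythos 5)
Your proposal is correct and follows essentially the same route as the paper: (i) invariance of $L$ and of each $\Ac_{\mu_a}$ under $G_\mu$, (ii) the pairing/fiber-derivative computation combining $\F R=\mathcal{S}\circ\F L$ with $\tau=\mathcal{T}\circ\mathcal{S}$ and the identification of derivatives along $w_q$ with derivatives along $T\pi(w_q)$, and (iii) passing to the $G_\mu$-quotient by equivariance. The "delicate bookkeeping" you flag in (ii) is exactly the chain of pairings the paper carries out explicitly, so nothing essential is missing.
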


\begin{proof} For each $a$, the term $\Ac_{\mu_a}$ in $\Ro$ is $G_\mu$-invariant (see Section~\ref{subsec:connections}). Since by assumption $L$ is $G_\mu$-invariant, $R$ is $G_\mu$-invariant and so is $\Ro$. This proves $(i)$. To prove $(ii)$, we first observe that if $\bm{\alpha}_q\in J^{-1}(\bm{0})$ and $\pmb{v}_q\in J_L^{-1}(\bm{\mu})$ then 
\[
 \langle\alpha^a_{q},v_{aq}\rangle =  \langle\mathcal{T}^a(\bm{\alpha}_q),\tilde\tau^a(\pmb{v}_q)\rangle,
\]
where 
\[
\mathcal{T}^a\colon J^{-1}(\pmb{0}) \to T^*(Q/G), \quad  \tilde\tau^a \colon J_L^{-1}(\mu)  \to  T(Q/G),
\]
are the $a$-th components of the maps $\mathcal{T}$ and $\tilde\tau$, respectively. This is a direct consequence of the definitions of $\mathcal{T}$ and $\tilde\tau$. It follows that, if $\pmb{v}_q$ and $\pmb{w}_q$ are elements of $T^1_kQ$, then 
\[
\langle\mathcal{S}^a\circ (F L)^a(\pmb{v}),w_{aq}\rangle = \langle\mathcal{T}^a\circ \mathcal{S}\circ (\F L)(\pmb{v}),\tilde\tau^a(\pmb{w})\rangle ,
\]
where $\mathcal{S}^a\colon T^*Q\to T^*Q$ is the map $\mathcal{S}^a(\beta_{q})=\beta_q-\Ac_{\mu_a}$.

If $\tilde\tau(\pmb{v})$ and $\tilde\tau(\pmb{w})$ are arbitrary elements in $\pi^*\left(T^1_k(Q/G)\right)$, then using the fact that $\tilde\tau$ is linear on the fibers:
\begin{align*}
\langle(\F \Ro)^a(\tilde\tau(\pmb{v})),\tilde\tau^a(\pmb{w})\rangle&=\left.\frac{d}{ds}\right|_{s=0}\Ro(\tilde\tau^1(\pmb{v}),\dots,\tilde\tau^a(\pmb{v}+s\pmb{w}),\dots,\tilde\tau^k(\pmb{v}))\\ 
&=\left(\left.\frac{d}{ds}\right|_{s=0}L(v_{1q},\dots,v_{aq}+sw_{aq},\dots,v_{kq})\right)-\Ac_{\mu_a}(w_{aq})\\
&=\langle(F L)^a(\pmb{v})),w_{aq}\rangle-\Ac_{\mu_a}(w_{aq})
=\langle\mathcal{S}^a\circ (F L)^a(\pmb{v}),w_{aq}\rangle \\
&=\langle\mathcal{T}^a\circ \mathcal{S}\circ (\F L)(\pmb{v}),\tilde\tau^a(\pmb{w})\rangle=\langle\tau^a\circ \F L\circ(\tilde\tau)^{-1}(\tilde\tau(\pmb{v})),\tilde\tau^a(\pmb{w})\rangle,
\end{align*}
with $\tau^a\colon J^{-1}(\mu)\to T^*(Q/G)$ the $a$-th component of $\tau$. Hence $\F\Ro=\tau\circ \F L\circ(\tilde\tau)^{-1}$ as desired. The proof of $(iii)$ is similar.
\end{proof}

We have now the following diagram (see~Diagram~\eqref{dia:red})
\begin{equation}\label{dia:red2}
\begin{tikzcd}[column sep=large]
T_k^1Q \arrow[r,"\F L"]\arrow[d,"{\rm Poly-Red}"']  & (T_k^1)^*Q\arrow[d,"{\rm Poly-Red}"]\\
p_\mu^*\left(T^1_k(Q/G)\right)\arrow[r,"\F \Ro_\mu"]& p_\mu^*\left((T^1_k)^*(Q/G)\right)
\end{tikzcd}
\end{equation}
where both $\F L$ and $\F \Ro_\mu$ are polysymplectomophisms (we are using again Lemma~\ref{lem:lema1}). The arrows ``Poly-Red'' above account for polysymplectic reduction followed by an identification for each of the reduced spaces $J^{-1}(\mu)/G_\mu$ and $J_L^{-1}(\mu)/G_\mu$. 

We see in Diagram~\eqref{dia:red2} that the function $\Ro_\mu$ plays the role of a reduced Lagrangian in the space $p_\mu^*\left(T^1_k(Q/G)\right)$; we will give precise meaning to this analogy soon. So far, we have proved the following result:

\begin{theorem}\label{thm:reducedform} Let $L\colon T^1_k Q\to\R$ be a regular, $G$-invariant and $G$-regular Lagrangian and consider the polysymplectic manifold $(T^1_k Q,\omega^a_{Q,L})$. Let $\mu\in\lagdk$ be a regular value of the momentum map and fix a principal connection $\Ac$ on $\pi\colon Q\to Q/G$. 

Then the reduced polysymplectic space can be identified with 
 \[
p_\mu^*\left(T^1_k(Q/G)\right). 
 \]
The reduced polysymplectic forms are given, for each $a=1,\dots,k$, by
\begin{equation}\label{eq:reducedform}
\overline{\omega}^a_\mu= (\F \Ro_\mu)^*\left(({\rm pr}^a_2)^*\omega_{Q/G}-({\rm pr}_1)^*\Bc_{\mu_a}\right). 
\end{equation}
\end{theorem}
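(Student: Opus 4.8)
The plan is to assemble the result from pieces that are, at this point, essentially already in hand: the cotangent-bundle reduction Theorem~\ref{thm:identification}, the $G$-regularity identification in Proposition~\ref{prop:identificationJL}, the functorial Lemma~\ref{lem:lema1}, and the Routhian lemma relating $\F\Ro_\mu$ to $(\F L)_\mu$ via $\tau_\mu$ and $\tilde\tau_\mu$. So the proof is really a matter of tracking the identifications through Diagram~\eqref{dia:red} and Diagram~\eqref{dia:red2} and checking that the pullback of the Hamiltonian-side reduced form along $\F\Ro_\mu$ is the Lagrangian-side reduced form.

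First I would recall that $L$ being hyperregular makes $\F L\colon T^1_kQ\to (T^1_k)^*Q$an equivariant polysymplectomorphism between $(T^1_kQ,\omega^a_{Q,L})$ and $((T^1_k)^*Q,\omega^a_Q)$ with $\F L^*J' = J_L$; here I apply Lemma~\ref{lem:lema1} to conclude that the induced map $(\F L)_\mu\colon J_L^{-1}(\mu)/G_\mu\to J^{-1}(\mu)/G_\mu$ satisfies $(\F L)_\mu^*\omega^a_\mu = \omega^a_{\mu,L}$, where $\omega^a_{\mu,L}$ is the reduced polysymplectic form on $J_L^{-1}(\mu)/G_\mu$ furnished by Theorem~\ref{thm:polyred}. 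Next I invoke Theorem~\ref{thm:identification} to transport $\omega^a_\mu$ on $J^{-1}(\mu)/G_\mu$ to $p_\mu^*((T^1_k)^*(Q/G))$ via $\tau_\mu$: explicitly $\tau_\mu^*\big(({\rm pr}^a_2)^*\omega_{Q/G}-({\rm pr}_1)^*\Bc_{\mu_a}\big) = \tilde\omega^a_\mu$, the reduced form on $J^{-1}(\mu)/G_\mu$. Then Proposition~\ref{prop:identificationJL} gives the diffeomorphism $\tilde\tau_\mu\colon J_L^{-1}(\mu)/G_\mu\to p_\mu^*(T^1_k(Q/G))$, so I define $\overline{\omega}^a_\mu := (\tilde\tau_\mu^{-1})^*\omega^a_{\mu,L}$ as the transported reduced form on the identified space $p_\mu^*(T^1_k(Q/G))$, which is the object the theorem asserts an expression for.

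The heart of the argument is then the short chase: using part $(iii)$ of the Routhian lemma, $\F\Ro_\mu = \tau_\mu\circ (\F L)_\mu\circ(\tilde\tau_\mu)^{-1}$, so
\begin{align*}
(\F\Ro_\mu)^*\big(({\rm pr}^a_2)^*\omega_{Q/G}-({\rm pr}_1)^*\Bc_{\mu_a}\big)
&= (\tilde\tau_\mu^{-1})^*(\F L)_\mu^*\tau_\mu^*\big(({\rm pr}^a_2)^*\omega_{Q/G}-({\rm pr}_1)^*\Bc_{\mu_a}\big)\\
&= (\tilde\tau_\mu^{-1})^*(\F L)_\mu^*\,\tilde\omega^a_\mu
= (\tilde\tau_\mu^{-1})^*\omega^a_{\mu,L}
= \overline{\omega}^a_\mu,
\end{align*}
which is exactly \eqref{eq:reducedform}. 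I would insert one sentence verifying the compatibility at the level of the unreduced spaces first, namely that $\tau^*(\pi^0_\mu)^*(\cdots) = i_\mu^*\omega^a_Q$ (which is \eqref{eq:equalityforms}, already proved) and the analogous identity $\tilde\tau^*(\tilde\pi^0_\mu)^*(\cdots) = \tilde i_\mu^*\omega^a_{Q,L}$, so that passing to the quotient in Diagram~\eqref{dia:red} is legitimate and all squares commute; this is where I also note that the identification of $J_L^{-1}(\mu)/G_\mu$ with $p_\mu^*(T^1_k(Q/G))$ is over the identity on $Q/G_\mu$, as stated in Proposition~\ref{prop:identificationJL}.

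The main obstacle is not any single computation but the bookkeeping: one must be careful that the several ``$\F L$'s'' appearing (on all of $T^1_kQ$, restricted to $J_L^{-1}(\mu)$, and its quotient $(\F L)_\mu$) are consistently the same map, that $\F L$ genuinely pulls back $J$ to $J_L$ (which is immediate from $J_L = J\circ\F L$), and that Lemma~\ref{lem:lema1} applies with $f = \F L$ — this needs $\F L$ to be a global diffeomorphism, hence the hyperregularity hypothesis, and equivariance, which was noted just before the definition of $J_L$. A secondary point worth a line is that Theorem~\ref{thm:identification} was stated for the canonical momentum map $J$ on $(T^1_k)^*Q$, and one should confirm that the $\mu$ appearing there is the image under $\F L$ of the $\mu$ on the Lagrangian side — but since $\F L^*J = J_L$ sends $J_L^{-1}(\mu)$ to $J^{-1}(\mu)$ for the \emph{same} $\mu\in\lagdk$, this is automatic. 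Once these identifications are pinned down, the proof is the one-line diagram chase above.
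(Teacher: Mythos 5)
Your proposal is correct and follows essentially the same route as the paper, whose ``proof'' of Theorem~\ref{thm:reducedform} is precisely the accumulation of Lemma~\ref{lem:lema1} applied to the equivariant polysymplectomorphism $\F L$, the identifications of Theorem~\ref{thm:identification} and Proposition~\ref{prop:identificationJL}, and part $(iii)$ of the Routhian lemma, combined in the diagram chase you spell out. Your explicit pullback computation $(\F\Ro_\mu)^*\bigl(({\rm pr}^a_2)^*\omega_{Q/G}-({\rm pr}_1)^*\Bc_{\mu_a}\bigr)=(\tilde\tau_\mu^{-1})^*\omega^a_{\mu,L}$ and your remark that hyperregularity (rather than mere regularity) is what makes $\F L$ a global diffeomorphism so that Lemma~\ref{lem:lema1} applies are both consistent with the paper's own setup.
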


\subsection{The reduced Lagrangian field theory}

The goal now is to relate solutions of the original Lagrangian field theory with solutions of a reduced Lagrangian field theory. 

\begin{lemma}\label{lem:energy} The energy $E_L\colon T^1_kQ\to\R$ is $G_\mu$-invariant. Its reduction is the function $E_{\Ro_\mu}$ defined as follows:
\[
E_{\Ro_\mu}(\pmb{v})\equiv \langle \F \Ro_\mu(\pmb{v}),\pmb{v} \rangle -\Ro_\mu (\pmb{v}),\qquad  \pmb{v}\in p_\mu^*\left(T^1_k(Q/G)\right).
\]
\end{lemma}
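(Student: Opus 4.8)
The statement has two parts: (i) $G_\mu$-invariance of the energy $E_L$, and (ii) the identification of its reduction with $E_{\Ro_\mu}$. For part (i), I would recall that $E_L(\pmb{v}) = \langle \F L(\pmb{v}),\pmb{v}\rangle - L(\pmb{v})$, and use the two invariance properties already at our disposal: $L$ is $G$-invariant (hence $G_\mu$-invariant) by assumption, and the Legendre transformation $\F L$ is $G$-equivariant, $\F L(g\cdot \pmb{v}_q) = g\cdot \F L(\pmb{v}_q)$, as noted earlier in the section. Combining these with the pairing identity $\langle \pmb{\alpha}_q,\pmb{v}_q\rangle = \langle g\cdot \pmb{\alpha}_q, g\cdot \pmb{v}_q\rangle$ from Section~\ref{subsec:connections}, one gets $E_L(g\cdot\pmb{v}) = \langle g\cdot\F L(\pmb{v}), g\cdot\pmb{v}\rangle - L(g\cdot\pmb{v}) = \langle \F L(\pmb{v}),\pmb{v}\rangle - L(\pmb{v}) = E_L(\pmb{v})$, so $E_L$ descends to a function on the quotient $J_L^{-1}(\mu)/G_\mu$.

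For part (ii), the idea is to transport everything through the identification $\tilde\tau_\mu \colon J_L^{-1}(\mu)/G_\mu \to p_\mu^*(T^1_k(Q/G))$ and compare with the Routhian picture. Concretely, I would compute the reduced energy at a point $\tilde\tau(\pmb{v}_q)$ with $\pmb{v}_q\in J_L^{-1}(\mu)$ and show it equals $E_{\Ro}(\tilde\tau(\pmb{v}_q))$ before passing to the quotient. Using part (ii) of the previous lemma, $\F\Ro = \tau\circ\F L\circ(\tilde\tau)^{-1}$, together with the fact (also from that lemma's proof) that the pairing is preserved componentwise: $\langle \F L(\pmb{v})^a, v_{aq}\rangle$ differs from $\langle (\F\Ro)^a(\tilde\tau(\pmb{v})), \tilde\tau^a(\pmb{v})\rangle$ exactly by the terms $\Ac_{\mu_a}(v_{aq})$. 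Summing over $a$, $\langle \F\Ro(\tilde\tau(\pmb{v})),\tilde\tau(\pmb{v})\rangle = \langle\F L(\pmb{v}),\pmb{v}\rangle - \sum_a \Ac_{\mu_a}(v_{aq})$. On the other hand, by definition of the Routhian, $\Ro(\tilde\tau(\pmb{v})) = R(\pmb{v}) = L(\pmb{v}) - \sum_a \Ac_{\mu_a}(v_{aq})$. Subtracting, the magnetic correction terms $\sum_a \Ac_{\mu_a}(v_{aq})$ cancel identically, yielding $\langle\F\Ro(\tilde\tau(\pmb{v})),\tilde\tau(\pmb{v})\rangle - \Ro(\tilde\tau(\pmb{v})) = \langle\F L(\pmb{v}),\pmb{v}\rangle - L(\pmb{v}) = E_L(\pmb{v})$, which is precisely $E_\Ro(\tilde\tau(\pmb{v}))$.

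Finally I would pass to the quotient by $G_\mu$: since $E_L$ is $G_\mu$-invariant, $\Ro$ is $G_\mu$-invariant (part (i) of the previous lemma), and $\tilde\tau$ is $G_\mu$-equivariant, the equality $E_L = E_\Ro\circ\tilde\tau$ descends to $E_L^{\mathrm{red}} = E_{\Ro_\mu}\circ\tilde\tau_\mu$ on $p_\mu^*(T^1_k(Q/G))$, using also the analogous statement $\F\Ro_\mu = \tau_\mu\circ(\F L)_\mu\circ(\tilde\tau_\mu)^{-1}$ from part (iii) of the previous lemma to justify that $E_{\Ro_\mu}(\pmb{v}) = \langle\F\Ro_\mu(\pmb{v}),\pmb{v}\rangle - \Ro_\mu(\pmb{v})$ is the genuine reduction. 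The only mild subtlety — hardly an obstacle — is bookkeeping the pairing between $\pi^*((T^1_k)^*(Q/G))$ and $\pi^*(T^1_k(Q/G))$ versus the pairing on $T^1_kQ$, but this is exactly the componentwise compatibility $\langle\alpha^a_q,v_{aq}\rangle = \langle\mathcal{T}^a(\pmb{\alpha}_q),\tilde\tau^a(\pmb{v}_q)\rangle$ recorded in the proof of the preceding lemma, so it can be invoked directly rather than re-derived.
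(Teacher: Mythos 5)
Your proposal is correct and follows essentially the same route as the paper: the invariance of $E_L$ from the equivariance of $\F L$, and the identification of the reduced energy via the cancellation of the $\sum_a \Ac_{\mu_a}(v_{aq})$ terms between the Legendre pairing and the Routhian (the paper phrases this as $E_L(\pmb{w})=\langle \F R(\pmb{w}),\pmb{w}\rangle-R(\pmb{w})$ on $T^1_kQ$ before pulling back through $\tilde\tau^{-1}$, which is the same computation you perform after transporting by $\tilde\tau$).
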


\begin{proof} The fact that $E_L\colon T^1_kQ\to\R$ is invariant follows directly from the equivariance of the Legendre transformation $\F L$. For the second part, note that $E_L$ might be as written in the form 
 \[
 E_L(\pmb{w})=\langle \F L (\pmb{w}),\pmb{w}\rangle -L(\pmb{w})=\langle \F R (\pmb{w}),\pmb{w}\rangle -R(\pmb{w}),\qquad  \pmb{w}\in T^1_kQ.
 \]
Therefore its pullback to $\pi^*(T^1_k(Q/G))$ is
\[
\big((\tilde\tau^{-1})^*E_L\big)(\pmb{v})=\langle \F \Ro (\pmb{v}),\pmb{v}\rangle -\Ro(\pmb{v}),\qquad  \pmb{v}\in \pi^*(T^1_k(Q/G)).
\]
From here the claim follows directly.
\end{proof}

Theorem~\ref{thm:reducedform}, together with Lemma~\ref{lem:energy}, shows that the Routhian $\Ro_\mu$ plays the role of the Lagrangian function in the reduced Lagrangian field theory: it encodes both the polysymplectic form and the energy function. It should be noted, however, that the reduced polysymplectic form~\eqref{eq:reducedform} has an additional term, so the reduced equations of motion are not the usual EL equations. This additional term does not depend on the Routhian, but on the chosen principal connection  $\Ac$.

We can finally state the main reduction theorem:
\begin{theorem}\label{thm:red1} Let $L\colon T^1_kQ\to\R$ be an hyperregular, $G$-invariant and $G$-regular Lagrangian, and fix a principal connection $\Ac$ on $\pi\colon Q\to Q/G$. Then, if $\pmb{\Gamma}$ is a $G_\mu$-invariant solution of the $k$-symplectic Euler-Lagrange equations
\[
 \sum_a \Gamma_a\lrcorner \omega_{Q,L}^a=dE_L, 
\]
which is tangent to $J_L^{-1}(\mu)$, the reduced $k$-vector field $\overline{\pmb{\Gamma}}_\mu$ on $p_\mu^*(T^1_k(Q/G))$ satisfies the following $k$-symplectic system:
\[
\sum_a \left(\overline{\Gamma}_\mu\right)_a\lrcorner \overline{\omega}_\mu^a=dE_{\Ro_\mu},  
\] 
where $\overline{\omega}_\mu^a$ are given by~\eqref{eq:reducedform}. 
\end{theorem}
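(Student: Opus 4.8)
The plan is to recognise the $k$-symplectic Euler--Lagrange equations as a particular instance of the abstract $k$-symplectic Hamiltonian equations~\eqref{eq:k-Sym} and then to invoke Theorem~\ref{thm:polyred2} on the polysymplectic manifold $(T^1_kQ,\omega^a_{Q,L})$. Since $L$ is regular this is indeed a polysymplectic manifold, and the equations $\sum_a\Gamma_a\lrcorner\omega^a_{Q,L}=dE_L$ are exactly~\eqref{eq:k-Sym} for the ``Hamiltonian'' $E_L$. The group $G$ acts canonically on it with equivariant momentum map $J_L=J\circ\F L$; because $L$ is hyperregular, $\F L\colon(T^1_kQ,\omega^a_{Q,L})\to((T^1_k)^*Q,\omega^a_Q)$ is an equivariant polysymplectomorphism with $J\circ\F L=J_L$, and since the canonical momentum map $J$ on $(T^1_k)^*Q$ satisfies conditions (C1) and (C2) for a free action (as recalled in Section~\ref{sec:cotangentreduction}), so does $J_L$: the subspaces $\ker(T_xJ_L^a)$, $T_x(J_L^{-1}(\mu))$, $\ker\omega^a_{Q,L}\mid_x$ and $T_x(G_{\mu_a}\cdot x)$ are all carried by $T_x\F L$ to their counterparts at $\F L(x)$. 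Hence the hypotheses of Theorems~\ref{thm:polyred} and~\ref{thm:polyred2} are in force for $(T^1_kQ,\omega^a_{Q,L},J_L)$.

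Next I would verify the remaining hypotheses of Theorem~\ref{thm:polyred2} for the data at hand: by Lemma~\ref{lem:energy} the energy $E_L$ is $G_\mu$-invariant (in fact $G$-invariant, by equivariance of $\F L$), and by assumption $\pmb{\Gamma}$ is $G_\mu$-invariant and tangent to $J_L^{-1}(\mu)$. Applying Theorem~\ref{thm:polyred2} together with Remark~\ref{remark1} (which permits $G_\mu$-invariance in place of $G$-invariance), the restriction $\pmb{\Gamma}\mid_{J_L^{-1}(\mu)}$ descends to a reduced $k$-vector field $\overline{\pmb{\Gamma}}$ on $J_L^{-1}(\mu)/G_\mu$ solving $\sum_a\overline{\Gamma}_a\lrcorner\,\omega^a_{L,\mu}=dE_{L,\mu}$, where $\omega^a_{L,\mu}$ denotes the reduced polysymplectic form on $J_L^{-1}(\mu)/G_\mu$ supplied by Theorem~\ref{thm:polyred} and $E_{L,\mu}$ is the reduction of $E_L$.

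It then remains to transport this identity to $p_\mu^*\big(T^1_k(Q/G)\big)$ along the diffeomorphism $\tilde\tau_\mu$ of Proposition~\ref{prop:identificationJL}, and to identify the transported objects with those in the statement. For the forms I would chase Diagram~\eqref{dia:red}: writing $\omega^a_{Q,\mu}$ for the reduced form on $J^{-1}(\mu)/G_\mu$ of $(T^1_k)^*Q$ given by Theorem~\ref{thm:polyred}, and using the identity $\F\Ro_\mu\circ\tilde\tau_\mu=\tau_\mu\circ(\F L)_\mu$ (which follows from part (iii) of the lemma computing $\F\Ro_\mu$), the polysymplectomorphism property of $\tau_\mu$ (Theorem~\ref{thm:identification}) and that of $(\F L)_\mu$ (Lemma~\ref{lem:lema1}), one gets
\[
(\tilde\tau_\mu)^*\overline{\omega}^a_\mu
=\big((\F L)_\mu\big)^*(\tau_\mu)^*\big(({\rm pr}^a_2)^*\omega_{Q/G}-({\rm pr}_1)^*\Bc_{\mu_a}\big)
=\big((\F L)_\mu\big)^*\omega^a_{Q,\mu}=\omega^a_{L,\mu}.
\]
For the energy, Lemma~\ref{lem:energy} and the defining formula of $E_{\Ro_\mu}$ give $E_{\Ro_\mu}\circ\tilde\tau_\mu=E_{L,\mu}$. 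Finally, since the reduced $k$-vector field $\overline{\pmb{\Gamma}}_\mu$ on $p_\mu^*\big(T^1_k(Q/G)\big)$ is by definition $(\tilde\tau_\mu)_*\overline{\pmb{\Gamma}}$, naturality of the interior product under the diffeomorphism $\tilde\tau_\mu$ yields $(\tilde\tau_\mu)^*\big(\sum_a(\overline{\Gamma}_\mu)_a\lrcorner\overline{\omega}^a_\mu\big)=\sum_a\overline{\Gamma}_a\lrcorner\,\omega^a_{L,\mu}$ and $(\tilde\tau_\mu)^*dE_{\Ro_\mu}=dE_{L,\mu}$; comparing with the equation obtained in the previous paragraph, and using that $\tilde\tau_\mu$ is a diffeomorphism, gives $\sum_a(\overline{\Gamma}_\mu)_a\lrcorner\overline{\omega}^a_\mu=dE_{\Ro_\mu}$, as desired.

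I do not expect a genuine obstacle: the argument essentially threads the equation produced by Theorem~\ref{thm:polyred2} through the identifications already set up in Section~\ref{sec:Routh}, and no integrability of $\pmb{\Gamma}$ is needed. The only points requiring a little care are the verification that conditions (C1)--(C2) and all the equivariance and pullback relations survive the passage along $\F L$ and $\tilde\tau_\mu$, and in particular the observation---implicit in Theorem~\ref{thm:reducedform}---that the reduced form $\overline{\omega}^a_\mu$ of~\eqref{eq:reducedform} is literally the push-forward under $\tilde\tau_\mu$ of the abstract reduced form $\omega^a_{L,\mu}$ of Theorem~\ref{thm:polyred}; once this is recorded, the final comparison is immediate.
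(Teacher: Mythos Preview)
Your proposal is correct and follows essentially the same route as the paper: the paper's proof is a one-line appeal to Theorem~\ref{thm:polyred2} together with Theorem~\ref{thm:reducedform} and Lemma~\ref{lem:energy}, and you have simply unpacked these ingredients---verifying that (C1)--(C2) transfer along $\F L$, applying the dynamical reduction, and then transporting the resulting equation through $\tilde\tau_\mu$ using the identifications already established. No additional ideas are needed beyond what you have written.
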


\begin{proof} This follows from Theorem~\ref{thm:polyred} taking into account the observations above, Theorem \ref{thm:reducedform} and Lemma \ref{lem:energy}. 
\end{proof}

 We recall that $\pmb{\Gamma}_\mu$ denotes the restriction of $\pmb{\Gamma}$ to $J_L^{-1}(\mu)$. With regards to reconstruction, our situation is a particular case of Theorem~\ref{thm:reconstruction}:

\begin{theorem}\label{thm:reconstruction2} Under the same conditions of Theorem~\ref{thm:red1}, let $\overline{\phi}_\mu\colon \R^k\to p_\mu^*(T^1_k(Q/G))$ be an integral section of $\overline{\Gamma}_\mu$ such that the connection $\mathcal{H}(\pmb{\Gamma}_\mu,\overline{\phi}_\mu)$ is flat. Then there exists an integral section $\phi_\mu\colon \R^k\to J_L^{-1}(\mu)$ of $\pmb{\Gamma}_\mu$ with $\pi_\mu\circ \phi_\mu=\overline{\phi}_\mu$.
\end{theorem}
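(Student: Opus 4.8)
The plan is to realize Theorem~\ref{thm:reconstruction2} as a straightforward instance of the general reconstruction result, Theorem~\ref{thm:reconstruction}, with the right bookkeeping of spaces. First I would set $N=(T^1_k)^*Q$ with the Lagrangian polysymplectic structure $\omega^a_{Q,L}=(\F L)^*\omega^a_Q$ (note: the natural ambient manifold here is $T^1_kQ$ itself, since $L$ is hyperregular), $J=J_L$ as the equivariant momentum map, and $\pmb{X}=\pmb{\Gamma}$. The hypotheses of Theorem~\ref{thm:red1} — namely that $\pmb{\Gamma}$ is $G_\mu$-invariant and tangent to $J_L^{-1}(\mu)$ — are precisely conditions $(i)$ and $(ii)$ of Theorem~\ref{thm:polyred2} applied to the Hamiltonian $E_L$ on $(T^1_kQ,\omega^a_{Q,L})$. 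Since the polysymplectic reduction conditions (C1), (C2) hold for $J_L^{-1}(\mu)$ by the same argument as for the canonical momentum map $J$ on $(T^1_k)^*Q$ (the two are intertwined by the equivariant diffeomorphism $\F L$ via Lemma~\ref{lem:lema1}, and $\F L$ pulls back $J$ to $J_L$, so (C1)/(C2) transfer verbatim), the setting of Theorem~\ref{thm:reconstruction} applies with $P=J_L^{-1}(\mu)$ and $G=G_\mu$.

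Next I would invoke Theorem~\ref{thm:reconstruction} directly: given an integral section $\overline{\phi}_\mu\colon\R^k\to N_\mu=J_L^{-1}(\mu)/G_\mu$ of the reduced $k$-vector field $\overline{\pmb{\Gamma}}_\mu$, and assuming the connection $\mathcal{H}(\pmb{\Gamma}_\mu,\overline{\phi}_\mu)$ on the bundle $\overline{\phi}_\mu^*(J_L^{-1}(\mu))$ is flat, Proposition~\ref{pro:reconstruction} (which underlies Theorem~\ref{thm:reconstruction}) guarantees the existence of an integral section $\phi_\mu\colon\R^k\to J_L^{-1}(\mu)$ of $\pmb{\Gamma}_\mu$ through any prescribed point in the fiber over $\overline{\phi}_\mu(0)$, and the argument following the proof of Proposition~\ref{pro:reconstruction} shows $\pi_\mu\circ\phi_\mu=\overline{\phi}_\mu$ automatically, since $T(\pi_\mu\circ\phi_\mu)(\partial/\partial t^a)=T\pi_\mu\circ(\Gamma_\mu)_a=(\overline{\Gamma}_\mu)_a=T\overline{\phi}_\mu(\partial/\partial t^a)$ and both agree at $0$. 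The identification of $J_L^{-1}(\mu)/G_\mu$ with $p_\mu^*(T^1_k(Q/G))$ from Proposition~\ref{prop:identificationJL} lets us phrase the statement in the coordinates used in Theorem~\ref{thm:red1}, so $\overline{\phi}_\mu$ may be regarded equivalently as a section into $p_\mu^*(T^1_k(Q/G))$.

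The only genuine content beyond citation is checking that the reduction hypotheses are truly met, i.e.\ that $\mu$ is a regular value of $J_L$ and that $G_\mu$ acts freely and properly on $J_L^{-1}(\mu)$. Both follow because $\F L$ is an equivariant diffeomorphism conjugating $J_L$ to the canonical $J$ on $(T^1_k)^*Q$, for which these facts were established in Section~\ref{sec:cotangentreduction} under the standing assumption that the $G$-action on $Q$ is free and proper; $G$-regularity of $L$ plays no extra role here (it was only used to get the explicit identification of the reduced manifold), but it is part of the standing hypotheses anyway. I do not expect any real obstacle: the proof is essentially the observation that Theorem~\ref{thm:red1} already places us inside the hypotheses of Theorem~\ref{thm:reconstruction} with $P=J_L^{-1}(\mu)$, $\pmb{X}=\pmb{\Gamma}_\mu$. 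If there is a subtle point, it is in keeping the two polysymplectic structures on the Lagrangian side and the Hamiltonian side straight — one must make clear that the connection $\mathcal{H}(\pmb{\Gamma}_\mu,\overline{\phi}_\mu)$ referred to here is the $G_\mu$-principal connection on $\overline{\phi}_\mu^*(J_L^{-1}(\mu))$ built from $\pmb{\Gamma}_\mu$, exactly as in the remark following Theorem~\ref{thm:reconstruction}, and that flatness of this connection is the curvature condition $\mathrm{Ver}([(\Gamma_\mu)_a,(\Gamma_\mu)_b])=0$ of Lemma~\ref{lem:integrability}. So the proof is a two-line reduction to the earlier theorem, which is presumably exactly what the authors wrote.
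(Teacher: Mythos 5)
Your proposal is correct and matches the paper's route exactly: the paper offers no separate argument for Theorem~\ref{thm:reconstruction2}, stating only that it is a particular case of Theorem~\ref{thm:reconstruction} applied with $P=J_L^{-1}(\mu)$, $G=G_\mu$ and $\pmb{X}=\pmb{\Gamma}_\mu$ on the Lagrangian polysymplectic manifold $(T^1_kQ,\omega^a_{Q,L})$, with the hypotheses transferred from the Hamiltonian side via the equivariant Legendre transformation and the identification of Proposition~\ref{prop:identificationJL}. Your opening slip of writing $N=(T^1_k)^*Q$ is immediately self-corrected to $T^1_kQ$ and is harmless.
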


\section{Examples} \label{ex2sec}
 
\subsection{Navier's equations} \label{exNE}

We come back to Navier's equations from Section~\ref{subsec:someexamples}. To ease the notation we will write $x\equiv t^1$ and $y\equiv t^2$ for the parameters of the field theory.

The Lagrangian~\eqref{eq:NavierLagrangian} has $q^1$ and $q^2$ as cyclic variables.
First, we will consider only   the translations in  one of the variables, say $q^1$, as the symmetry group.
 Then $\xi_Q=\partial/ \partial {q^1}$ and the momentum map $J_L\colon T^1_2\R\to \R^2$ is
\[
J_L(q^i,v^i_a)=\left(\fpd{L}{v^1_1},\fpd{L}{v^1_2}\right)= \left((\lambda+2\nu)v^1_1+(\lambda+\nu)v^2_2,\nu v^1_2\right). 
\]

If we fix a momentum $\mu=(\mu_1,\mu_2)$ and impose that each of the components of the momentum map is preserved we get the following relations:
\begin{equation}\label{eq:naviermomentum}
v^1_1=\frac{\mu_1-(\lambda+\nu)v^2_2}{\lambda+2\nu},\qquad v^1_2=-\frac{\mu_2}{\nu}. 
\end{equation}
Apart from the regularity conditions $\nu\neq 0$ and $2\lambda+3\nu\neq 0$, one needs to make the assumption that $\lambda+2\nu\neq0$ to ensure $G$-regularity. If we take the standard connection of the bundle $(x,y)\mapsto x$ we find  for the Routhian  $R\colon T^1_2\R^2\to \R$:
\[
R(q^i,v^i_a)= L-\mu_1 v^1_1-\mu_2 v^1_2.
\]
The restriction of $R$ to $J_L^{-1}(\mu)$ is:
\[
R(q^1,q^2,v^2_1,v^2_2)= \frac{\nu}{2(\lambda+2\nu)}\left[ (\lambda+2\nu)(v^2_1)^2+(2\lambda+3\nu)(v^2_2)^2\right]+ C_1v^2_2+C_2,
\]
where we have used~\eqref{eq:naviermomentum} to replace $v^1_1$ and $v^1_2$. Here $C_1$ and $C_2$ are constants depending on the values of $\mu_1$ and $\mu_2$. Since $G$ is Abelian, $G_\mu=G$ and the Routhian $\Ro_\mu\colon T^1_2\R\to\R$ has the same expression as $R$ (note, however, that it is defined on a different space).

A solution $\psi\colon \R^2\to \R$ of the Euler-Lagrange equations for $\Ro_\mu$, which can also be obtained in the polysymplectic framework since $\Ro_\mu$ is regular, satisfies the Laplace equation:
\[
(\lambda+2\nu)\psi_{xx}+(2\lambda+3\nu)\psi_{yy}=0. 
\]
Consider, for simplicity, the solution $\psi(x,y)=xy$. A solution $\phi=(\varphi,\psi)\colon \R^2\to \R^2$ of the original Euler-Lagrange equations for $L$ can be obtained integrating the momentum constraints~\eqref{eq:naviermomentum}, namely 
\[
\varphi_x=\frac{\mu_1-(\lambda+\nu)x}{\lambda+2\nu},\qquad \varphi_y=-\frac{\mu_2}{\nu}.
\]
This gives
\[
\phi=\left(C+\frac{\mu_1x-(\lambda+\nu)x^2}{2(\lambda+2\nu)}-\frac{\mu_2}{\nu}y,xy\right), 
\]
for some constant $C$. One can readily check that $\phi$ solves~\eqref{eq:Navier} (with $\phi^1=\varphi$ and $\phi^2=\psi$). The same procedure applies to any solution $\psi$ of $\Ro_\mu$ as long as there exists a solution $\phi$ of the original Lagrangian projecting onto $\psi$. On the other hand, for the solution 
\[
\psi(x,y)=\cos\left(x\right) \cosh\left(\sqrt{\frac{\lambda+2\nu}{2\lambda+3\nu}}\;y\right)
\]
the system~\eqref{eq:naviermomentum} is inconsistent, since $\varphi_{yx}$ vanishes while $\varphi_{xy}$ does not. Therefore $\psi$ is not the projection of an invariant solution $\phi$ of $L$ tangent to the level set of $\mu$. We remind the reader that solutions $\psi$ of $\Ro_\mu$ that lift to a solution of $L$ have been characterized in Theorem~\ref{thm:reconstruction2}. 

Finally, we remark that there also exists solutions $\phi=(\varphi,\psi=xy)$ which can not be retrieved through the reconstruction procedure. In other words, even when solutions $\phi$ of $L$ exist which project onto $\psi$, it is not possible to reconstruct all of them through the procedure above. Consider for instance the solution
\[
\phi=\left(y^2-\frac{(\lambda+3\nu) x^2}{2(\lambda+2\nu)},xy\right) 
\]
of~\eqref{eq:Navier}. This can not be obtained integrating from~\eqref{eq:naviermomentum} for any choice of $\mu=(\mu_1,\mu_2)$ . This is to be expected because, for this solution, Noether's Theorem reads
\[
\fpd{}{x}\left(\fpd{L}{v^1_1}\circ \phi^{(1)}\right) + \fpd{}{y}\left(\fpd{L}{v^1_2}\circ \phi^{(1)}\right)=(\lambda+2\nu)\varphi_{xx}+(\lambda+\nu)\psi_{xy}+\nu\varphi_{yy}=0,
\]
and none of the components of $J_L$ is constant. For example, the first component of $J_L$ is not preserved along $\phi$:
\[
\fpd{L}{v^1_1}\circ \phi^{(1)}= (\lambda+2\nu)\phi_x+(\lambda+\nu)\psi_y=-2\nu x.
\]
We refer the reader to the discussion  after Remark~\ref{remark1}.

The Lagrangian of a complex scalar field~\eqref{eq:complexsf} is invariant under rotations in the $(\phi_1,\phi_2)$ plane.
 When expressed in polar coordinates, it is another example of a Lagrangian with a cyclic coordinate and one may proceed in a similar way as  above, for Navier's equations.

\begin{remark} Recall that the Lagrangian (\ref{eq:NavierLagrangian}) had two cyclic variables, and we could therefore have  reduced the system by both variables. However, it is often  not desirable to exhaust  the whole symmetry group, because it may limit the amount of suitable reduced solutions. In the particular case of Navier's equations,   one finds only  constant solutions when one uses the full symmetry group.  The same observation applies to the Laplace equation (\ref{Lapeq}).\end{remark}

\subsection{The case of a Lie group}\label{sec:exLie}

Given a Lie group $G$, we consider a (regular and $G$-regular) left invariant  Lagrangian $L\colon T^1_kG\to \R$. This general situation appears, for example, when one studies Lagrangians given by invariant metrics on $G$. In the mechanical case ($k$=1), this example is discussed in detail in \cite{LGC_class}. 

We first recall some well-known facts. When $G$ acts on itself by left translation, the principal bundle of interest is simply the projection of $G$ onto the neutral element $\{e\}$ (or any other point). The infinitesimal generators are the right-invariant vector fields, and every tangent vector $v_g\in G$ is vertical. This bundle admits a canonical flat principal connection given by $\Ac(v_g)=v_g\cdot g^{-1}$. If $\nu\in\lag^*$, the contraction $\Ac_\nu=\langle\nu,\Ac\rangle$ satisfies 
\begin{equation*}
d\Ac_\nu (g\cdot\xi,g\cdot\eta)=\langle\nu,[{\rm Ad}_g\xi,{\rm Ad}_g\eta]\rangle=\langle {\rm Ad}_g^*\nu,[\xi,\eta]\rangle, 
\end{equation*}
where we have used Cartan's structure equations.

We identify on the left $TG\to G\times\lag$, $v_g\mapsto g^{-1}\cdot v_g$, so that $L$ may be thought of as a function of $G\times \lagk$. Invariance then implies that $L$ is of the form
\[
L(g,g\cdot\xi_1,\dots,g\cdot\xi_k)=\ell(\xi_1,\dots,\xi_k)
\]
where $\ell\colon\lagk\to \R$. If we denote by $\pmb{F} \ell\colon\lagk\to\lagdk$ the fiber derivative of $\ell$, with components $(F\ell)^a\colon \lagk\to \lag^*$ given by
\[
\langle (F\ell)^a(\xi_1,\dots,\xi_k),\eta\rangle=\left.\frac{d}{ds}\right|_{s=0}\ell(\xi_1,\dots,\xi_a+s\eta,\dots,\xi_k), 
\]
then one finds
\begin{equation}\label{eq:ex1}
J_L(g,g\cdot\xi_1,\dots,g\cdot\xi_k)=\big({\rm Ad}^*_{g^{-1}}\big)^k\pmb{F} \ell(\xi_1,\dots,\xi_k).  
\end{equation}
One can prove~\eqref{eq:ex1} directly applying the reasoning in the mechanical case to the components $J_L^a$ of the momentum map $J_L$, see \cite{LGC_class}. In view of \eqref{eq:ex1}, fixing a momentum $\mu=(\mu_1,\dots,\mu_k)$ we find:
\begin{equation}\label{eq:ex3}
J_L(g,g\cdot\xi_1,\dots,g\cdot\xi_k)=\mu\iff  \pmb{F} \ell(\xi_1,\dots,\xi_k)=\big({\rm Ad}^*_{g}\big)^k\mu=({\rm Ad}^*_{g}\mu_1,\dots,{\rm Ad}^*_{g}\mu_k). 
\end{equation}

We will use the following short notation. We let $\nu\in\lagdk$ be
\[
\nu\equiv (\nu_1,\dots,\nu_k)= ({\rm Ad}^*_{g}\mu_1,\dots,{\rm Ad}^*_{g}\mu_k)\equiv \big({\rm Ad}^*_{g}\big)^k\mu.
\]
Since every element on $TG$ is vertical w.r.t. the projection $G\to\{e\}$, from the last relation one checks easily that $G$-regularity implies that $\pmb{F} \ell$ is a diffeomorphism. We shall write
\[
\tau\equiv (\pmb{F} \ell)^{-1}\colon \lagdk\to\lagk
\]
to denote its inverse. The relation \eqref{eq:ex3} of the momentum constraint is then 
\[
\pmb{F} \ell(\xi_1,\dots,\xi_k)=\big({\rm Ad}^*_{g}\big)^k\mu=\nu\iff (\xi_1,\dots,\xi_k)=\tau(\nu).
\]

We need to recall the definition of the $k$-coadjoint orbit through $\mu\in\lagdk$, which is a polysymplectic manifold. We will denote it by $\mathcal{O}_\mu =\mathcal{O}_{(\mu_1,\dots,\mu_k)}$ and it is defined as the orbit of the ${\rm Coad}^k$-action, i.e.
\[
\mathcal{O}_\mu=\{g\cdot \mu\st g\in G\}\subseteq\lagdk. 
\]
It has been shown that $G/G_\mu\simeq \mathcal{O}_\mu$, see~\cite{Polyreduction}.

From the general considerations in this paper (in particular, Proposition~\ref{prop:identificationJL}), applied to the present case $Q=G$, we have $J_L^{-1}(\mu)/G_\mu\simeq G/G_\mu\simeq \mathcal{O}_\mu$. We will now describe how the Routhian $\Ro_\mu\colon \mathcal{O}_\mu\to \R$ is obtained. We will use the $\Ac$ introduced before. The Routhian $R\colon G\times \lagk\to \R$ is then:
\begin{align*}
R(g,g\cdot\xi_1,\dots,g\cdot\xi_k)&=L(g,g\cdot\xi_1,\dots,g\cdot\xi_k)- \langle \mu_1,\Ac(g\cdot\xi_1)\rangle - \dots - \langle \mu_k,\Ac(g\cdot\xi_k)\rangle\\
&=\ell(\xi_1,\dots,\xi_k)-\langle {\rm Ad}_g^*\mu_1,\xi_1\rangle - \dots - \langle {\rm Ad}_g^*\mu_k,\xi_k\rangle\\
&=\ell(\xi_1,\dots,\xi_k)-\langle \nu_1,\xi_1\rangle - \dots - \langle \nu_k,\xi_k\rangle.
\end{align*}
The restriction of $R$ to $J_L^{-1}(\mu)$, denoted $\Ro\colon G\to \R$ as in Section \ref{sec:Routh}, is then
\begin{equation}\label{eq:ex2}
\Ro(g)= \left.\left(\ell(\xi_1,\dots,\xi_k)-\langle \nu_1,\xi_1\rangle - \dots - \langle \nu_k,\xi_k\rangle\right)\right|_{(\xi_1,\dots,\xi_k)=\tau(\nu)}.
\end{equation}
We make an observation before moving on. The notation in \eqref{eq:ex2} is meant to imply that the occurrences of $\xi_1,\dots,\xi_k$ are replaced in terms of $\nu_1,\dots,\nu_k$ using the map $\tau$. Thus, the momentum map fixes the ``$\lagk$-component'' of $L$, and as a result $\Ro$ is a function of $G$ alone. In other words, the level set of $\mu$ is identified with $G$.

Proceeding with the reduction by $G_\mu$, we finally find that the (reduced Routhian) $\Ro_\mu\colon \mathcal{O}_\mu\to \R$ is
\[
\Ro_\mu(\nu_1,\dots,\nu_k)= \left.\left(\ell(\xi_1,\dots,\xi_k)-\langle \nu_1,\xi_1\rangle - \dots - \langle \nu_k,\xi_k\rangle\right)\right|_{(\xi_1,\dots,\xi_k)=\tau_\mu(\nu)},
\]
with $\tau_\mu$ the restriction of $\tau$ to $\mathcal{O}_\mu$. We have used that the diffeomorphism $G/G_\mu\to \mathcal{O}_\mu$ is given by $g\mapsto g\cdot \mu$.

\begin{remark} If one wants to compute the equations of motion one needs to take into account the force terms $\Bc_{\mu_a}$. Following \cite{LGC_class}, one can relate these terms to the the Kirillov–Kostant–Souriau symplectic forms on $\mathcal{O}_{\mu_a}\subseteq \lag^*$.  
\end{remark}

A particular case of the construction in the previous example is obtained when the Lagrangian $L$ is constructed from a left-invariant metric $\mathcal{G}$ on $G$. Let us write
\begin{equation}\label{eq:ex4}
L(g,g\cdot\xi_1,\dots,g\cdot\xi_k)=\frac{1}{2}\mathcal{G}(\xi_1,\xi_1)+\dots+\frac{1}{2}\mathcal{G}(\xi_k,\xi_k)=\ell(\xi_1,\dots,\xi_k). 
\end{equation}
The expression of $L$ makes it simple to compute the maps $\pmb{F}\ell$ and $\tau$. First, the fiber derivative $\pmb{F}\ell$ has components
\[
(F\ell)^1(\xi_1,\dots,\xi_k)=\flat(\xi_1),\quad (F\ell)^2(\xi_1,\dots,\xi_k)=\flat(\xi_2),\quad \dots,
\]
where $\flat\colon\lag\to\lag^*$ is the isomorphism induced by the metric $\langle \flat(\xi),\eta\rangle=\mathcal{G}(\xi,\eta)$. Therefore the inverse $\tau\colon \lagdk\to \lagk$ has components:
\[
(\tau)^1(\nu_1,\dots,\nu_k)=\flat^{-1}(\nu_1),\quad (\tau)^2(\nu_1,\dots,\nu_k)=\flat^{-1}(\nu_2),\quad \dots.
\]

Let us denote by $\widetilde{\mathcal{G}}=(\flat^{-1})^*(\mathcal{G})$ the metric induced on $\lag^*$. The Routhian is then
\begin{align*}
\Ro_\mu(\nu)&= \left.\left(\frac{1}{2}\mathcal{G}(\xi_1,\xi_1)+\dots+\frac{1}{2}\mathcal{G}(\xi_k,\xi_k)-\langle \nu_1,\xi_1\rangle - \dots - \langle \nu_k,\xi_k\rangle\right)\right|_{\xi_a=\flat^{-1}(\nu_a)}\\ 
&=\frac{1}{2}\widetilde{\mathcal{G}}(\nu_1,\nu_1)+\dots+\frac{1}{2}\widetilde{\mathcal{G}}(\nu_k,\nu_k)-\langle \nu_1,\flat^{-1}(\nu_1)\rangle - \dots - \langle \nu_k,\flat^{-1}(\nu_k)\rangle\\
&=\frac{1}{2}\widetilde{\mathcal{G}}(\nu_1,\nu_1)+\dots+\frac{1}{2}\widetilde{\mathcal{G}}(\nu_k,\nu_k)-\widetilde{\mathcal{G}}(\nu_1,\nu_1)-\dots-\widetilde{\mathcal{G}}(\nu_k,\nu_k)\\
&=-\left(\frac{1}{2}\widetilde{\mathcal{G}}(\nu_1,\nu_1)+\dots+\frac{1}{2}\widetilde{\mathcal{G}}(\nu_k,\nu_k)\right).
\end{align*}
This coincides, up to a sign, with the reduced Hamiltonian $H_\mu$ in the Example 4.3.1 of \cite{Polyreduction}. This is to be expected: since there are no fiber coordinates $\pmb{v}$ in $\Ro_\mu$, the usual ansatz to compute the Hamiltonian gives
\[
H_\mu(q,\pmb{\alpha})=\left(\langle\F \Ro_\mu(\pmb{v}),\pmb{v}\rangle-\Ro_\mu(q,\pmb{v})\right)_{\pmb{v}=(\F \Ro_\mu)^{-1}(\pmb{\alpha})}=-\Ro_\mu(q,\pmb{v}).
\]

The case of an invariant metric on a Lie group above, where the Lagrangian is given by \eqref{eq:ex4}, is of interest because it admits a solution $\pmb{\Gamma}$ which can be reduced to a solution of the Routhian $\Ro_\mu$ (in the sense of Theorem~\ref{thm:red1}). The explicit construction of such a solution $\pmb{\Gamma}$ can be done adapting the Hamiltonian version in Example 4.3.1 of \cite{Polyreduction} mentioned above. From a Lagrangian perspective, we will see in the next example that such a solution $\pmb{\Gamma}$ appears naturally.

\subsection{Invariant metrics and harmonic maps} \label{exHM}

Let us first recall some well-known facts.  Consider the natural Lagrangian $\tilde 
L\colon TQ\to \R$  that one may associate to a Riemannian metric $\mathcal{G}$ on $Q$:
\begin{equation}\label{eq:ex7}
\tilde {L}(q,v)=\frac{1}{2}\mathcal{G}(v,v)=\frac{1}{2}g_{ij}(q)v^iv^j. 
\end{equation}
This Lagrangian is regular, and its unique solution is the geodesic spray 
\[
\Gamma_{\tilde L}=v^j\fpd{}{q^j}-\Gamma^i_{jk}v^jv^k\fpd{}{v^i}, 
\]
where $\Gamma^i_{jk}$ are the Christoffel symbols of $\mathcal{G}$. If $\mathcal{G}$ is invariant under a $G$-action, then the Lagrangian is also $G$-invariant and $\Gamma_{\tilde L}$ is $G$-invariant. The second-order vector field $\Gamma_{\tilde L}$ is completely determined from the relation $i_{\Gamma_{\tilde{L}}}\omega_{Q,\tilde{L}}=dE_{\tilde L}$.

We now consider a Lagrangian $L\colon T^1_kQ\to\R$ of the form
\begin{equation}\label{eq:ex5}
L(v_1,\dots,v_k)=\frac{1}{2}\mathcal{G}(v_1,v_1)+\dots+\frac{1}{2}\mathcal{G}(v_k,v_k),  
\end{equation}
where $\mathcal{G}$ is a $G$-invariant metric on $Q$ (the previous example falls in this category when $Q=G$). The form of the Lagrangian~\eqref{eq:ex5} is motivated by the following well-known observation that we outlined in Section~\ref{subsec:someexamples}: the defining relation for a harmonic map between an Euclidean space and a Riemannian manifold $(Q,{\mathcal G})$ can be thought of as a Lagrangian field equation, when the Lagrangian of the field theory is exactly of the form~\eqref{eq:ex5}. We will now show that a Lagrangian of that form admits an invariant harmonic as solution.

Let us write, for simplicity, 
\[
\xi_Q=\Lambda^j\fpd{}{q^j} 
\]
for an infinitesimal generator of the action, where $\Lambda^j$ are functions on $Q$ determined from the map $\xi\mapsto \xi_Q$. Since $G$ acts on the left by isometries, each of these generators is a Killing vector field of the metric $\mathcal{G}$ and satisfies the Killing equation ${\mathcal L}_{\xi_Q} \mathcal{G}=0$, or
\begin{equation}\label{eq:killing}
\Lambda^j \fpd{g_{ki}}{q^j} + g_{ji}\fpd{\Lambda^j}{q^k} + g_{kj}\fpd{\Lambda^j}{q^i} = 0.
\end{equation}
The infinitesimal generators of the lifted action to $T^1_kQ$ are the complete lifts of $\xi_Q$:
\[
\clift{\xi_Q}= \Lambda^j\fpd{}{q^j}+\sum_b v^j_b \fpd{\Lambda^i}{q^j}\fpd{}{v^i_b}. 
\]
We also note that for the Lagrangian~\eqref{eq:ex5} the components of the momentum map are
\begin{equation}\label{eq:momentumlocked}
(J_L)_b^\xi=\mathcal{G}(v_b,\xi_Q)=g_{ij}v^i_b \Lambda^j. 
\end{equation}

Consider now the $k$-vector field $\pmb{\Gamma}$ on $T^1_kQ$ with components 
\begin{equation*}
\Gamma_a=v^i_a\fpd{}{q^i}- \Gamma^i_{jk}v^j_a \sum_b v^k_b\fpd{}{v^i_b}. 
\end{equation*}
One may readily verify that $\pmb{\Gamma}$ is a solution of \eqref{eq:k-EL} for the Lagrangian \eqref{eq:ex5}. Moreover, it is is $G$-invariant (this can be done, for instance, by verifying that the brackets $[\Gamma_a,\clift{\xi_Q}]$ vanish). Further, by making use  of the defining relation of the Christoffel symbols,
\[
\Gamma^p_{ki}  g_{p j} = \frac12 \fpd{g_{jk}}{q^i}+ \frac12\fpd{g_{ji}}{q^k} - \frac12 \fpd{g_{ik}}{q^j},
\]
and the Killing equation~\eqref{eq:killing}, we obtain the following expression for ($\Gamma_a\left((J_L)_b^\xi\right)$):
\[
\Gamma_a\left((J_L)_b^\xi\right) =v^k_bv_a^i\left(\frac 12 \fpd{g_{ij}}{q^k}\Lambda^j+ \frac12 g_{ij}\fpd{\Lambda^j}{q^k}-\frac12 \fpd{g_{jk}}{q^i} \Lambda^j - \frac12  g_{kj}\fpd{\Lambda^j}{q^i} \right).
\]
The expression above is of the type $v^i_a v^k_b S^j_{ik}$, where $S^j_{ik}$ is skew-symmetric in $i$ and $k$. An obvious case where this expression vanishes is that when  $v^i_a=v^i_b\equiv v^i$ for each $i$. This case occurs, e.g., when each of the components of the momentum map are equal, i.e.\ when $\mu=(\nu,\dots,\nu)$ for some $\nu\in\lag^*$ and when expression \eqref{eq:momentumlocked} has a unique solution $v^i$.  After briefly describing the effective computation of the Routhian for the Lagrangian~\eqref{eq:ex5} below, we give a concrete example where  we can apply the Routhian technique.

Let us choose the mechanical connection $\Ac$ on $Q\to Q/G$. Recall that vector $v_q\in TQ$ is horizontal for $\Ac$ if the following condition holds:
\[
v_q\; \text{is horizontal} \iff  \mathcal{G}(v_q,\xi_Q(q))=0,\; \text{for all}\; \xi\in\lag.
\]
In other words, the horizontal subspace of $\Ac$ is the orthogonal (w.r.t. the metric $\mathcal{G}$) of the vertical subbundle $V\pi$. The use of this principal connection greatly simplifies the computation of the Routhian. Our approach here follows \cite{RouthMarsden}.

It is customary to call the map $I_q\colon \lag\to\lag^*$ defined by
\[
\langle I_q(\xi),\eta\rangle=\mathcal{G}(\xi_Q(q),\eta_Q(q)) 
\]
the locked inertia tensor; we remark that, in general, it depends on the chosen point $q\in Q$. Let us denote $\zeta_a=\Ac(v_a)\in\lag$. When $\mu=J_{L}(\pmb{v})$, from \eqref{eq:momentumlocked} we have
\[
\mathcal{G}({\rm Ver}(v_a),{\rm Ver}(v_a))= \mathcal{G}\big((\zeta_a)_Q(q),(\zeta_a)_Q(q)\big)=\langle I_q(\zeta_a),\zeta_a\rangle=\langle \mu_a,\zeta_a\rangle,
\]
and we can write $\mathcal{G}({\rm Ver}(v_a),{\rm Ver}(v_a))=\langle\mu_a,I_q^{-1}(\mu_a)\rangle$. Then, if for each $v_a$ we decompose it in its horizontal and vertical parts, we have for the Routhian $\Ro\colon J_L^{-1}(\mu)\to\R$:
\begin{align}\label{eq:ex6}
\Ro(q,v_1,\dots,v_k)&=\sum_a \left[\frac{1}{2}\mathcal{G}({\rm Hor}(v_a),{\rm Hor}(v_a)+ \frac{1}{2}\mathcal{G}({\rm Ver}(v_a),{\rm Ver}(v_a)-\langle\mu_a,\zeta_a\rangle\right]\nonumber \\ 
&=\sum_a\left[\frac{1}{2}\mathcal{G}({\rm Hor}(v_a),{\rm Hor}(v_a)- \frac{1}{2}\langle\mu_a,I_q^{-1}(\mu_a)\rangle\right].
\end{align}
We remark that the terms $\langle\mu_a,I_q^{-1}(\mu_a)\rangle$ are $G_\mu$-invariant, so $\Ro$ can be easily reduced to define the function $\Ro_\mu$ on the quotient. Note that \eqref{eq:ex6} is the field-theoretic analogue of the expression in Proposition 3.5 in \cite{RouthMarsden}. Of course, such a simple expression can only be obtained because the Lagrangian \eqref{eq:ex5} is a sum of Lagrangians on $TQ$, pulled-back to $T^1_kQ$.

{\bf A concrete example.}  The following example also appeared in the context of Lagrange-Poincar\'e reduction in the $k$-symplectic formalism \cite{LTM_LP}. Consider the Lie group $G$ of matrices of the form:
\[
\begin{pmatrix}
1 & y\cos\theta+x\sin\theta & -y\sin\theta+x\cos\theta & z\\
0 & \cos\theta & -\sin\theta & x\\
0 & \sin\theta & \cos\theta & -y\\
0 & 0 &0 & 1
\end{pmatrix}. 
\]
For details on $G$, we refer the reader to \cite{Thompson} (case ``$A_{4,10}$'' on page 423) or \cite{LTM_LP}. Take $Q=\R\times G$ and consider the Lagrangian $L\colon T^1_kQ\to\R$ given by \eqref{eq:ex7} where the metric in $Q$ is:
\[
\mathcal{G}=dq^2+\gamma\,dq\,d\theta+dx^2+dy^2-y\,dx\,d\theta+x\,dy\,d\theta+dz\,d\theta, 
\]
with $dx\,d\theta=(dx\otimes d\theta+d\theta\otimes dx)/2$ the usual symmetric product notation (and similarly for the other terms) . The natural left action of $G$ on $Q$ gives isometries.

We take the following square matrices in dimension 4 as the basis of the Lie algebra $\lag$ of $G$:
\[
e_x=e_{13}-e_{24},\quad  e_y=e_{12}-e_{34},\quad e_z=e_{14},\quad e_{\theta}=-e_{23}+e_{32}.
\]
The notation $e_{ij}$ denotes the matrix with one entry equal to one at $(i,j)$ (and zero otherwise). The Lie algebra structure of $\lag$ is given by the following nonvanishing brackets: 
\begin{equation}\label{eq:brackets}
[e_x,e_y]=-2e_z,\qquad [e_x,e_\theta]=e_y,\qquad [e_y,e_\theta]=-e_x. 
\end{equation}
With this choice of basis, the infinitesimal generators are:
\[
E_x= \fpd{}{x}-y\fpd{}{z},\quad E_y= \fpd{}{y}+x\fpd{}{z},\quad E_z=\fpd{}{z},\quad E_\theta= \fpd{}{\theta}-x\fpd{}{y}+y\fpd{}{x}.
\]
These can be easily obtained since the infinitesimal generators for the left translation on a Lie group are precisely the right-invariant vector fields. Their brackets satisfy their lowercase analogs except for a sign; for instance, we have $[E_x,E_y]=2E_z$. 

A set of left-invariant vector fields has been obtained in~\cite{LTM_LP}:
\begin{align*}
F_x&= \cos\theta\left(\fpd{}{x}+y\fpd{}{z}\right)-\sin\theta\left(\fpd{}{y}-x\fpd{}{z}\right),\quad F_z=\fpd{}{z},\\
F_y&= \sin\theta\left(\fpd{}{x}+y\fpd{}{z}\right)+\cos\theta\left(\fpd{}{y}-x\fpd{}{z}\right),\quad F_\theta=\fpd{}{\theta}. 
\end{align*}

We consider the following vector field on $Q$:
\[
X=\fpd{}{x}-\gamma\fpd{}{q}. 
\]
The definition of $X$ is such that $\mathcal{G}(X,E_a)=0$ for all $a=x,y,z,\theta$, so that $X$ represents a horizontal vector field projecting onto $\partial/\partial q$ orthogonal to the vertical subspace, or in other words $X$ encodes the mechanical connection for the metric $\mathcal{G}$. This is a flat connection.

It is clear that $\mathcal{G}(X,X)=1$. With regards to the vertical part of the metric, if we compute it in the basis of infinitesimal generators we find:
\[
\mathcal{F}_{ab}\equiv\mathcal{G}(E_a,E_b)= 
\begin{pmatrix}
1 & 0 & 0 & 0\\
0 & 1 & 0 & 0\\
0 & 0 & 0 & 1/2\\
0 & 0 & 1/2 & 0
\end{pmatrix}.
\]
One shows in a similar way that $\mathcal{G}(F_a,F_b)=\mathcal{F}_{ab}$. In other words, the vertical part of the metric is a bi-invariant metric on $G$ (one can show as well that $\mathcal{G}$ is a bi-invariant metric on $Q$ when the natural action of $G$ on the right is considered). But then the locked inertia tensor $I_q$ does not depend on $q$; the matrix $\mathcal{F}$, which gives the metric at the Lie algebra level, is such that
\[
\langle I_q(\xi),\eta\rangle= \sum_{a,b}\xi^a\mathcal{F}_{ab}\eta^b
\]
when we use the basis $\{e_x,e_y,e_z,e_\theta\}$ of $\lag$. Looking at \eqref{eq:ex7} we see that all of the terms 
\[
-\frac{1}{2}\langle\mu_a,I_q^{-1}(\mu_a) \rangle
\]
are constant and do not contribute to the Euler-Lagrange equations of $\Ro$, so we may ignore them (but keep the name for $\Ro$, for simplicity). One concludes that the reduced Routhian $\Ro_\mu\colon Q/G_\mu\times T^1_k\R\to\R$ is given by the following expression:
\[
\Ro_\mu([q,g],v_1,\dots,v_k)= \left[\frac{1}{2}\mathcal{G}({\rm Hor}(v_1),{\rm Hor}(v_1)\right]+\dots+\left[\frac{1}{2}\mathcal{G}({\rm Hor}(v_k),{\rm Hor}(v_k)\right].
\]
 The coordinate formula for $\Ro_\mu$ is then
\[
\Ro_\mu(v_a)= \frac{1}{2}\Big((v^1_1)^2+\dots+(v^1_k)^2\Big).
\]
It is independent of the choice of $\mu\in\lagdk$ or the base point $(q,[g])\in Q/G_\mu$. As a matter of fact, one could still use the residual symmetry in $q$  to further reduce: the reason, as mentioned above, is that the metric $\mathcal{G}$ is bi-invariant for the Lie group $Q=\R\times G$ with product $(q_1,g_2)\cdot (q_2,g_2)=(q_1+q_2,g_1g_2)$. Reducing by the whole of $Q$ directly would take us to the results in the previous example.

Let us choose $\nu=(0,0,1,0)=e^z$, where $e^z$ is the dual of $e_z$). A computations shows that $G_\nu$ is the following Abelian subgroup of $G$:
\[
G_\nu=\left\{
\begin{pmatrix}
1 & 0 & 0 & z\\
0 & \cos\theta & -\sin\theta & 0\\
0 & \sin\theta & \phantom{-}\cos\theta & 0\\
0 & 0 &0 & 1
\end{pmatrix},\; z\in\R,\theta\in [0,2\pi)
\right\} \simeq \R\times S^1.
\]
One can check this directly using the infinitesimal condition for $\xi\in G_\nu$: an element $\xi\in\lag_\nu$ satisfies ${\rm ad}_\xi^*\nu=0$ or $\langle \nu,[\xi,\eta]\rangle=0$ for $\eta\in\lag$ arbitrary. From the adjoint relations in~\eqref{eq:brackets} we have 
\[
2(\xi^y\eta^x-\xi^x\eta^y)=0 
\]
where we have written $\xi=\xi^x e_x+\xi^y e_y+\xi^z e_z+\xi^\theta e_\theta\in\lag$ (and similar for $\eta$). This means that $\xi^x=\xi^y=0$, and therefore $\lag_\nu={\rm span}\{e_z,e_\theta\}$, which corresponds to the subgroup $G_\nu$ above.

We carry out the reduction at $\mu= (\nu,\dots,\nu)\in\lagdk$. Then $G_\mu=G_\nu=G_\nu=\R\times S^1$ and $\Ro_\mu$ is defined on $ Q/(\R\times S^1)\times T^1_k\R$. The reduced $k$-vector field 
\[
\overline{\Gamma}_a=v^1_a\fpd{}{q}
\]
is a solution of~\eqref{eq:k-EL} for $\Ro_\mu$. There is no force term because the base of $Q\to Q/G=\R$ is one-dimensional, and this implies that all of the $B_\mu$ (which are 2-forms on the base) must vanish. It is one of the benefits of our approach that finding solutions of the reduced Routhian equations is very simple in this case. The reconstruction of the integral sections of the original problem, however, involves solving a PDE problem, which is not a trivial matter. 


\paragraph{Acknowledgments.} We would like to thank J.C.\ Marrero for some clarifications concerning reference~\cite{Polyreduction}. S.\ Capriotti, V. D\'iaz and E.\ Garc\'{\i}a-Tora\~{n}o Andr\'{e}s are thankful to FONCYT for funding through project PICT 2019-00196. T.\ Mestdag thanks the Research Foundation -- Flanders (FWO) for its support through Research Grant 1510818N.  We would like to thank the referees for their valuable comments and recommendations to improve our paper.

\bibliographystyle{plain}

\end{document}